\pdfoutput=1
\documentclass[12pt]{article}%
\usepackage[utf8]{inputenc}%
\usepackage[tracking=true, letterspace=50, expansion=false]{microtype}

\usepackage{amsmath}%
\usepackage{amsthm}%
\usepackage{amssymb}%

\usepackage{pdflscape}%
\usepackage[margin=1in]{geometry}%
\usepackage{setspace}%
\onehalfspacing

\usepackage[title]{appendix}

\newtheorem{lemma}{Lemma}

\newtheorem{theorem}{Theorem}
\theoremstyle{definition}
\newtheorem{assumption}{Assumption}

\usepackage{mathtools}

\DeclareMathOperator{\var}{var}
\DeclareMathOperator{\diag}{diag}
\DeclareMathOperator{\trace}{tr}
\DeclarePairedDelimiter\abs{\lvert}{\rvert}
\DeclarePairedDelimiter\norm{\lVert}{\rVert}
\DeclarePairedDelimiter\indicatorfence{\{}{\}}
\newcommand\1{\operatorname{I}\indicatorfence}
\renewcommand*{\arraystretch}{1.2}
\newcommand{\yy}{\mathsf{Y}}

\usepackage{xcolor}%
\definecolor{webbrown}{rgb}{.6,0,0}

\usepackage{tabularx}%
\usepackage{siunitx}
\sisetup{
input-symbols = {()},
group-digits = false,
}
\usepackage{booktabs}
\usepackage[normal, online, flushleft]{threeparttable}%
\usepackage[nolist]{acronym}
\begin{acronym}
  \acro{CI}{confidence interval}
  \acro{SB}{sparsity-based}
  \acro{SBE}{sparsity-based estimator}
  \acro{WW2}{World War II}
  \acro{OLS}{ordinary least squares}
\end{acronym}
\usepackage{etoolbox} %
\AtBeginEnvironment{appendices}{%
  \crefalias{section}{appendix}
  \counterwithin{figure}{section}
  \counterwithin{table}{section}
}

\usepackage[
 bibencoding=utf8,%
 maxbibnames=3, %
 minbibnames=1, %
 backend=biber,%
 sortlocale=en_US,%
 style=apa,%
 useprefix=true,%
 doi=true,%
 url=true,
 eprint=true%
]{biblatex}
\addbibresource{fragility_library.bib}
\DeclareSortingNamekeyTemplate{
  \keypart{\namepart{family}}
  \keypart{\namepart{prefix}}
  \keypart{\namepart{given}}
  \keypart{\namepart{suffix}}
}
\newcommand\LRpntwo[0]{62}
\newcommand\LRnRep[0]{24}
\newcommand\LRnAll[0]{35}
\newcommand\LRnDiff[0]{11}
\newcommand\LRspecRep[0]{287}
\newcommand\LReitherAll[0]{81}

\newcommand\LRdesc[0]{76}

\usepackage{hyperref}%
\hypersetup{%
  breaklinks = true,%
  colorlinks = true,%
  anchorcolor = webbrown,%
  citecolor = webbrown,%
  filecolor = webbrown,%
  linkcolor = webbrown,%
  menucolor = webbrown,%
  urlcolor= webbrown,%
  citebordercolor= 1 0 0,%
  menubordercolor=1 0 0,%
  urlbordercolor=1 0 0,%
  runbordercolor=1 0 0,%
  pdftitle=The Fragility of Sparsity,%
  pdfauthor=Michal Kolesár \& Ulrich Müller \& Sebastian T. Roelsgaard}
\usepackage{cleveref}
\crefname{assumption}{assumption}{assumptions}

\renewcommand{\ss}{s}

\title{The Fragility of Sparsity\thanks{Kolesár acknowledges support by the
    National Science Foundation Grant SES-22049356. All errors are our own. We
    thank Colin Cameron, Helmut Farbmacher, Helmut Küchenhoff, Bentley MacLeod,
    Whitney Newey, Alexandre Poirier, and numerous seminar and conference
    participants for helpful comments, and Max Mongkalakorn for excellent
    research assistance.}} \author{Michal Kolesár \and Ulrich K. Müller \and
  Sebastian T. Roelsgaard}%
\date{\today}

\begin{document}

\maketitle
\pagenumbering{Alph} %
\thispagestyle{empty}
\begin{abstract}
  We show, using three empirical applications, that linear regression estimates
  predicated on the assumption of sparsity are fragile in two ways. First, we
  document that different choices of the regressor matrix which do not impact
  \ac{OLS} estimates, such as the choice of baseline category with categorical
  controls, can move sparsity-based estimates by two standard errors or more.
  Second, we develop two tests of the sparsity assumption by comparing
  sparsity-based estimators with \ac{OLS}. The tests tend to reject the sparsity
  assumption in all three applications. Unless the number of regressors is
  comparable to or exceeds the sample size, \ac{OLS} yields more robust
  inference at little efficiency cost.
\end{abstract}

\clearpage
\pagenumbering{arabic}
\section{Introduction}

The linear regression model is the most common workhorse for estimating causal
effects. The key assumption allowing for a causal interpretation is that,
conditional on the set of controls included in the model, the remaining
variation in the treatment of interest is as good as random. Since the
plausibility of this assumption increases with additional control variables,
researchers commonly estimate regression models with many controls.

Traditional estimation methods like \ac{OLS} limit the dimension $p$ of the
control vector to be smaller than the sample size $n$. This limitation has
fueled the development of alternative estimation methods that allow $p$ to
exceed $n$, provided one is willing to make the assumption of approximate
sparsity. In particular, \textcite{bch14} proposed a post-double-selection
procedure that utilizes lasso regressions to select controls, and several papers
have developed closely related ``debiased lasso'' procedures \parencite[see,
among others][]{JaMo14,vbrd14,ZhZh14}, as well as variants that allow for
treatment effect heterogeneity
\parencite[e.g.,][]{aiw18,bcfh17,ccddhnr18,farrell15}. These developments have
had a large influence on empirical practice: applied researchers now routinely
use such \acp{SBE} even in the regime $p<n$ as a complement to or replacement of
\ac{OLS}. We have surveyed the 50 most highly cited empirical papers that cite
one of these papers, and as \Cref{tab:t_lit1} reports, in only one paper was $p$
larger than $n$. What is more, in \LRpntwo\% of the papers, $p$ was less than
$0.2n$.

We demonstrate that this practice leads to fragile inference, for two
fundamental reasons. First, \acp{SBE}, like the sparsity assumption underlying
them, are not invariant to linear reparametrizations of the control matrix $W$.
Normalization choices---such as the choice of baseline category for a set of
dummy variables, or how and whether to center variables prior to taking powers
or interactions---are immaterial when running \ac{OLS}, but they directly affect
\acp{SBE}. We document that in three empirical applications, these seemingly
innocuous normalization choices drastically alter estimates, inducing variation
of the same magnitude as their standard errors. It is thus impossible to
interpret the resulting inference without substantively engaging with the
question of whether the particular normalization choice used is appropriate. Yet
as reported in column (4) of \Cref{tab:t_lit1}, \LRdesc\% of the surveyed papers
do not tell readers what the form of the control matrix is, let alone argue that
any normalization choices were the right ones to obtain a sparse representation.
While the lack of invariance of \acp{SBE} to linear reparametrizations is
well-known in the theoretical literature, to our knowledge, this paper is the
first to quantify it empirically. Furthermore, our survey indicates that the
lack of invariance is not currently appreciated by researchers in the field.

\begin{table}[tp]
  \renewcommand*{\arraystretch}{1.2}
  \caption{Specification of the control matrix in the
    literature.}\label{tab:t_lit1}
  \centering
  \begin{tabular*}{0.95\linewidth}{@{\extracolsep{\fill}}l S@{}S@{}S@{}S@{} S@{}S@{}S@{}@{}}
    &\multicolumn{1}{c}{$p/n<0.1$} & \multicolumn{1}{c}{$p/n<0.2$}
    & \multicolumn{1}{c}{$p/n<1$} & \multicolumn{1}{c}{Basis specified}\\
    & \multicolumn{1}{c}{(1)} & \multicolumn{1}{c}{(2)} &
     \multicolumn{1}{c}{(3)} & \multicolumn{1}{c}{(4)}\\
    \midrule
    \multicolumn{5}{@{}c}{{A\@: All papers}}\\
    \csname @@input\endcsname ./replication_t1.tex
    \multicolumn{5}{@{}c}{{B\@: Papers with replication code}}\\
    \csname @@input\endcsname ./replication_t2.tex
    \bottomrule
  \end{tabular*}
  \begin{tablenotes}
  \item\footnotesize\emph{Notes}: Panel A reports results for 47 papers that
    used \acp{SBE} out of 50 empirical papers surveyed, while Panel B reports
    results for \LRspecRep\ specifications in a subsample of \LRnRep\ papers
    with available replication code. Cols. 1--3 report the percentage of
    specifications in these papers in which $p/n$ is smaller than $0.1$, $0.2$,
    and $1$, respectively. Col.~4 reports the percentage of specifications in
    which the form of the control matrix $W$ could be ascertained from the
    description in the text. All percentages are weighted, so that each paper
    receives an equal weight. \Cref{sec:deta-liter-revi} details the sample
    construction and variable definitions.
  \end{tablenotes}
\end{table}

Second, one might be skeptical of the plausibility of the sparsity assumption
more generally. In social science applications, it is often difficult to point
to theoretical or institutional arguments for why a small number of controls
should be able to soak up most of the confounding. We thus develop two
statistical tests of the null hypothesis that (approximate) sparsity holds. Both
tests tend to reject in the baseline specification of the three empirical
applications. Furthermore, we are often unable to find any normalization choice
for which the tests do not reject, suggesting that this second concern is not
just a variation of the first.

Consequently, applied researchers should be wary of using \acp{SBE} without a
substantive defense of the sparsity assumption for their specific choice of the
control matrix. Unless $p$ is close to or exceeds $n$, a robust default is to
simply run \ac{OLS}.\footnote{One only needs to be careful in standard error
  construction, as \ac{OLS} no longer consistently estimates the residuals once
  $p\asymp n$. This renders the usual Eicker-Huber-White standard errors
  invalid; instead one needs to use a standard error formula that is robust to
  high-dimensional controls \parencite[see,
  e.g.][]{cjn18,DoSu18,dadamo19,kss20,jochmans22}. If $p$ exceeds $n$, the
  issues we raise are still pertinent; it's just that the appropriate remedy is
  case-specific, as discussed further in \Cref{sec:conclusion}.} While this
recommendation may seem obvious to some, it is practically relevant given that,
as our survey demonstrates, in most empirical specifications employing
\acp{SBE}, $p$ is much smaller than $n$.

Our analysis relates to several strands of literature. Many authors mention the
\acp{SBE}' lack of invariance to linear reparametrization as an undesirable
feature. In the specific context of including categorical controls, alternatives
to standard lasso regression have been developed that are invariant to the
choice of baseline category \parencite[e.g.][]{BoRe09,GeTu10,StShTi21}.
Similarly, one could perhaps incorporate the choice of a centering constant when
taking powers as an additional tuning parameter when fitting \acp{SBE}. However,
as we discuss further in \Cref{sec:discussion} below, such solutions come with
their own challenges. Using a Bayesian framework, \textcite{glp21} present
empirical evidence suggesting sparsity may not be a compelling assumption in
popular data sets used in economics. \Textcite{WuZh23} give simulation evidence
and theoretical arguments showing that even if sparsity arguably holds,
\acp{SBE} may still display substantial bias in finite samples.
\Textcite{AnFr22} explore robustness of \acp{SBE} to tuning parameter choices.

The remainder of the paper is organized as follows. In \Cref{sec:empir-illustr},
we introduce the three empirical examples used to quantify the fragility of
\acp{SBE}: the empirical illustration in \textcite{bch14} on the effect of
abortion on crime, a \textcite{ferrara22} study of occupational upgrading by
Black southerners, and a study of the effect of moral values on voting behavior
by \textcite{enke20}. Our analysis of fragility to linear reparametrizations
focuses on two seemingly innocuous normalizations: first, we consider different
ways of dropping collinear columns, such as which category to drop when
including categorical variables. Second, when $W$ includes powers and
interactions of a set of baseline controls, we consider different ways of
centering the baseline controls, such as demeaning, subtracting the median, or
not centering. We find that these normalizations can move the estimates by two
standard errors or more. We also consider alternative normalizations that
further expand the set of possible specifications of $W$: expressing a set of
categorical controls as indicators for subsets of the categories, rather than
just as indicators for each individual category; and continuously varying where
a variable is centered prior to taking powers, rather than just considering
discrete values such as the mean or the median. These normalizations lead to an
even greater variation in the estimates.

In \Cref{sec:sparse-repr-are}, we interpret these empirical results through the
lens of a thought experiment: suppose a researcher picks one of the possible
normalizations at random. How likely would such a random choice lead to an
(approximately) sparse representation? We compute that probability in three
stylized examples of normalizations and find that it rapidly decreases as a
function of $n$. Leaving default choices for how $W$ is constructed to
statistical software or happenstance is very unlikely to lead to sparsity.

\Cref{sec:effic-gains-under} conducts an analysis of the relative efficiency of
\acp{SBE} and \ac{OLS}. We focus on the regime where $p$ is smaller, but
proportional to $n$. This is because if $p/n\to 0$, \ac{OLS} achieves the
semiparametric efficiency bound under homoskedastic errors, which limits the
arguments in favor of alternative estimators. On the other hand, if $p$ exceeds
$n$, the \ac{OLS} benchmark is no longer available. We show that under
homoskedasticity, the proportional variance reduction of \acp{SBE} relative to
\ac{OLS} is capped at $1-p/n$ (effectively, we can at best avoid a degrees of
freedom adjustment). This benchmark forms the basis for our recommendation to
simply run \ac{OLS} unless $p$ is close to $n$: it yields robust estimates at
little efficiency loss.

In \Cref{sec:testing-sparsity}, we develop two tests of the sparsity assumption,
again in the regime $p\asymp n$.\footnote{\textcite{CaVe21} develop a test in
  the high-dimensional regime $p/n\to\infty$. However, their test requires
  explicit specification of the sparsity level under the null, as does the test
  developed by \textcite{he20} in the regime $p<n$. In the context of factor
  models, \textcite{BeSt24} consider testing the null of a zero vector against a
  sparse alternative.} We circumvent the difficulty that the sparsity assumption
only restricts rates, rather than the actual number of non-zero coefficients for
a given $n$, by testing it indirectly. The first test compares \ac{OLS} and
lasso residuals: under (approximate) sparsity, the residual sum of squares of
the lasso should exceed the \ac{OLS} residual sum of squares only by a small
amount. The second test is a version of the \textcite{hausman78} specification
test: any difference between \ac{SBE} and \ac{OLS} estimates must be explained
by differences in the relative efficiency of the estimators afforded by the
sparsity assumption that \ac{SBE} exploits, but \ac{OLS} does not. Thus, the
common practice of reporting \ac{SBE} estimates alongside \ac{OLS} should not be
interpreted as a robustness check for the \ac{OLS} specification; rather,
divergence between the estimates indicates a failure of the sparsity assumption.

\section{Empirical illustrations}\label{sec:empir-illustr}

In this section, we revisit three applications that leveraged \acp{SBE}: the
application in \textcite[BCH]{bch14} that probes the investigation by
\textcite{DoLe01} of the impact of abortion on crime, the \textcite{ferrara22}
study of employment opportunities for Black southerners in the aftermath of
\ac{WW2}, and the examination of the relationship between moral values and
voting behavior by \textcite{enke20}. In each application, the parameter of
interest is the coefficient $\beta$ in the linear model
\begin{equation}\label{eq:outcome}
  Y_{i}=D_{i}\beta+W_{i}'\gamma+U_{i}, \qquad E[U_{i}\mid D_{i}, W_{i}]=0.
\end{equation}
We estimate $\beta$ by applying the original \ac{SBE} after changing the
specification of the control matrix $W$ (where rows correspond to the control
vectors $W_{i}$) in seemingly innocuous ways that do not impact \ac{OLS}
estimates. We show that the impact of these normalizations on \acp{SBE} is, on
the other hand, substantial. To isolate the effect of the choice of control
matrix from other implementation details, our analysis otherwise sticks to
software defaults.\footnote{For estimates using post-double selection, we use
  the \texttt{hdm} package and its defaults, including the choice of the tuning
  parameter and normalizing the columns of $W$ to have unit variance.}

In \Cref{sec:frag-altern-norm}, we consider the effects of two normalizations.
First, we consider different ways of resolving multicollinearity in the control
matrix by changing which columns are dropped to make the matrix $W$ full rank.
For example, if multicollinearity arises due to the inclusion of categorical
variables, we change which category is dropped. Original implementations of
\acp{SBE} in each of the three applications remove multicollinearity in $W$ as a
data-processing step, similar to standard implementations of least squares
regression. Our exercise is therefore equivalent to changing the order of the
columns of the control matrix, but otherwise conducting the analysis exactly as
in the original. For lasso-based estimators, such a step is typically needed to
ensure that so-called ``compatibility conditions'' or ``restricted eigenvalue''
assumptions hold; these assumptions ensure fast convergence rates for the lasso
\parencite[see, e.g.][]{brt09}.\footnote{A necessary condition for these
  assumptions is that submatrices of $W$ with $2s$ columns, where $s$ is the
  sparsity index, are full rank. It implies that if, say, a categorical variable
  has fewer than $2s$ categories, we cannot include all categories as well as
  the intercept.} Second, when $W$ includes powers and interactions of a
baseline set of controls, we consider different normalizations of the baseline
controls (such as demeaning vs subtracting the median) before taking powers and
interactions. Appropriately centering the baseline variables puts them on the
same scale, and may make the sparsity assumption more plausible and more easily
interpretable. Relative to using raw polynomials, it typically also helps with
numerical stability.

\Cref{sec:other-choic-spec} considers alternative normalizations of $W$ in the
two scenarios above: expressing a set of categorical controls as indicators for
subsets of the categories, rather than only as indicators for each individual
category; and continuously varying where a variable is centered prior to taking
powers and interactions, rather than just considering discrete values such as
the mean or the median.

The specification choices we study here arise commonly in applied papers
employing \acp{SBE}: in our survey, one or both normalization issues arose in
\LReitherAll\% of the 47 papers examined in \Cref{tab:t_lit1}.

\subsection{Normalizations of the control matrix}\label{sec:frag-altern-norm}

We begin by considering different ways of resolving multicollinearity issues
that arise in each application.

The data in the first application consists of an annual panel of US states over
the period 1985--97 originally analyzed by \textcite{DoLe01}, who ran a two-way
fixed effects regression of crime rates on effective abortion rates, state and
year fixed effects, and 8 baseline covariates.\footnote{These are: lags of the
  number of prisoners and police per capita, the unemployment rate, per-capita
  income and beer consumption, poverty rate, AFDC generosity lagged 15 years,
  and a dummy for a shall-issue concealed carry law.} %
BCH argue that this set of 8 controls may be insufficient to purge time-varying
confounders. They consider a first-differences version of this specification,
with 12 time effects and first-differences of the 8 baseline controls, which
they augment with 136 further variables obtained by squaring the baseline
controls and interacting them with each other and with linear and quadratic
trends.\footnote{In particular, BCH add squares of the first differences as well
  as lags and squared lags of the baseline controls, and they interact the first
  differences and their squares with a linear and quadratic trend. They also add
  interactions of the first-differences, and interact them with a linear and a
  quadratic trend. Since the time series of the shall-carry dummy in each state
  never changes from $1$ to $0$, its first difference is also binary. These
  transformations therefore only yield 136 unique columns.
} In addition, they include 49 variables that are time-invariant within each
state, corresponding to initial values and averages of various transformations
of the baseline controls, as well as interactions of these 49 variables with a
time trend and its square. Since there are only 48 states in the data, these
variables span the same column space as state fixed effects. The resulting
control matrix has 303 columns, but rank of only 294: because time effects are
included, 2 of the time-invariant variables are redundant, as are 2 of the
interactions with a time trend and 2 with its square. In addition, one of the
baseline controls, a shall-issue dummy, is binary and non-zero only 21 times,
but it is interacted with 24 variables, so that 3 of the interactions are
redundant. BCH first drop the collinear columns, and then estimate the model by
the post-double lasso estimator that they develop.

\begin{table}
  \renewcommand*{\arraystretch}{1.2}
  \begin{threeparttable}
    \caption{Fragility of sparsity-based methods to normalizations of the control matrix.}\label{tab:t1}
  \begin{tabular}{@{}l@{} S@{}S@{}S@{}S@{}S@{}S@{}}
    &
    & \multicolumn{5}{c}{Sparsity-based estimation}\\
    \cmidrule(rl){3-7}
    & \multicolumn{1}{c}{OLS}
    & \multicolumn{1}{c}{Replication}
    & \multicolumn{2}{c}{Range (collinear)}
    & \multicolumn{2}{c}{Range (powers)}\\
    \cmidrule(rl){4-5}\cmidrule(rl){6-7}
    Outcome & \multicolumn{1}{c}{(1)} & \multicolumn{1}{c}{(2)} & \multicolumn{1}{c}{(3)} & \multicolumn{1}{c}{(4)} & \multicolumn{1}{c}{(5)} & \multicolumn{1}{c}{(6)}\\
    \midrule
      \csname @@input\endcsname ./table1.tex
    \bottomrule
  \end{tabular}
  \begin{tablenotes}
  \item{}\footnotesize\emph{Notes}: Col.~2 replicates \acp{SBE} for the BCH,
    \textcite{ferrara22}, and \textcite{enke20} studies discussed in the text.
    Col.~1 reports the unpenalized \ac{OLS} estimate for the same specification.
    Cols.~3 and 4 report the range of estimates obtained under alternative ways
    of dropping collinear columns of the control matrix. Cols.~5 and 6 report
    the range of estimates obtained under alternative normalizations of the
    controls prior to taking powers and interactions: no normalization,
    demeaning, subtracting the median, and setting the range to $[-1, 1]$ and
    $[0,1]$. The outcome variables in Panel C are multiplied by 100. ``Trump $-$
    avg GOP'' refers to the difference between the indicator for voting for
    Trump in the 2016 general election minus the vote share given to Republican
    candidates in the previous two presidential elections. Standard errors for
    each estimate are given in parentheses (cols.~3--6 report the standard
    errors for the extreme estimates). These are robust for \textcite{enke20},
    clustered by state for BCH and by county for \textcite{ferrara22}.
  \end{tablenotes}
\end{threeparttable}
\end{table}

Column 2 of Panel A in \Cref{tab:t1} replicates the BCH estimates for each of
the three versions of the crime rate outcome variable considered by
BCH.\footnote{The replication differs slightly from the original (Table 2 in
  BCH) because the algorithm for dropping collinear columns in BCH had a coding
  error, yielding a matrix with 296 columns, and a rank of 293.} Comparing the
standard error to the \ac{OLS} standard error in column 1, we see that the
post-double lasso estimator appears to be substantially more precise than the
\ac{OLS} estimate. It is also economically significant: the effective abortion
rate is defined as the average legalized abortion rate among arrestee cohorts
(the number of abortions per live birth in a cohort weighted by the cohort's
share of arrestees, which is outcome-specific). For the violent crime outcome,
its standard deviation across states in 1997 is about 1, so that the estimate
implies a 16\% reduction in crime rate per standard deviation increase in
effective abortion rate. However, this result is very sensitive to how we
resolve the collinearity. To resolve it, we may drop any 3 of the 24
interactions with the shall-issue dummy, and any 2 of the 49 time-invariant
controls; the same holds when we interact these with a time trend and its
square. This gives $\binom{24}{3}\binom{49}{2}^{3}\approx 3\times 10^{12}$
possible ways of obtaining a full rank control matrix. Columns 3 and 4 report a
range of estimates we obtain by randomly choosing among these possibilities,
showing that, depending on the outcome, the estimates move by 1.2 to 1.9
standard errors.

Similar collinearity issues arise in the second application that replicates the
analysis in \textcite{ferrara22}, who studies to what extent post \ac{WW2}
occupational upgrading of Black workers from low-skilled to semi-skilled can be
attributed to war casualties among semi-skilled white soldiers. Using a decennial
1920--1960 unbalanced panel of county-level observations in 16 predominantly
Southern US states, the study runs a two-way fixed effects specification with
county and time fixed effects, interactions between state and time fixed
effects, the share of semi-skilled Black workers as the outcome, and treatment
given by the white casualty rate interacted with a post-war indicator. To purge
time-varying confounders, the study also includes 24 baseline controls
(including the county draft rate, \ac{WW2} spending, demographic and
socioeconomic controls), their squares and interactions, as well as interactions
between the 24 baseline controls and state and time effects. In addition, two
baseline controls (number of slaves in 1860 and the unemployment rate in 1937)
are also included in triple interactions with other controls, time and state
effects. After dropping a reference state and a reference year in each
interaction, and dropping zero and repeated columns, the resulting matrix has
2270 columns, but rank of only 2252, because the state of Delaware only contains
15 observations, but there are 33 Delaware-specific controls.\footnote{In
  particular, since the two baseline controls entering triple interactions are
  time-invariant and the specification includes county effects, we need to
  exclude the main effects of these controls, their squares and
  state-interactions, which yields a control matrix with 2588 columns. 154
  columns are collinear since we need to drop a reference state and year in each
  interaction, and 18 Delaware-specific controls are also collinear. In
  addition, 164 columns are repeated or zero, yielding a rank equal to 2252.}
\textcite{ferrara22} first forms a full-rank control matrix, and then uses
double-$t$ selection to estimate the
model.\footnote{\label{fn:double_t}Specifically, \textcite{ferrara22} first runs
  a selection step that regresses the outcome and the treatment on the control
  matrix using \ac{OLS}, and selects controls with a $t$-statistic that is
  larger than 2.575 in absolute value in either regression. In a second step,
  the outcome is regressed on the treatment and selected controls using
  \ac{OLS}.}

We replicate this specification in column 2 of Panel B in
\Cref{tab:t1}.\footnote{The estimate is slightly higher than in the original
  because, to keep missing data and normalization issues separate, we restrict
  the sample to the 4,903 observations with no missing values in both estimation
  steps described in \cref{fn:double_t}. In addition, we include year fixed
  effects in both the first and second step, rather than exclude them from the
  first step.} However, similar to panel A, there are many ways of resolving
multicollinearity in the control matrix, because there are many ways of
specifying a reference state and a reference year in each interaction, and, if
we keep 3 Delaware county effects, $\binom{30}{18}$ ways of dropping
Delaware-specific controls. As shown in columns 3 and 4, depending on how the
multicollinearity is resolved, the estimates vary by over three standard errors
depending on how we order the state and time effects.

Our final empirical example uses data from \textcite{enke20}, who examines how
voters' moral values affect their voting patterns. Enke uses survey data to
construct an index of the relative importance of universalist moral values
(individual rights, justice, fairness) vs communal values (loyalty, respect). He
then runs a regression of three different measures of voting behavior on this
index, controlling for 10 continuous or binary controls, as well as 5 sets of
categorical variables.\footnote{The continuous or binary controls comprise:
  political liberalism, log of household income, education, log of population
  density of respondent's ZIP code, religiosity, gender, employment indicator,
  altruism, measure of trust, and the absolute value of the moral values index.
  The categorical variables are: county, year of birth, religious denomination
  and occupation fixed effects.} Columns 1 and 2 in panel C of \Cref{tab:t1}
replicate the \ac{OLS} and post-double lasso estimates reported in the paper.
Here the inclusion of multiple sets of categorical variables necessitates a
specification of a reference category for each set. Columns 3 and 4 report the
range of estimates obtained by changing these reference categories, which varies
between a third and a half of the standard error depending on the outcome. This
is a narrower range than in panels A and B, albeit still large enough to affect
the economic interpretation of the estimates.

Columns 5 and 6 of \Cref{tab:t1} show the range of estimates we obtain from
considering different normalizations when taking powers and interactions between
variables in the BCH and \textcite{ferrara22} applications. In addition to no
normalization (the choice in the original analyses), we consider demeaning,
centering at the median, and setting the range to $[-1,1]$ and $[0,1]$. The
table shows that such normalizations can again substantially move the estimates,
by up to 1.3 standard errors.

\subsection{Alternative normalizations of the control
  matrix}\label{sec:other-choic-spec}

We now consider two alternatives to the construction of $W$ in the presence of
categorical variables and power and interactions.

In \Cref{sec:frag-altern-norm}, a categorical variable with $k$ categories was
always specified as a set of $k-1$ dummies for individual categories, with a
reference dummy dropped to prevent collinearity. We now consider expressing such
variables as $k-1$ indicators for different subsets of the $k$ categories. As
discussed in \Cref{sec:categorical-data}, if the subsets are carefully chosen,
such a specification may be more likely to lead to sparsity in certain contexts.
Here we pick the subsets at random, subject to the constraint that they span the
same column space. For the \textcite{enke20} application, columns 3 and 4 of
\Cref{tab:t1} correspond to a special case of this exercise, when the subsets
are constrained to be of cardinality one. Correspondingly, the range of
variation in the estimates over all possible subsets, reported in columns 1 and
2 of \Cref{tab:t2}, is considerably greater, exceeding 0.9 standard errors in
all regressions. For the \textcite{ferrara22} application, the range exceeds two
standard errors.\footnote{The range does not exceed that in columns 3 and 4 of
  \Cref{tab:t1} because we do not vary which Delaware-specific controls are
  dropped.}

\begin{table}
  \renewcommand*{\arraystretch}{1.2}
  \centering
  \begin{threeparttable}
    \caption{Fragility of sparsity-based methods to further normalizations
      of the control matrix.}\label{tab:t2}
  \begin{tabular*}{0.95\linewidth}{@{\extracolsep{\fill}}l@{} S@{}S@{}S@{}S@{}}
    & \multicolumn{4}{c}{Sparsity-based estimation}\\
    \cmidrule(rl){2-5}
    & \multicolumn{2}{c}{Range (category sums)}
    & \multicolumn{2}{c}{Range (offset)}\\
    \cmidrule(rl){2-3}\cmidrule(rl){4-5}
    Outcome & \multicolumn{1}{c}{(1)} & \multicolumn{1}{c}{(2)}
    & \multicolumn{1}{c}{(3)} & \multicolumn{1}{c}{(4)} \\
    \midrule
     \csname @@input\endcsname ./table2.tex
    \bottomrule
  \end{tabular*}
  \begin{tablenotes}
  \item\footnotesize\emph{Notes}: Cols.~1 and 2 report the range of estimates
    obtained under alternative ways of expressing a set of categorical
    variables. Cols.~3 and 4 report the range of estimates obtained when powers
    are constructed using Hermite polynomials, with an offset $\lambda$ ranging
    between $-1$ and $1$. See notes to \Cref{tab:t1} for a description of
    outcome variables. Standard errors for each extreme estimate are given in
    parentheses. These are robust for \textcite{enke20}, clustered by state for
    BCH and by county for \textcite{ferrara22}.
  \end{tablenotes}
\end{threeparttable}
\end{table}

Similarly, the centering of baseline variables at the mean vs the median we
considered in the previous subsection can be thought of as a special case of a
more general problem: we could in principle choose the centering constant
$\lambda$ to be any number. At a theoretical level, we are not aware of
arguments for why $\lambda=0$ (no normalization), or setting it to the sample
mean should more plausibly lead to sparsity than other choices. To further
explore the sensitivity to the choice of $\lambda$, we standardize the baseline
variables, and then vary the specification of $W$ by picking $\lambda$ from the
range $[-1, 1]$ for each baseline variable. To align the exercise with the
analytical calculations in \Cref{sec:set-herm-polyn}, we use Hermite rather than
raw polynomials. Columns 3 and 4 of \Cref{tab:t2} show that the resulting
variation in the estimates is again substantial, exceeding 1.7 standard errors
in all specifications. The sign of the estimated effect is also sensitive to the
choice of $\lambda$ in two of the BCH specifications.

\section{Sparse representations are rare}\label{sec:sparse-repr-are}

In this section we consider the thought experiment of picking a particular linear
representation of the controls $W$ at random. In particular, we consider (i) a full rotation of the column
space; (ii) different ways of controlling for categorical variables; and (iii)
an offset for the scalar variable in a polynomial series regression. We find
that sparse representations are exceedingly rare in each example. This suggests
that in a given application, undiscerning default choices for expressing the
column space will typically not satisfy the sparsity assumptions. This helps to
explain the findings of the previous section: most normalization choices cannot
yield sparse representations. As we discuss in detail in \Cref{sec:discussion}
below, it also implies that the representation of the control matrix is a
substantive choice and not just an implementation detail, like, say, the choice
of the penalty parameter: it impacts not just the finite-sample estimate, but
also the large-sample validity of \ac{SBE}-based inference.

\subsection{Approximate sparsity}\label{sec:approximate-sparsity}

The calculations in this section focus on sparsity in the outcome
regression~\eqref{eq:outcome}. Let $A$ be a full-rank $p\times p$ matrix. Then
an alternative expression for the column space of $W_{i}$ is
$\tilde{W}_{i}=AW_{i}$ with coefficient $\tilde{\gamma}=A'^{-1}\gamma $, so that
$W_{i}^{\prime}\gamma =\tilde{W}%
_{i}^{\prime}\tilde{\gamma}$. Note that any such transformation leaves the
\ac{OLS} coefficient on $D_{i}$ numerically unaltered.

In our examples, $\gamma $ is exactly sparse, with just one non-zero element,
$\norm{\gamma}_{0}=1$. The researcher uses the transformed regressors
$\tilde{W}_{i}$ with coefficients $\tilde{\gamma}$, which typically is not
exactly sparse. However, for sparsity-based inference to be asymptotically
valid, it suffices for $\tilde{\gamma}$ to be \emph{approximately sparse}. In
particular, under our maintained assumption that $p\asymp n$, it suffices that
there exists an $s$-sparse approximation to the control function $W_{i}'\gamma$
in that
\begin{equation}\label{eq:approx_sparsity2}
  \min_{\norm{v}_{0}\leq s} E[(W_{i}^{\prime}\gamma -\tilde{W}_{i}^{\prime}v)^{2}]=O(s/p),
\end{equation}
such that the sparsity index $s$ satisfies
\begin{equation}\label{eq:approx_sparsity1}
s=o(\sqrt{p}/\log p).
\end{equation}
(cf.\ page 614 in \textcite{bch14}; theory underlying other \acp{SBE}, e.g.
\textcite{JaMo14,vbrd14,ZhZh14} imposes analogous or stronger conditions). Here,
and throughout the paper, all limits are taken as $p\rightarrow \infty $ (or,
equivalently, as $n\rightarrow \infty $). The sparsity
condition~\eqref{eq:approx_sparsity1} is stronger than the condition
$s=o(p/\log(p))$, which suffices for consistent estimation of the control
function $W_{i}'\gamma$ \parencite[e.g.,][]{brt09}. However, valid
sparsity-based inference on $\beta$ requires not only that the control function
be estimated consistently, but also that the estimation error in $\gamma$ be
sufficiently small so that the \ac{SBE} of $\beta$ is not asymptotically
biased---the reason we focus on the sparsity condition in
\cref{eq:approx_sparsity1} is that it ensures this is indeed the
case.\footnote{Debiased ``machine learning'' approaches that employ
  cross-fitting \parencite[e.g.,][]{ccddhnr18} allow \cref{eq:approx_sparsity1}
  to be violated provided one imposes a correspondingly stronger condition on
  the sparsity index $s_{p}$ in the propensity score regression,
  $s_{p}=o(\frac{p}{s(\log p)^{2}})$. However, the results in this section imply
  that such sparsity assumptions on the propensity score regression are likewise
  sensitive to normalization choices.}

\subsection{Rotation}

By way of establishing a benchmark, we start with an extreme case, and consider
the set of transformations
\begin{equation*}
  \tilde{W}_{i}=RW_{i}, \quad R\in \mathbf{R}=\{A\in\mathbb{R}^{p\times p}\colon
  A^{\prime}A=I_{p}, \;\det (A)=1\},
\end{equation*}
that is, $\tilde{W}_{i}$ is obtained by rotating $W_{i}$ by $R$, and
$\tilde{\gamma}=R\gamma$. For the purposes of studying sparsity, this is a very
large set of transformations, as for any $\gamma $ with $\norm{\gamma}_{2} >0$,
there exists $R$ such that $\tilde{\gamma}$ has only one non-zero element. At
the same time, there also exists an $R$ such that all elements of
$\tilde{\gamma}$ are equal (and equal to $\norm{\gamma}_{2}/\sqrt{p}$).

Now suppose we take a random draw $\mathcal{R}$ from $\mathbf{R}$, with
distribution equal to the Haar measure (so that for any $R\in \mathbf{R}$,
$R\mathcal{R}\sim \mathcal{R}$). Recall that the multivariate standard normal
distribution is spherically symmetric, that is, for
$\mathcal{Z}\sim \mathcal{N}(0,I_{p})$ and any $R\in \mathbf{R}$,
$R\mathcal{Z}\sim \mathcal{Z}$. This implies that the induced distribution of
$\tilde{\gamma}=\mathcal{R}\gamma $ satisfies
\begin{equation}\label{eq:rot_dist}
  \mathcal{R}\gamma \sim \frac{\norm{\gamma}_{2}}{\norm{\mathcal{Z}}_{2}}\mathcal{Z},
\end{equation}
so the $p$ elements of $\mathcal{R}\gamma$ are, up to a common rescaling,
i.i.d.~standard normal. The normal distribution is thin-tailed, making it very
unlikely that a few largest absolute values of $\mathcal{Z}$ dominate. This
suggests that with very high probability, $\mathcal{R}\gamma$ is not
approximately sparse, as formalized in the following result.

\begin{theorem}\label{thm:rotation}
  Suppose that the eigenvalues of $E[W_{i}W_{i}^{\prime}]$ are bounded away from
  zero and infinity, and that $\norm{\gamma}_{2}\asymp 1$. Then the logarithm of
  the probability that the model with regressor $\tilde{W}_{i}=\mathcal{R}W_{i}$
  satisfies \cref{eq:approx_sparsity1,eq:approx_sparsity2} is of the order
  $-\tfrac{p}{4}\log p$.
\end{theorem}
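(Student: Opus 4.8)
The plan is to reduce the approximate-sparsity requirement to a statement about the ordered squared coordinates of a uniform random vector on the sphere, and then establish matching upper and lower bounds on the resulting probability via large-deviation estimates for chi-squared variables. First I would fix a sequence $s=s_p$ that saturates the rate constraint, i.e.\ $s_p\log p/\sqrt p\to 0$ with $\log s_p/\log p\to 1/2$ (say $s_p=\lfloor\sqrt p/(\log p)^2\rfloor$). Since the best $s$-sparse $\ell_2$ approximation error of $\tilde\gamma$ equals the sum of all but the $s$ largest squared coordinates of $\tilde\gamma$, which is nonincreasing in $s$ while the target $O(s/p)$ is increasing in $s$, the event that \cref{eq:approx_sparsity1,eq:approx_sparsity2} hold for some admissible $s$ collapses, by monotonicity, to their holding for this maximal $s_p$. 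Next, using $W_i'\gamma=\tilde W_i'\tilde\gamma$ with $\tilde\gamma=\mathcal R\gamma$, I would write $E[(W_i'\gamma-\tilde W_i'v)^2]=(\tilde\gamma-v)'\mathcal R\,E[W_iW_i']\,\mathcal R'(\tilde\gamma-v)$; rotation preserves eigenvalues, so under the eigenvalue hypothesis this quadratic form is comparable to $\norm{\tilde\gamma-v}_2^2$, and \cref{eq:approx_sparsity2} becomes, up to a constant, that the sum of all but the $s_p$ largest squared coordinates of $\tilde\gamma$ is $O(s_p/p)$. Finally, \cref{eq:rot_dist} and $\norm{\gamma}_2\asymp 1$ give $\tilde\gamma\stackrel{d}{=}\norm{\gamma}_2\,\mathcal U$ with $\mathcal U=\mathcal Z/\norm{\mathcal Z}_2$ uniform on the sphere, so, writing $\mathcal U_{(1)}^2\ge\mathcal U_{(2)}^2\ge\dots$ for the ordered squared coordinates, the event of interest is $\sum_{j>s_p}\mathcal U_{(j)}^2\le\delta_p$ with $\delta_p\asymp s_p/p$: the $s_p$ largest squared coordinates of $\mathcal U$ must carry all but a vanishing fraction of the unit mass.

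For the upper bound I would substitute $\mathcal U=\mathcal Z/\norm{\mathcal Z}_2$ and rearrange: the event forces some index set $T$ with $\abs{T}=p-s_p$ to satisfy $\sum_{j\in T}\mathcal Z_j^2\le t_p\sum_{j\notin T}\mathcal Z_j^2$ with $t_p=\delta_p/(1-\delta_p)\asymp s_p/p$, and for a fixed $T$ the two sums are independent, distributed as $\chi^2_{p-s_p}$ and $\chi^2_{s_p}$. A union bound over the $\binom{p}{s_p}$ sets, together with the Chernoff bound $\Pr(\chi^2_m\le y)\le(ey/m)^{m/2}$ (valid for all $y>0$) and the exact moment $E[(\chi^2_{s_p})^{(p-s_p)/2}]=2^{(p-s_p)/2}\Gamma(p/2)/\Gamma(s_p/2)$, gives after Stirling a leading term $\tfrac p2\log(s_p/p)$; the binomial coefficient, the $1/\Gamma(s_p/2)$ factor, and all the $O(1)$-per-coordinate contributions add up to only $O(p)+O(s_p\log p)=o(p\log p)$. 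Since $\log(s_p/p)=-\tfrac12\log p\,(1+o(1))$, this yields $\log\Pr(\text{event})\le-\tfrac p4\log p\,(1+o(1))$.

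For the matching lower bound I would restrict to the single set $T=\{s_p+1,\dots,p\}$ and further to the two independent events $\{\chi^2_{s_p}\in[p,2p]\}$ and $\{\chi^2_{p-s_p}\le\tfrac12 t_p p\}$ (note $\tfrac12 t_p p\asymp s_p$), whose intersection forces $\sum_{j>s_p}\mathcal Z_{(j)}^2\le\sum_{j\in T}\mathcal Z_j^2\le t_p\sum_{j\notin T}\mathcal Z_j^2$ and hence the target event. The first event has log-probability at least $-p/2+O(\log p)$ — it suffices that one coordinate be of size about $\sqrt p$ — and the second has log-probability at least $\tfrac p2\log(s_p/p)+O(p)$, obtained by a small-ball lower bound from integrating the $\chi^2_{p-s_p}$ density near the origin. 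Multiplying gives $\log\Pr(\text{event})\ge\tfrac p2\log(s_p/p)+O(p)=-\tfrac p4\log p\,(1+o(1))$, which together with the upper bound proves the claim.

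The main obstacle is pinning down the constant $\tfrac14$ on both sides. On the lower-bound side, conditioning on the typical value $\chi^2_{s_p}\approx s_p$ loses a factor of two; one has to notice that the defining region is most likely when $\chi^2_{s_p}$ is \emph{itself} of order $p$, so that the threshold $t_p\chi^2_{s_p}$ lands at the scale $\asymp s_p$ at which $\chi^2_{p-s_p}$ still has appreciable small-ball probability — this is exactly where the joint density of $(\chi^2_{p-s_p},\chi^2_{s_p})$ restricted to the region is concentrated (its $x$-profile behaves like the $\chi^2_p$ density, with mode near $p$). On the upper-bound side one must check that the union-bound multiplicity $\binom{p}{s_p}$ is negligible against $e^{-\frac p4\log p}$, which is precisely what the rate condition $s_p=o(\sqrt p/\log p)$ (so that $s_p\log p=o(\sqrt p)$) guarantees.
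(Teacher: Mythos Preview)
Your approach is essentially the paper's: reduce via the eigenvalue bounds and the identity $\mathcal{R}\gamma\stackrel{d}{=}\norm{\gamma}_2\mathcal{Z}/\norm{\mathcal{Z}}_2$ to a statement about chi-squared order statistics, use a union bound over $\binom{p}{s}$ subsets for the upper bound and a single subset for the lower bound, and extract the leading term $\tfrac{p}{2}\log(s/p)$ by Stirling. The paper packages the single-subset probability as the F-distribution CDF (the regularized incomplete beta function $I_x(a,b)$) and bounds it directly by $\int_0^x t^{a-1}(1-t)^{b-1}\,dt\le x^a/a$ and $\ge(1-x)^{b-1}x^a/a$, whereas you use a Chernoff bound plus the exact moment $E[(\chi^2_s)^{(p-s)/2}]$ for the upper bound and a two-event split $\{\chi^2_s\in[p,2p]\}\cap\{\chi^2_{p-s}\le\tfrac12 t_p p\}$ for the lower bound; these are equivalent routes to the same constant.

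One small correction: your monotonicity reduction to a \emph{single} $s_p=\lfloor\sqrt{p}/(\log p)^2\rfloor$ does not quite work on the upper-bound side, because this $s_p$ is not an upper envelope for all sequences satisfying $s\log p/\sqrt{p}\to 0$ (for instance $\lfloor\sqrt{p}/(\log p\cdot\log\log p)\rfloor$ is admissible and eventually larger, so sparsity at that $s$ does not imply the event at your $s_p$). The paper deals with this by using two different $s$: the boundary sequence $s=\lfloor\sqrt{p}/\log p\rfloor$, which dominates every $o(\sqrt{p}/\log p)$ sequence, for the necessary condition in the upper bound, and your $s_p$ for the sufficient condition in the lower bound. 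With that one-line fix your calculation is unchanged, since both choices satisfy $\log s/\log p\to 1/2$.
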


Given that a sparsity assumption allows for more informative inference, and that
any coefficient vector can be rotated into an (extremely) sparse one, it is not
surprising that a random rotation only rarely yields a sparse representation.
The contribution of \Cref{thm:rotation} is to quantify just how exceedingly rare
this event is: for $p\geq 50$, say, $p^{-p/4}<10^{-21}$. Of course, researchers
rarely consider all rotations when specifying the regressor matrix. The
relevance of this quantification lies in showing that even when the measure of
plausible rotations is orders of magnitude smaller than the full set, the
probability of arriving at a sparse representation is still minuscule.

\subsection{Categorical data}\label{sec:categorical-data}

An empirically common form of controls is dummies for categorical variables. If
there are $p$ underlying categories, and a constant is included, then one must
designate a reference category by dropping one dummy to avoid perfect
collinearity. Suppose the model is exactly sparse with sparsity index $s$ when
one chooses the right reference category. This means that $p-s$ categories have
the same coefficient as the reference category. Thus, if one were to pick a
reference category at random, one has a $1-s/p$ chance of inducing sparsity,
which is a probability close to one under~\eqref{eq:approx_sparsity1}: under the
assumption that many categories have the same effect, choosing the reference
category at random will often lead to sparsity.

However, such an assumption may be restrictive. Suppose, for instance, that we
want to nonparametrically control for a variable that measures age. Then a
plausible form of sparsity arises under an assumption that the age effect is a
step function, with some dividing line between young and old. If this threshold
is not close to either very young or very old, then a reference category
approach will not yield sparsity, but an appropriate re-expression of the column
space will. Or maybe there are three distinct coefficients, one for the young,
one for the middle-aged and one for the old, which again requires a different
specification of the fixed effects to induce sparsity. Such specifications are
less common in applied work. Arguably this is not because they are a priori less
plausible, but rather, it is well understood that for \ac{OLS}, they are all
equivalent to the reference category specification---so it suffices to consider
the reference category specification.

By contrast, when imposing a sparsity assumption on the coefficients, this
equivalence breaks down. To capture these more general sparsity patterns,
consider alternative specifications for the column space that take the form
\begin{equation}
  W_{i}=A_{0}Z_{i}, \quad A_{0}\in \mathbf{A}=\{A\in\mathbb{R}^{p\times
    p}\colon A_{ij}\in \{0,1\}, \quad\text{$A$ is full rank}\},
\label{eq:cat_regressor}
\end{equation}
where the $Z_{i}$ indicates the baseline categories (so each $Z_{i}$ is a column
of $I_{p}$). By construction, elements of $W_{i}$ are all equal to zero or one
in this specification, and since $A\in\mathbb{R}^{p\times p}$ is full rank, the
constant vector is part of the column space spanned by $W_{i}$.

Suppose with $A_{0}\in \mathbf{A}$, the coefficient vector on $W_{i}$ is sparse.
However, because the matrix $A_{0}$ that induces sparsity is unknown, we pick a
column space specification $\tilde{W}_{i}=\mathcal{A}Z_{i}$, where
$\mathcal{A}\in\mathbf{A}$ is selected at random by repeatedly drawing
$p\times p$ matrices with i.i.d.~Bernoulli$(q)$ entries, $0<q\leq 1/2$, until we
find one that is full rank. By Theorem A of \textcite{tikhomirov20}, the
probability of discarding a matrix in this process is vanishingly small, as it
is smaller than $(1-q+\varepsilon)^{p}$ for all $\varepsilon >0$ and large
enough $p$. This allows us to essentially ignore rank-deficient matrices,
yielding the following result.

\begin{theorem}\label{thm:categories}
  Suppose a single coefficient on $W_{i}=A_{0}Z_{i}$ is constant
  and non-zero, and the number of zeros $K$ in the corresponding row of
  $A_{0}$ satisfies $0<\lim_{n\rightarrow \infty}K/p<1$. If all baseline
  categories have population fractions of the same order, then the probability
  that the model with $\tilde{W}_{i}=\mathcal{A} Z_{i}$ satisfies
  \cref{eq:approx_sparsity1,eq:approx_sparsity2} is no larger than
  \begin{equation*}
    (1-q+\varepsilon)^{K}
  \end{equation*}
  for all $\varepsilon >0$ and large enough $p$.
\end{theorem}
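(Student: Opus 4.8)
The plan is to reduce the approximate-sparsity conditions to a statement about the conditioning of certain random $\{0,1\}$ submatrices of $\mathcal{A}$, and then bound the probability of that event by a union bound over subsets combined with an anti-concentration estimate. Write the coefficient on $W_{i}$ as $\gamma=\gamma_{0}e_{j}$ (a single nonzero component, equal to the fixed nonzero constant $\gamma_{0}$), so the category-level value function is $g=A_{0}'\gamma=\gamma_{0}b$, where $b\in\{0,1\}^{p}$ is the $j$-th row of $A_{0}$, equal to one on a set $S$ of $p-K$ categories and zero on its complement $S^{c}$ of $K$ categories; the hypothesis $0<\lim K/p<1$ gives $\norm{S}\asymp\norm{S^{c}}\asymp p$ (abusing notation for cardinalities). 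For the re-expressed model $\tilde W_{i}=\mathcal{A}Z_{i}$, a coefficient $v$ with $\norm{v}_{0}\le s$ produces fitted values $(\mathcal{A}'v)_{c(i)}$, where $c(i)$ is the category of observation $i$, so $h:=\mathcal{A}'v$ is a linear combination of $s$ rows of $\mathcal{A}$; and since all population fractions are of order $1/p$, the left-hand side of \cref{eq:approx_sparsity2} is of order $\norm{g-h}_{2}^{2}/p$. Hence the event that the model satisfies \cref{eq:approx_sparsity1,eq:approx_sparsity2} implies that for some $s=o(\sqrt p/\log p)$, some size-$s$ set $T$ of row indices, and some $c\in\mathbb{R}^{s}$ one has $\norm{g-\sum_{l\in T}c_{l}a_{l}}_{2}^{2}=O(s)$, where $a_{l}$ is the $l$-th row of $\mathcal{A}$. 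Since $\mathcal{A}$ is drawn conditionally on being full rank, and by \textcite{tikhomirov20} an unconditioned Bernoulli$(q)$ matrix is full rank with probability at least $1-(1-q+\varepsilon)^{p}$, the conditioning inflates probabilities by a factor tending to one, so it suffices to bound the unconditioned probability of this event.

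The structural step is to show this forces a near-rank-deficiency. Fix such $T$ and $c$ with $\norm{g-\sum_{l\in T}c_{l}a_{l}}_{2}^{2}\le Cs$, and let $M_{T}$ be the $K\times s$ matrix whose columns are the restrictions to $S^{c}$ of the rows $a_{l}$, $l\in T$. Restricting the approximation to the $K$ coordinates in $S^{c}$, where $g$ vanishes, gives $\norm{M_{T}c}_{2}^{2}\le Cs$. Suppose, toward a contradiction, that $M_{T}$ were well-conditioned, $\sigma_{\min}(M_{T})\ge c_{0}\sqrt K$; then $\norm{c}_{2}^{2}=O(s/K)$, hence $\norm{c}_{1}\le\sqrt s\,\norm{c}_{2}=O(s/\sqrt K)$, and since the restriction of each $a_{l}$ to $S$ has squared norm at most $\norm{S}$, the triangle inequality and $\norm{S}\asymp K$ give $\norm{\sum_{l\in T}c_{l}a_{l}\big|_{S}}_{2}^{2}\le\norm{c}_{1}^{2}\norm{S}=O(s^{2})$. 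But $g$ equals the constant $\gamma_{0}$ on all of $S$, so $\norm{g\big|_{S}}_{2}^{2}\asymp p$, while the same approximation bounds this by $2Cs+O(s^{2})=O(s^{2})$; since $s^{2}=o(p)$ this is a contradiction. Therefore the bad event implies that for some size-$s$ set $T$ the $K\times s$ Bernoulli submatrix $M_{T}$ satisfies $\sigma_{\min}(M_{T})<c_{0}\sqrt K$.

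It remains to bound $P(\exists\,T\colon\sigma_{\min}(M_{T})<c_{0}\sqrt K)$. Union-bounding over $s\le\sqrt p/\log p$ and over the $\binom{p}{s}$ choices of $T$, this is at most $\sum_{s}\binom{p}{s}\max_{\norm{T}=s}P(\sigma_{\min}(M_{T})<c_{0}\sqrt K)$. For fixed $T$, $M_{T}$ has i.i.d.\ Bernoulli$(q)$ entries with $s\ll K$, and a smallest-singular-value estimate for rectangular random matrices in the spirit of \textcite{tikhomirov20} --- ultimately a Littlewood--Offord/Halász-type small-ball bound on the event that a $\operatorname{Bernoulli}(q)^{K}$ vector lies near a fixed low-dimensional subspace --- yields $P(\sigma_{\min}(M_{T})<c_{0}\sqrt K)\le(1-q+\varepsilon/2)^{K-s}$ for large $p$. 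Because $s=o(\sqrt p/\log p)$, the prefactors $\binom{p}{s}\le(ep/s)^{s}=e^{o(\sqrt p)}$ and $(1-q+\varepsilon/2)^{-s}=e^{o(\sqrt p)}$, as well as the $O(\sqrt p)$ values of $s$, are subexponential in $p$, whereas $\bigl((1-q+\varepsilon)/(1-q+\varepsilon/2)\bigr)^{K}$ grows exponentially in $K\asymp p$; collecting terms (together with the negligible non-full-rank contribution) gives the claimed bound $(1-q+\varepsilon)^{K}$ for all large $p$.

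The main obstacle is this last ingredient: establishing the small-ball/smallest-singular-value bound for rectangular Bernoulli matrices with the \emph{sharp} base $1-q+\varepsilon$ rather than merely some $e^{-c(q)K}$, and doing so while handling the nonzero mean $q$ of the entries when $q\ne 1/2$ --- which makes $M_{T}$ a rank-one perturbation of a mean-zero matrix --- carefully enough (centering, isolating the rank-one shift, and then invoking the anti-concentration machinery of \textcite{tikhomirov20} or Rudelson--Vershynin-type estimates) that the bound still holds. Everything else --- the reduction to $S^{c}$, the triangle-inequality contradiction, and absorbing the $e^{o(\sqrt p)}$ combinatorial factors --- is routine given $s=o(\sqrt p/\log p)$ and $K\asymp p$.
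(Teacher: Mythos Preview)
Your reduction is correct and parallels the paper's: both replace $\mathcal{A}$ by an unconditioned Bernoulli matrix $\mathcal{B}$ via Tikhomirov's singularity bound, and both restrict to the $K$ coordinates in $S^{c}$ where $\psi=A_{0}'\gamma$ vanishes. Your ``structural step'' --- if $M_{T}$ is well-conditioned then $\norm{c}_{1}$ is too small for $\sum c_{l}a_{l}$ to approximate the constant $\gamma_{0}$ on $S$ --- is valid; the paper reaches the same conclusion by a slightly cleaner deterministic case split on $\norm{v}_{2}$: if $\norm{v}_{2}<\eta:=\norm{\gamma}_{2}/(2\sqrt{s})$, then every entry of $\mathcal{B}v$ is at most $\sqrt{s}\,\norm{v}_{2}<\norm{\gamma}_{2}/2$ by Cauchy--Schwarz, so the fit on $S$ fails outright, and only the case $\norm{v}_{2}\ge\eta$ requires a random-matrix bound.

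Where you diverge is the organization of the probabilistic step, and the detour is unnecessary. You propose a union bound over $s$-subsets $T$ followed by a per-$T$ estimate $P(\sigma_{\min}(M_{T})<c_{0}\sqrt{K})\le(1-q+\varepsilon/2)^{K-s}$, flagging this as the main obstacle. But the event $\{\exists\,T:\sigma_{\min}(M_{T})<c_{0}\sqrt{K}\}$ is identically $\{\min_{\norm{v}_{0}\le s,\,\norm{v}_{2}=1}\norm{\mathcal{B}_{p}^{K}v}_{2}<c_{0}\sqrt{K}\}$, and the paper bounds this probability \emph{directly}, without decomposing over $T$, in its Lemma~3: Tikhomirov's Proposition~3.6 already controls the minimum of $\norm{\mathcal{B}v}_{2}$ over sparse unit vectors with the sharp base $1-q+\varepsilon$, and the paper observes that his proof carries over to $K\times p$ rectangular matrices with only two bookkeeping substitutions (his $n-1$ becomes $K$). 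The per-$T$ bound you seek would itself require the compressible/incompressible decomposition and Littlewood--Offord anti-concentration that constitute Tikhomirov's argument --- a naive $\epsilon$-net over $S^{s-1}$ fails because the small-ball probability $P(\sum_{l}B_{l}v_{l}\approx 0)$ is not uniformly controllable in $v$ --- so the union bound over $T$ neither simplifies nor avoids the hard part. Your obstacle is real but slightly mislocated: it is resolved not by a new per-$T$ estimate, but by recognizing that Tikhomirov's existing machinery already delivers the bound over all sparse $v$ simultaneously for rectangular matrices.
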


\subsection{Offset in Hermite polynomial regression}\label{sec:set-herm-polyn}

Our last example involves a large number of technical regressors. Specifically,
suppose there is a scalar variable $z_{i}\sim\mathcal{N}(0,1)$
which we want to control for nonparametrically by a series regression. We
consider scaled Hermite polynomials
\begin{equation*}
  H_{j}(x)=(j!)^{-1/2}(-1)^{j}e^{x^{2}/2}\frac{d^{j}}{dx^{j}}e^{-x^{2}/2}, \quad
  j=0, \dotsc, p-1
\end{equation*}
which are particularly convenient for an underlying standard Gaussian variable,
since we have $E[\tilde{W}_{i}\tilde{W}_{i}^{\prime}] = I_{p}$ for
$\tilde{W}_{i, j}=H_{j-1}(z_{i})$.

Suppose further that if $z_{i}$ is shifted by a constant
$\lambda \in\mathbb{R}$, the expansion is extremely sparse, with only the
highest order term being non-zero, that is,
$Y_{i}=D_{i}\beta +H_{p-1}(z_{i}+\lambda)+U_{i}$. If we were to draw $\lambda $
uniformly on $[0,1]$, what is the probability that a researcher ignoring the
shift obtains an approximately sparse representation? The following result shows
that the probability will be bounded above by $\varepsilon/\log (p)$, for all
$\varepsilon > 0$ and large enough $p$.

\begin{theorem}\label{thm:Hermite}
  Suppose $\lambda =L/\log p$, $L>0$. If $L$ is fixed, then for
  $1\leq j\leq L\sqrt{p}/\log p$ and $p\geq \max \{(2L)^{2}, 6\}$,
  $\tilde{\gamma}_{p-j}^{2} \geq Ce^{j/2}$, where $C$ is an absolute constant,
  and \cref{eq:approx_sparsity1,eq:approx_sparsity2} fail. In contrast, if
  $L\rightarrow 0$, \cref{eq:approx_sparsity1,eq:approx_sparsity2} hold.
\end{theorem}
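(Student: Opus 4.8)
The plan is to compute the transformed coefficient vector $\tilde\gamma$ in closed form and read both claims off from it. Since $\{\tilde W_{i,j}\}_{j=1}^{p}=\{H_{j-1}(z_{i})\}_{j=1}^{p}$ is orthonormal in $L^{2}$ of the standard Gaussian and $H_{p-1}(z_{i}+\lambda)$ is a polynomial of degree $p-1$ in $z_{i}$, there is an exact expansion $H_{p-1}(z_{i}+\lambda)=\sum_{k=1}^{p}\tilde\gamma_{k}H_{k-1}(z_{i})$, and $E[(W_{i}'\gamma-\tilde W_{i}'v)^{2}]=\norm{\tilde\gamma-v}_{2}^{2}$ for every $v$ because $E[\tilde W_{i}\tilde W_{i}']=I_{p}$. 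Writing $H_{j}=(j!)^{-1/2}\mathrm{He}_{j}$ in terms of the probabilists' Hermite polynomials and applying the classical binomial shift identity $\mathrm{He}_{p-1}(x+\lambda)=\sum_{k=0}^{p-1}\binom{p-1}{k}\lambda^{p-1-k}\mathrm{He}_{k}(x)$, I would obtain
\[
  \tilde\gamma_{p-m}=\frac{\lambda^{m}}{m!}\bigl((p-1)(p-2)\cdots(p-m)\bigr)^{1/2},\qquad m=0,1,\dots,p-1,
\]
so in particular $\tilde\gamma_{p}=1$ (the leading term is unaffected by the shift).

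The lower bound is then a direct estimate. Substituting $\lambda=L/\log p$, for $1\le j\le L\sqrt p/\log p$ and $p\ge\max\{(2L)^{2},6\}$ one has $p-\ell\ge p/2$ for every $\ell\le j$ (since $\sqrt p\ge 2L$ and $\log p\ge1$), so $(p-1)\cdots(p-j)\ge(p/2)^{j}$ and, using $j^{2}\le L^{2}p/(\log p)^{2}$,
\[
  \tilde\gamma_{p-j}^{2}\ \ge\ \frac{1}{(j!)^{2}}\Bigl(\frac{L^{2}p}{2(\log p)^{2}}\Bigr)^{j}\ \ge\ \frac{1}{(j!)^{2}}\Bigl(\frac{j^{2}}{2}\Bigr)^{j}.
\]
A Stirling upper bound $j!\le e\,j^{j+1/2}e^{-j}$ turns this into $\tilde\gamma_{p-j}^{2}\ge e^{-2}j^{-1}(e^{2}/2)^{j}\ge Ce^{j/2}$, since the ratio $(e^{3/2}/2)^{j}/j$ is increasing for $j\ge1$ and hence bounded below by its value at $j=1$, a positive absolute constant. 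This is the computational core of the theorem.

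To conclude that \cref{eq:approx_sparsity1,eq:approx_sparsity2} cannot hold simultaneously when $L$ is fixed, observe that for any $s$ the best $s$-sparse approximation keeps the $s$ coordinates of $\tilde\gamma$ of largest absolute value, so the left-hand side of \cref{eq:approx_sparsity2} equals the sum of the $p-s$ smallest $\tilde\gamma_{k}^{2}$. By the previous step the $M:=\lfloor L\sqrt p/\log p\rfloor$ coordinates $k=p-1,\dots,p-M$ all satisfy $\tilde\gamma_{k}^{2}\ge Ce^{1/2}$; if $s$ obeys \cref{eq:approx_sparsity1} then $s<M/2$ for $p$ large, so at least $M/2$ of them are discarded and
\[
  \min_{\norm{v}_{0}\le s}E[(W_{i}'\gamma-\tilde W_{i}'v)^{2}]\ \ge\ \tfrac{M}{2}Ce^{1/2}\ \longrightarrow\ \infty,
\]
which is incompatible with the bound $O(s/p)=o(1)$.

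For the converse, when $L\to0$ I would use the crude bound $\tilde\gamma_{p-m}^{2}\le(\lambda^{2}p)^{m}/(m!)^{2}=:a^{m}/(m!)^{2}$ with $a=L^{2}p/(\log p)^{2}$. Taking the support $\{k>p-s\}$ gives $\min_{\norm{v}_{0}\le s}\norm{\tilde\gamma-v}_{2}^{2}\le\sum_{m\ge s}a^{m}/(m!)^{2}$, and $m!\ge(m/e)^{m}$ bounds each summand by $(e\sqrt a/m)^{2m}$, so once $s\ge6\max\{1,\sqrt a\}$ this tail is at most $2\cdot4^{-s}$. Since $\sqrt a=L\sqrt p/\log p=o(\sqrt p/\log p)$ as $L\to0$, the choice $s=\lceil\max\{\sqrt{Lp}/\log p,\ \log p\}\rceil$ satisfies \cref{eq:approx_sparsity1}, clears the threshold $s\ge6\max\{1,\sqrt a\}$ for $p$ large, and produces a tail of order $4^{-\log p}=o(\log p/p)=O(s/p)$, so \cref{eq:approx_sparsity2} holds too. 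The only real work anywhere is the Stirling bookkeeping linking the explicit coefficient formula to the sparsity rates; I do not expect a conceptual obstacle once the closed form is in hand, though the converse needs slightly more care in selecting $s$.
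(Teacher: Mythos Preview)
Your proposal is correct and follows essentially the same route as the paper: compute $\tilde\gamma_{p-m}$ in closed form via the Hermite shift identity, then push Stirling-type estimates through to obtain the lower bound $Ce^{j/2}$ for fixed $L$ and an exponentially decaying tail for $L\to 0$. The differences are cosmetic: the paper works with $\log(\tilde\gamma_{p-k}^{2})$ and Robbins' version of Stirling, whereas you use the cruder product bound $(p-\ell)\ge p/2$ and $j!\le ej^{j+1/2}e^{-j}$; and your choice of $s$ in the converse ($s\approx\max\{\sqrt{Lp}/\log p,\log p\}$) differs from the paper's ($s\approx e^{2}\max\{L,1/\log p\}\sqrt{p}/\log p$), but both clear the required thresholds.
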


While the rate of $\log p$ in \Cref{thm:Hermite} is quite slow, note that when
sparsity fails, it does so dramatically in the sense that many coefficients
diverge. This is not surprising. It is well-known that assuming that the first
$s$ (say) terms of the polynomial expansion provide a good approximation to the
regression function amounts to imposing a smoothness assumption on it. In this
case, the exact basis approximation would matter little, and sparsity-based
methods would perform well regardless of the choice of $\lambda$. However, under
smoothness \acp{SBE} are not needed, since a low-order series expansion (i.e.\
only controlling for the first $s$ polynomial terms, with $s\ll p$) already
performs well.

In the current context with technical regressors, the appeal of \acp{SBE} is
that the assumption of sparsity they leverage is weaker than assuming
smoothness, since we allow for the best $s$-term approximation to include high
order polynomials. But if a good approximation to the regression function indeed
requires high-order terms, it must be very non-smooth---and for non-smooth
functions even small changes in the approximating basis imply wild swings in the
coefficients. In other words, while sparsity is indeed weaker than smoothness,
the particular class of non-smooth functions sparsity allows for is tightly
linked to the basis used. If one suspects non-smoothness, and one wants to avoid
running \ac{OLS} with many polynomial terms, one therefore needs a
substantive argument for why the particular basis chosen is likely to capture
it.

\subsection{Discussion}\label{sec:discussion}

The fact that the sparsity assumption depends on normalization choices in the
specification of $W$ is well-known in the theoretical literature. The three
stylized thought-experiments above quantify the extent of this dependence: the
vast majority of normalization choices do not induce an approximately sparse
representation. The results in \Cref{sec:empir-illustr} show that this lack of
invariance induces variation in \ac{SBE} estimates that is of the same order as
sampling uncertainty. This fragility appears underappreciated in the literature:
as \Cref{tab:t_lit1} shows, we were not able to ascertain the form of the
control matrix $W$ in \LRdesc\% of the papers surveyed, let alone find any
discussion of why the chosen specification for the control matrix should lead to
sparsity.\footnote{One paper mentioned that it included all dummy variables and
  did not pick a reference category; the paper did not discuss the implications
  of this choice for the restricted eigenvalue condition. The remaining papers
  in which the form of the control matrix could be inferred from the text did
  not feature categorical variables.}

A natural reaction to these results is to seek to modify \acp{SBE} in a way that
they adapt to different forms of sparse representations. For instance, it may be
possible to treat the offset $\lambda $ as another parameter to be estimated.
Similarly, a number of proposals have been developed that modify the lasso to
make it invariant to the choice of reference category, including the group lasso
\parencite{YuLi06}, or variants of fused lasso \parencite{tsrzk05}, such as
all-pairs penalties \parencite{BoRe09,GeTu10} or the SCOPE estimator
\parencite{StShTi21}. When combined with the debiasing techniques developed in
the literature on debiased lasso \parencite[e.g.][]{JaMo14,vbrd14,ZhZh14,bch14},
these approaches may yield algorithms that are less fragile.

However, it appears non-trivial to extend these modifications to the more
complicated designs encountered in practice, such as when interactions are
present, or to handle collinearity issues arising due to limited variability in
the controls. Both the BCH and the \textcite{ferrara22} applications of
\Cref{sec:empir-illustr} exhibit multicollinearity that goes beyond the issue of
choosing baseline categories.

An alternative approach could be to exhaustively consider all plausible
normalization choices, analogous to our analysis in \Cref{sec:empir-illustr},
and report the union of the confidence intervals. However, this is practically
difficult, as the number of possible normalizations is often very large.
Moreover, such an approach would only yield valid inference if at least one of
the considered choices induces a sparse representation.

To ensure that the latter condition holds, one needs a substantive argument. In
some instances, such arguments may necessitate further expanding the set of
normalization choices.
For instance, the grouped lasso approach in the references cited above require that the
baseline categories can be partitioned into a small number of types, with all
categories of the same type having (nearly) identical effect on the outcome
variable. In the context of controlling for profession dummies, say, this
amounts to an assumption that there are only a few profession types with
heterogeneous effects, which is a strong assumption. It may be more reasonable to
instead posit that professions are characterized by a combination of a small
number of latent skills, and these skills affect the outcome in an additive
manner. This then yields a different set of plausible normalizations that are
not captured by grouped lasso-type approaches.

In absence of substantive arguments, the sparsity assumption may well fail to
hold, even after considering a large set of potential normalization choices. We
present evidence for this in our applications in \Cref{sec:testing-sparsity}
below. Thus, developing fully data-driven methods to modify \acp{SBE} such that
they adapt to an appropriate sparse representation regardless of the context
seems infeasible.

Applications of \acp{SBE} require researchers to take a stand on why a
particular representation (or a set of representations) is sparse based on
domain-specific knowledge. This is analogous to using problem-specific arguments
to defend other substantive assumptions, such as assuming selection on
observables. But it is at odds with the purported appeal of \acp{SBE} to be
fully automatic, with the aim of reducing the researchers' degrees of freedom in
model choice.

While the sensitivity of \acp{SBE} to the control matrix specification may seem
similar to the usual issue that with flexible methods, implementation issues,
such as tuning parameter choice, can matter a lot in finite samples, it is
fundamentally different. %
Large sample validity of flexible methods can be justified under a wide range of
tuning parameter choices, including the choice of penalty in implementing
\acp{SBE}. By contrast, the specification of the control matrix affects validity
of the sparsity assumption, and thus large-sample validity of inference. It is a
substantive modeling choice, not merely an implementation detail. It needs to be
defended as such.

\section{Efficiency gains under sparsity}\label{sec:effic-gains-under}

One argument for using \acp{SBE} when $p$ is large, but still smaller than the
sample size, is that \ac{OLS} estimates are too noisy. In this section, we
quantify the potential efficiency gains of alternative estimators relative to
\ac{OLS}. We argue that potential gains are limited when the ratio $p/n$ is
small.

We consider the linear model given in \cref{eq:outcome}, and assume the
propensity score equation is also linear,
\begin{equation}\label{eq:p_score}
  D_{i}=W_{i}'\delta+\tilde{D}_{i}, \qquad E[\tilde{D}_{i}\mid W_{i}]=0.
\end{equation}
The assumption that the residuals $U_{i}$ and $\tilde{D}_{i}$ are conditionally
mean zero avoids technical complications; it could be replaced by an assumption
that restricts these conditional means to be small, so that the linear
specifications in \cref{eq:outcome,eq:p_score} could be thought of as
approximating non-linear conditional means \parencite[see, e.g.,][]{cjn18}.
Using the Frisch-Waugh-Lovell theorem, we can write the \ac{OLS} estimator of
$\beta$ as
\begin{equation*}
  \hat{\beta}_{OLS}:=\frac{\sum_{i=1}^{n}\ddot{D}_{i}Y_{i}}{\sum_{i=1}^{n}\ddot{D}_{i}^{2}},
\end{equation*}
where $\ddot{D}_{i}$ denotes the sample propensity score residual from
estimating~\cref{eq:p_score} by \ac{OLS}. Specifically, employing the usual
matrix notation, $\ddot{D}_{i}$ corresponds to the $i$th element of
$\ddot{D}:=(I-P)D$, where $P=WW^{+}$ is the projection matrix onto the space
spanned by the controls. Here $W^{+}$ denotes the pseudo-inverse (so that if $W$
is full rank, $P=W(W'W)^{-1}W'$).

To analyze $\hat{\beta}_{OLS}$, we need to impose more structure on the model in
\cref{eq:outcome,eq:p_score}.
\begin{assumption}\label{assumption:an_long} For some constants $\eta>0$ and
  $K\geq 1$:
  (i) $\{(U_{i}, \tilde{D}_{i})\}_{i=1}^{n}$ are independent across $i$
  conditional on $W$; (ii)
  $E[\abs{U_{i}}^{2+\eta}\mid D, W]+E[\abs{\tilde{D}_{i}}^{4}\mid W]\leq K$
  uniformly over $i$; (iii)
  $1/E[\tilde{D}_{i}^{2}\mid W]+1/E[U_{i}^{2}\mid D, W]\leq K$ uniformly over
  $i$; (iv) $\limsup_{n\to\infty}p/n<1$.
\end{assumption}
Parts (i)--(iii) of \Cref{assumption:an_long} are standard assumptions on
sampling and the regression errors, ensuring that the errors are neither too
thick-tailed nor degenerate. Part (iv) ensures that, when \ac{OLS} estimates
of~\eqref{eq:outcome} implicitly estimate the propensity score regression, we
do not overfit so much that we eliminate any variation in the sample residuals
$\ddot{D}_{i}$. This overfitting precludes using \ac{OLS} in settings with
$p>n$. However, as long as the limit of $p/n$ is strictly smaller than $1$, the
\ac{OLS} estimator remains asymptotically normal with the usual sandwich
expression for its standard error:

\begin{lemma}\label{theorem:asymptotic_normality}
  Consider the model in \cref{eq:outcome,eq:p_score}.
  Under~\Cref{assumption:an_long},
  \begin{equation*}
    \frac{\hat{\beta}_{OLS}-\beta}{
      \ss_{OLS}}
    \overset{d}{\to} \mathcal{N}(0,1), \qquad \ss_{OLS}^{2}
    =\frac{1}{(\ddot{D}'\ddot{D})^{2}}
    \sum_{i=1}^{n}\ddot{D}_{i}^{2}U_{i}^{2}.
  \end{equation*}
\end{lemma}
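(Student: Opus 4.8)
The plan is to exploit the Frisch--Waugh--Lovell representation $\hat{\beta}_{OLS}-\beta = (\ddot{D}'\ddot{D})^{-1}\sum_{i}\ddot{D}_{i}(U_{i}+\tilde{D}_{i}\beta-\tilde{D}_{i}\beta)$, which, after substituting $Y_{i}=D_{i}\beta+W_{i}'\gamma+U_{i}$ and using that $(I-P)W=0$, collapses to
\begin{equation*}
  \hat{\beta}_{OLS}-\beta = \frac{\sum_{i=1}^{n}\ddot{D}_{i}U_{i}}{\sum_{i=1}^{n}\ddot{D}_{i}^{2}} = \frac{\ddot{D}'U}{\ddot{D}'\ddot{D}}.
\end{equation*}
The key point is that $\ddot{D}=(I-P)D=(I-P)\tilde{D}$ depends only on $W$ and $\tilde{D}$, so conditional on $(D,W)$ the numerator $\ddot{D}'U=\sum_{i}\ddot{D}_{i}U_{i}$ is a sum of independent (across $i$, by \Cref{assumption:an_long}(i)) mean-zero terms with fixed weights $\ddot{D}_{i}$. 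First I would establish a conditional CLT for $\ddot{D}'U / (\sum_i \ddot{D}_i^2 U_i^2)^{1/2}$ given $(D,W)$ via the Lindeberg--Feller theorem: the Lindeberg condition reduces to showing $\max_{i}\ddot{D}_{i}^{2}/\sum_{j}\ddot{D}_{j}^{2}\to 0$ together with the uniform integrability of $U_i^2$ supplied by the $2+\eta$ moment bound in \Cref{assumption:an_long}(ii).

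The main obstacle is controlling the design quantities $\ddot{D}_{i}$ --- in particular showing that $\sum_{i}\ddot{D}_{i}^{2}$ does not degenerate and that no single $\ddot{D}_{i}^{2}$ dominates --- since $\ddot{D}=(I-P)\tilde{D}$ involves the high-dimensional projection $P$ with $p\asymp n$. Here \Cref{assumption:an_long}(iv) is essential: with $p/n\to c<1$, the matrix $I-P$ has rank $n-p\asymp n$, and I would argue that $E[\ddot{D}'\ddot{D}\mid W] = \sum_i (I-P)_{ii}E[\tilde D_i^2\mid W]$, combined with the lower bound on $E[\tilde D_i^2\mid W]$ from \Cref{assumption:an_long}(iii), forces $\ddot D'\ddot D \asymp n$ with probability approaching one (using the fourth-moment bound on $\tilde D_i$ in (ii) to get concentration via a variance computation on the quadratic form $\tilde D'(I-P)\tilde D$). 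The no-dominance claim $\max_i \ddot D_i^2 = o_p(\ddot D'\ddot D)$ then follows because $\ddot D_i^2 \le \lVert (I-P)_{i\cdot}\rVert_2^2 \cdot \text{(bounded factor)}$ after a standard truncation argument using the fourth moment of $\tilde D_i$; the diagonal entries $(I-P)_{ii}\le 1$ and the rows of $I-P$ have squared norm equal to $(I-P)_{ii}$, so each $\ddot D_i$ is a bounded linear combination and a maximal-inequality / truncation bound gives $\max_i \ddot D_i^2 = o_p(n)$.

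Finally I would assemble the pieces: conditional on $(D,W)$, Lindeberg--Feller gives $\ddot{D}'U/(\sum_i\ddot D_i^2 U_i^2)^{1/2}\overset{d}{\to}\mathcal N(0,1)$ on the event (of probability $\to 1$) where the design is well-behaved; then I would pass from the conditional to the unconditional statement by a dominated-convergence / bounded-continuity argument on characteristic functions, noting the conditional limit law is constant and hence the convergence holds unconditionally. The denominator in the stated $s_{OLS}$ is $\ddot D'\ddot D$ rather than $(\sum_i \ddot D_i^2 U_i^2)^{1/2}$; this is reconciled by writing
\begin{equation*}
  \frac{\hat\beta_{OLS}-\beta}{s_{OLS}} = \frac{\ddot D'U}{\bigl(\sum_{i}\ddot D_i^2 U_i^2\bigr)^{1/2}},
\end{equation*}
since $s_{OLS}^2 = (\ddot D'\ddot D)^{-2}\sum_i \ddot D_i^2 U_i^2$, so the two forms are algebraically identical and no separate consistency argument for a variance estimator is needed --- the lemma is stated in self-normalized form. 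I expect the design-concentration step (showing $\ddot D'\ddot D\asymp n$ and the negligibility of $\max_i\ddot D_i^2$ under only a fourth moment on $\tilde D_i$) to be where essentially all the work lies; everything else is a routine application of Lindeberg--Feller plus a conditioning argument.
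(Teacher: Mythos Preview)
Your proposal follows essentially the same route as the paper: the conditional Lindeberg--Feller argument, the concentration of $\ddot{D}'\ddot{D}$ via a variance computation on the quadratic form $\tilde{D}'(I-P)\tilde{D}$, and the control of $\max_i\ddot{D}_i^2$ via a fourth-moment maximal inequality are exactly what the paper does (the paper packages the quadratic-form concentration into a separate auxiliary lemma and gets the slightly sharper $\max_i|\ddot{D}_i|=O_p(n^{1/4})$ by union bound plus Markov, but these are the same ideas).

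There is one gap, however. Lindeberg--Feller, applied conditionally on $(D,W)$, yields asymptotic normality when the sum is normalized by the \emph{conditional variance} $\omega_{OLS}^2:=(\ddot{D}'\ddot{D})^{-2}\sum_i\ddot{D}_i^2\,E[U_i^2\mid D,W]$, not by $s_{OLS}^2=(\ddot{D}'\ddot{D})^{-2}\sum_i\ddot{D}_i^2 U_i^2$. Your claim that ``no separate consistency argument for a variance estimator is needed'' is therefore too quick: passing from $\omega_{OLS}$ to $s_{OLS}$ still requires showing $s_{OLS}^2/\omega_{OLS}^2\to 1$ in probability, i.e.\ that $\sum_i\ddot{D}_i^2\bigl(U_i^2-E[U_i^2\mid D,W]\bigr)=o_p\bigl(\sum_i\ddot{D}_i^2\bigr)$. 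The paper carries this out explicitly via a von~Bahr--Esseen-type $(1+\eta/2)$-moment bound, using the $2+\eta$ moment of $U_i$ together with the fourth-moment control of $\ddot{D}_i$ already established. The step is routine and uses only ingredients you already plan to prove, but it cannot be skipped: a self-normalized CLT is not the same statement as Lindeberg--Feller.
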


The fact that \ac{OLS} remains asymptotically normal in the regime $p\asymp n$
is not new---results similar to \Cref{theorem:asymptotic_normality} appeared
previously in, for instance, \textcite{cjn18} who build on earlier work by
\textcite{mammen93}. We state it here under a simpler set of assumptions to
allow us to consider its implications for potential efficiency gains. To this
end, observe that the asymptotic variance is in general larger than the
semiparametric efficiency bound unless $p/n\to 0$. In particular, when the
errors $U_{i}$ are homoskedastic, the efficient standard error is given by the
square root of
\begin{equation*}
  \ss^{2}_{*}=\frac{1}{(\tilde{D}'\tilde{D})^{2}}
    \sum_{i=1}^{n}\tilde{D}_{i}^{2}U_{i}^{2},
\end{equation*}
in the sense that the semiparametric efficiency bound is given by the
probability limit of $\ss_{*}^{2}/n$ \parencite[see, e.g.,][]{robinson88}.
Semiparametrically efficient estimators $\hat{\beta}^{*}$ of $\beta$ thus need
to satisfy
\begin{equation}\label{eq:efficient_if}
 \hat{\beta}^{*}-\beta=(\tilde{D}'\tilde{D})^{-1}\tilde{D}'U+o_{p}(n^{-1/2}),
\end{equation}
This lack of semiparametric efficiency is not a deficiency of \ac{OLS}: under
Gaussian errors, one-sided $t$-tests based on \ac{OLS} are uniformly most
powerful. Rather, it reflects the fact that when $p\asymp n$, achieving the
semiparametric efficiency bound requires some type of restriction on $\gamma$.
It turns out that restricting $\gamma$ to be sparse (see
\Cref{sec:approximate-sparsity}) is sufficient; indeed a number of \acp{SBE}
satisfy \cref{eq:efficient_if} \parencite[e.g.][]{bch14,JaMo14,vbrd14,ZhZh14}.

The standard error ratio $\ss_{*}/\ss_{OLS}$ thus represents the potential
efficiency gain from imposing sparsity. When $U_{i}$ is homoskedastic and
\Cref{assumption:an_long} holds, the ratio satisfies
\begin{equation*}
  \frac{\ss^{2}_{*}}{\ss^{2}_{OLS}}=
  (1-p/n)\frac{\ddot{D}'\ddot{D}/(n-p)}{\tilde{D}'\tilde{D}/n}(1+o_{p}(1))
  =
  (1-p/n)\kappa(1+o_{p}(1)), \qquad \kappa=
  \frac{E[\hat{\sigma}^{2}_{\tilde{D}}]}{\sigma^{2}_{\tilde{D}}},
\end{equation*}
where $\sigma^{2}_{\tilde{D}}$ is the variance of the error in the propensity
score regression, and $\hat{\sigma}^{2}_{\tilde{D}}=\ddot{D}'\ddot{D}/(n-p)$,
the mean squared error in the propensity score regression, can be thought of as
an estimator of $\sigma^{2}_{\tilde{D}}$. The ratio $\kappa$ measures the
relative bias of this estimator, with $\kappa=1$ when the estimator is unbiased.
This is the case when $\tilde{D}_{i}$ is homoskedastic or when the design is
balanced in that the leverages $P_{ii}$ are approximately equal. In this case,
the standard error ratio $\ss_{*}/\ss_{OLS}$ simplifies to a degrees-of-freedom
correction, $\sqrt{1-p/n}$, as noted previously in \textcite{cjn18et}. When
$\tilde{D}_{i}$ is heteroskedastic, $\hat{\sigma}^{2}_{\tilde{D}}$ may display
downward bias if the leverages $P_{ii}$ are positively correlated with the
conditional variances $E[\tilde{D}_{i}^{2}\mid W]$; this can be seen from
writing
$\kappa=E[\sum_{i}(1-P_{ii})E[\tilde{D}_{i}^{2}\mid W]]/
\sum_{i}(1-P_{ii})E[\tilde{D}_{i}^{2}]$. However, the correlation would need to
be substantial to induce downward bias that is sizable enough to allow for large
efficiency gains. When $p/n=0.2$ and $\kappa=0.9$, for instance, corresponding
to 10\% downward bias, the standard error ratio implies a
$1-\sqrt{0.8\cdot 0.9}=15.1\%$ reduction in standard error. Consequently, the
ratio $p/n$ needs to be much larger than $0.2$ to allow for sizable efficiency
gains under sparsity: at $\kappa=0.9$, we need $p/n\geq 6/16\approx 0.38$ to
reduce the standard errors by more than 25\%.

While these efficiency calculations rely on homoskedastic errors $U_{i}$,
similar conclusions are likely to hold under heteroskedasticity. The argument
for this is that, in general,
\begin{equation*}
  \frac{\ss_{*}}{\ss_{OLS}}=
  \frac{\ss_{*}/\ss_{*, \hom}}{\ss_{OLS}/\ss_{OLS, \hom}}
  \sqrt{(1-p/n)\kappa}\cdot (1+o_{p}(1)),
\end{equation*}
where $\ss_{*, \hom}=\sigma_{U}/\sqrt{\tilde{D}'\tilde{D}}$ and
$\ss_{OLS, \hom}=\sigma_{U}/\sqrt{\ddot{D}'\ddot{D}}$ are homoskedasticity-only
standard error formulas. The ratios $\ss_{*}/\ss_{*, \hom}$ and
$\ss_{OLS}/\ss_{OLS, \hom}$ measure the magnitude of the heteroskedasticity
correction on the standard errors. Heteroskedasticity would thus need to have a
large \emph{differential} impact on the standard errors of \ac{OLS} versus
$\hat{\beta}_{*}$ for the standard error ratio to deviate much from
$\sqrt{(1-p/n)\kappa}$. But magnitudes of heteroskedasticity corrections tend to
be modest in practice. Thus, while substantive efficiency gains are possible
when the ratio of $p/n$ is close to one, meaningful gains are unlikely when the
controls number 20\% or less of the sample size. In such cases---which,
according to our survey, cover the majority of \ac{SBE} applications---simply
running \ac{OLS} offers a robust method of inference at little efficiency loss.

\section{Testing sparsity}\label{sec:testing-sparsity}

In this section, we develop two tests of the sparsity assumption under our
maintained assumption that $p$ is smaller than, but proportional to the sample
size. We then apply these tests to the empirical examples studied in
\Cref{sec:empir-illustr}.

\subsection{Hausman test}\label{sec:hausman-test}

The first test we develop is a simple application of the idea popularized by
\textcite{hausman78} that if we have two estimators, one of which is more
efficient, but requires stronger assumptions for its validity, we can indirectly
test these stronger assumptions by checking whether the estimates are
statistically significantly different. The next lemma formalizes the result.

\begin{lemma}\label{theorem:hausman-test}
  Consider the model in \cref{eq:outcome,eq:p_score}. Suppose
  \Cref{assumption:an_long} holds, and that $\trace(P)/n\geq \tilde{K}$ a.s.\
  for some $\tilde{K}>0$. Then, for any estimator satisfying
  \cref{eq:efficient_if},
  \begin{equation*}
    \frac{\hat{\beta}_{OLS}-\hat{\beta}_{*}}{\ss_{H}}\overset{d}{\to}\mathcal{N}(0,1), \qquad
    \ss^{2}_{H}=\sum_{i=1}^{n}Z_{i}^{2}U_{i}^{2},
  \end{equation*}
  where
  $Z_{i}=\frac{\ddot{D}_{i}}{\ddot{D}'\ddot{D}}-
  \frac{\tilde{D}_{i}}{\tilde{D}'\tilde{D}}$.

  Additionally, suppose the regression functions admit the decomposition
  $W_{i}'\gamma=f(W_{i})+r_{\gamma}(W_{i})$ and
  $W_{i}'\delta=g(W_{i})+r_{\delta}(W_{i})$, where the remainder terms
  satisfy (i)
  $\frac{1}{n}\sum_{i=1}^{n}E[r_{\delta}(W_{i})^{2}+r_{\gamma}(W_{i})^{2}]\to
  0$; (ii)
  $\frac{1}{n}\sum_{i=1}^{n}E[r_{\gamma}(W_{i})^{2}r_{\delta}(W_{i})^{2}]$ is
  bounded; and (iii)
  $n\sum_{i=1}^{n}
  (Z_{i}^{2}U_{i}^{2}-\tilde{z}_{i}^{2}(U_{i}+r_{\gamma}(W_{i}))^{2})=o_{p}(1)$,
  with
  $\tilde{z}_{i}=\frac{\ddot{D}_{i}}{\ddot{D}'\ddot{D}}-\frac{\tilde{D}_{i}+r_{\delta}(W_{i})}{
    \sum_{i=1}^{n}(\tilde{D}_{i}+r_{\delta}(W_{i}))^{2}}$. Suppose also that for
  some estimates $\hat{U}=\hat{U}(Y, D, W)$ and $\hat{D}=\hat{D}(W, D)$ (iv)
  $\max_{i}\abs{\hat{U}_{i}-U_{i}-r_{\gamma}(W_{i})}+\max_{i}\abs{\hat{D}_{i}-D_{i}-r_{\delta}(W_{i})}=o_{p}(1)$.
  Then the same conclusion holds with $\ss_{H}^{2}$ replaced by
  $\hat{\ss}^{2}_{H}=\sum_{i=1}^{n}\hat{Z}_{i}^{2}\hat{U}_{i}^{2}$, where
  $\hat{Z}_{i}=\frac{\ddot{D}_{i}}{\ddot{D}'\ddot{D}}-
  \frac{\hat{D}_{i}}{\hat{D}'\hat{D}}$.
\end{lemma}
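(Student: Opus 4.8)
The plan is to establish the first display (the population-standard-error version) by writing $\hat\beta_{OLS}-\hat\beta_* $ as a single linear functional of $U$, and then to bootstrap from that to the feasible version by controlling the replacement errors. Starting from $\hat\beta_{OLS}-\beta=(\ddot D'\ddot D)^{-1}\ddot D'U$ and the defining property \cref{eq:efficient_if}, namely $\hat\beta_*-\beta=(\tilde D'\tilde D)^{-1}\tilde D'U+o_p(n^{-1/2})$, subtract to get $\hat\beta_{OLS}-\hat\beta_*=\sum_i Z_i U_i+o_p(n^{-1/2})$ with $Z_i$ as defined. Since $(I-P)W=0$, we have $\ddot D'W\gamma=0$, and $E[\tilde D_i\mid W]=0$ gives the relevant orthogonality for the $\tilde D$ piece; so the sum has conditional mean zero given $(D,W)$. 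Then I would verify a Lindeberg condition for $\sum_i Z_i U_i/\ss_H$: the ingredients are the moment bounds in \Cref{assumption:an_long}(ii)--(iii) and a bound on $\max_i Z_i^2/\ss_H^2\to0$. Here $\ddot D'\ddot D\asymp n$ and $\tilde D'\tilde D\asymp n$ (using (iii) and (iv) as in \Cref{lemma}\,\ref{theorem:asymptotic_normality}), so $Z_i=O(\max_i|\ddot D_i|/n)+O(\max_i|\tilde D_i|/n)$, and the fourth-moment bound on $\tilde D_i$ plus the high-dimensional-leverage control from \Cref{assumption:an_long}(iv) give the required negligibility of the maximal summand. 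This yields asymptotic normality with variance $\ss_H^2$; the nondegeneracy $\liminf p/n>0$ is what guarantees $\ss_H^2$ does not vanish relative to $\ss_{OLS}^2$, since $Z_i$ is (asymptotically) the difference of two non-collinear score vectors.

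For the feasible statement I would argue $\hat\ss_H^2/\ss_H^2\to_p 1$, working with the auxiliary quantity $\tilde z_i$ and its associated "misspecified" variance $\sum_i\tilde z_i^2(U_i+r_\gamma(W_i))^2$. Assumption (iii) is precisely the statement that replacing $\ss_H^2$ by this auxiliary version is asymptotically harmless; assumptions (i)--(ii) ensure the cross terms $E[r_\gamma^2 r_\delta^2]$, $E[r_\gamma^2]$, $E[r_\delta^2]$ that appear when expanding $(\tilde D_i+r_\delta(W_i))^2$ and $(U_i+r_\gamma(W_i))^2$ are controlled, so that $\tilde z_i$ and $\tilde D_i+r_\delta(W_i)$ behave, after normalization by their sums of squares, like $\tilde D_i$ does. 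Finally, assumption (iv) — uniform consistency $\max_i|\hat U_i-U_i-r_\gamma(W_i)|=o_p(1)$ and the analogous statement for $\hat D_i$ — lets me pass from $\tilde z_i,\,(U_i+r_\gamma(W_i))$ to $\hat Z_i,\,\hat U_i$ inside the sum: expand $\hat Z_i^2\hat U_i^2-\tilde z_i^2(U_i+r_\gamma(W_i))^2$ into a handful of mixed products, bound each by the uniform errors times sums of squares that are $O_p(n^{-1})$ after normalization, and collect. Combining, $\hat\ss_H^2=\ss_H^2(1+o_p(1))$, and Slutsky gives the claimed convergence with $\hat\ss_H$.

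The main obstacle I anticipate is the feasible-variance step, specifically getting the replacement of $\ddot D'\ddot D$, $\tilde D'\tilde D$, $\hat D'\hat D$, and $\sum_i(\tilde D_i+r_\delta(W_i))^2$ by one another to be uniformly $o_p(1)$ in the normalized summands — this is where the $p\asymp n$ regime bites, because $\ddot D_i$ is not a consistent estimate of $\tilde D_i$ pointwise (the residuals contain an $O_p(\sqrt{p/n})$ projection-of-noise component), so one cannot simply substitute $\ddot D_i$ for $\tilde D_i+r_\delta(W_i)$ and must instead argue at the level of the quadratic forms, tracking the $P_{ii}$-weighted sums as in the $\kappa$ decomposition of \Cref{sec:effic-gains-under}. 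The assumptions (iii)--(iv) of the lemma are evidently tailored to package exactly these quadratic-form controls, so the proof largely amounts to unpacking them carefully; the genuinely delicate accounting is showing that the error introduced by $\hat D$ (a lasso-type fit) multiplies against quantities of the right order rather than blowing up, which is precisely what the uniform bound in (iv) — as opposed to an $L^2$ bound — is there to prevent.
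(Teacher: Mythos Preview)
Your proposal is correct and follows essentially the same route as the paper: decompose $\hat\beta_{OLS}-\hat\beta_*=\sum_i Z_iU_i+o_p(n^{-1/2})$, apply a conditional Lyapunov CLT, and for the feasible part pass through the intermediate quantities $\tilde z_i,\tilde u_i$ using assumption (iii), then from those to $\hat Z_i,\hat U_i$ using the uniform bound (iv). The one ingredient you leave implicit but the paper makes explicit is the quantitative order $n\sum_i Z_i^2\asymp_p p/n$, obtained from the algebraic identity $\sum_i Z_i^2=\tilde D'P\tilde D/(\tilde D'\tilde D\cdot\ddot D'\ddot D)$ together with a quadratic-form LLN for $\tilde D'P\tilde D$; this is what converts your qualitative remark that ``$\ss_H^2$ does not vanish relative to $\ss_{OLS}^2$'' into the precise statement that $o_p(n^{-1/2})/\ss_H=o_p(\sqrt{n/p})=o_p(1)$ under $\liminf p/n>0$, and also drives the Lyapunov bound.
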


The second part of \Cref{theorem:hausman-test} allows us to construct a simple
plug-in estimator of the standard error $s_{H}$ based on lasso or post-lasso
residuals. In particular, when the regression functions admit a sparse
approximation in the sense of \cref{eq:approx_sparsity1,eq:approx_sparsity2},
the additional condition (i) in the second part of \Cref{theorem:hausman-test}
will hold with $f$ and $g$ given by the best sparse approximations to
$W_{i}'\gamma$ and $W_{i}'\delta$, and condition (iv) will hold for the lasso or
the post-lasso residuals. Conditions (ii) and (iii) are high-level conditions
ensuring that if we include the approximation errors $r_{\delta}$ and
$r_{\gamma}$ in the definition of the residuals, replacing $U_{i}$ with
$U_{i}+r_{\gamma}(W_{i})$, and $\tilde{D}_{i}$ with
$\tilde{D}_{i}+r_{\delta}(W_{i})$, this has negligible impact on the standard
error $s_{H}$ in large samples; it is similar to the condition ASTE (P) (v) in
\textcite{bch14}. We note that \Cref{theorem:hausman-test} only imposes very
weak conditions on the control matrix $W$, allowing it to be reduced-rank, so
long as the rank is proportional to $p$, and allowing the rows to be dependent,
and not identically distributed.

Sometimes, \acp{SBE} are used as a ``robustness check'' alongside a main
specification based on \ac{OLS}. \Cref{theorem:hausman-test} shows that such
practice is in fact the opposite of a robustness check: if the two estimates are
not close to one another this indicates failure of the sparsity assumption
rather than lack of robustness in the \ac{OLS} estimates. When $U_{i}$ is
homoskedastic, the Hausman standard error may be written as
\begin{equation*}
  \ss^{2}_{H}=(\ss^{2}_{OLS}-\ss^{2}_{*})(1+o_{p}(1))=
  \ss^{2}_{*}\left(\frac{1}{(1-p/n)\kappa}-1\right)(1+o_{p}(1)),
\end{equation*}
so that when the efficiency gain $\sqrt{(1-p/n)\kappa}$ is small, the two
estimates need to be tightly coupled, within a fraction of the \ac{SBE} standard
error.

\subsection{Residual test}\label{sec:residual-test}

Our second approach to testing sparsity is based on the idea that if the
identities $\mathcal{S}^{*}\subseteq \{1,\dotsc, p\}$ of the controls that give
the best sparse approximation to the regression function are known, testing
sparsity is equivalent to testing that the coefficients on the remaining
controls are small, which can be gauged using a conventional $F$-statistic.
Although the identities $\mathcal{S}^{*}$ are unknown in practice, it turns out
that under the null hypothesis of sparsity, the residuals from the infeasible
short regression that only includes controls in $\mathcal{S}^{*}$ are
sufficiently well approximated by residuals from a lasso regression, so that the
hypothesis can be tested by comparing the lasso and \ac{OLS} residuals.

To make the result precise, consider a linear regression
\begin{equation}\label{eq:f-regression}
  \yy_{i}=X_{i}'\alpha+\epsilon_{i}, \quad E[\epsilon_{i}\mid X_{i}]=0,\qquad
  i=1,\dotsc, n,
\end{equation}
with $\epsilon_{i}$ independent across $i$, conditional on the regressors, and
$p:=\dim(X_{i})<n$. In our setup, \cref{eq:f-regression} may correspond to one
of three regressions. For testing sparsity of $\gamma$ in \cref{eq:outcome},
which all \acp{SBE} require, we can set $\yy_{i}=Y_{i}$ and
$X_{i}=(D_{i}, W_{i}')'$. For testing sparsity in the propensity score
regression, which is needed, for instance, for the validity of the post-double
selection estimator of \textcite{bch14}, we can set $\yy_{i}=D_{i}$ and
$X_{i}=W_{i}$. Sparsity in both regressions can be tested jointly in the reduced
form regression of $\yy_{i}=Y_{i}$ onto $X_{i}=W_{i}$.

We wish to test the assumption that the regression function $X_{i}'\alpha$
admits a sparse approximation. To work out the implications of this hypothesis,
we introduce some additional notation. For a subset
$\mathcal{S}\subseteq\{1,\dotsc, p\}$ of the regressors, let $X_{S}$ denote the
submatrix of $X$ that drops the columns corresponding to the complement of
$\mathcal{S}$, let $P_{\mathcal{S}}=X_{\mathcal{S}}X_{\mathcal{S}}^{+}$ denote
the projection matrix associated with $X_{\mathcal{S}}$, and let $P=XX^{+}$
denote the full projection matrix. In a slight departure from
\Cref{sec:sparse-repr-are}, we gauge the quality of the approximation
conditional on $X$, so that the approximation error from only using the
regressors $X_{\mathcal{S}}$ is given by $(I-P_{\mathcal{S}})X\alpha$, the
residual from projecting $X\alpha$ onto $X_{\mathcal{S}}$. The assumption that
$X\alpha$ is sparse can then be stated as:
\begin{assumption}\label{assumption:approx_sparsity}
  There exists a subset $\mathcal{S}^{*}\subseteq \{1,\dotsc, p\}$ with
  cardinality $s$, such that
  $\norm{(I-P_{\mathcal{S}^{*}})X\alpha}_{2}^{2}=O_{p}(s)$ and
  $s\log (p)/\sqrt{p}\to 0$.
\end{assumption}

If we knew the identity of the subset $\mathcal{S}^{*}$, and we also assumed
that the sparsity was exact, so that the $O_{p}(s)$ term
in~\Cref{assumption:approx_sparsity} was zero, then a natural way of testing
\Cref{assumption:approx_sparsity} would be to compare the restricted and
unrestricted sum of squared residuals,
\begin{equation*}
  \mathcal{F}=\yy'(I-P_{\mathcal{S}^{*}})\yy-\yy'(I-P)\yy.
\end{equation*}
The statistic $\mathcal{F}$ is the numerator of the homoskedastic $F$-statistic.
While the $F$-statistic critical values are only valid under homoskedasticity,
we can leverage the fact that when $p\to\infty$, as is the case under our
asymptotics, $\mathcal{F}$ is asymptotically normal after centering and scaling
to derive a critical value that is robust to heteroskedasticity. Furthermore, if
we have an estimator $\hat{\alpha}$ such that $\yy-X\hat{\alpha}$ approximates
the residuals $(I-P_{\mathcal{S}^{*}})\yy$ from the infeasible short regression
sufficiently well, we can replace the infeasible sum of squared residuals
$\yy'(I-P_{\mathcal{S}^{*}})\yy$ in $\mathcal{F}$ by
$\norm{\yy-X\hat{\alpha}}_{2}^{2}$ to derive a feasible version of this test.
Finally, it turns out that weakening exact sparsity to approximate sparsity in
the sense of \Cref{assumption:approx_sparsity} does not impact the null rejection
probability of the test in large samples. The next lemma formalizes these
arguments.

\begin{lemma}\label{theorem:non-normal-errors_epe}
  Suppose \Cref{assumption:approx_sparsity} holds and that, for some $K\geq 1$,
  (a) $E[\epsilon_{i}^{4}\mid X]\leq K$; (b) $\max_{i}P_{ii}<1-1/K$; (c)
  $\trace(P)/n\geq 1/K$; and (d) $E[\epsilon_{i}^{2}\mid X]\geq 1/K$ a.s. Then, as
  $n\to\infty$,
  \begin{equation}\label{eq:infeasible}
    \frac{\mathcal{F}-\sum_{i}\epsilon_{i}^{2}P_{ii}}{
      \sqrt{2\sum_{i\neq j}\epsilon_{i}^{2}\epsilon_{j}^{2}P_{ij}^{2}}
    }\overset{d}{\to}\mathcal{N}(0,1).
  \end{equation}
  Furthermore, suppose that for some estimator $\hat{\alpha}$, (i)
  $\norm{X\hat{\alpha}-X\alpha}_{2}\preceq_{p} \sqrt{s\log(p)}$; (ii)
  $\norm{\hat{\alpha}-(X_{\mathcal{S}^{*}}'X_{\mathcal{S}^{*}})^{-1}
    X_{\mathcal{S}^{*}}'X\alpha}_{1}\preceq_{p} s\sqrt{\log(p)/n}$; and, in
  addition, (iii) $\max_{ij}\abs{X_{ij}}s\sqrt{\log(p)/n}=o_{p}(1)$; and (iv)
  $\norm{\sum_{i}\epsilon_{i}P_{ii}X_{i}}_{\infty}+
  \norm{\sum_{i}\epsilon_{i}X_{i}}_{\infty}\preceq_{p}\sqrt{n\log(p)}$. Then,
  letting $\hat{\epsilon}_{i}=\yy_{i}-X_{i}'\hat{\alpha}$,
  \begin{equation*}
    \frac{\norm{\hat{\epsilon}}_{2}^{2}-\yy'(I-P)\yy-\sum_{i}\hat{\epsilon}_{i}^{2}P_{ii}}{
      \sqrt{2\sum_{i\neq j}\hat{\epsilon}_{i}^{2}\hat{\epsilon}_{j}^{2}P_{ij}^{2}}
    }\overset{d}{\to}\mathcal{N}(0,1).
  \end{equation*}
\end{lemma}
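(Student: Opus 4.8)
The plan is to reduce both displays to a central limit theorem for the off-diagonal quadratic form $\mathcal{Q}:=\sum_{i\neq j}\epsilon_{i}\epsilon_{j}P_{ij}$, and to show that every other contribution — to the two numerators, and to the difference between the data-dependent normalizers and the conditional variance $V:=\sum_{i\neq j}\sigma_{i}^{2}\sigma_{j}^{2}P_{ij}^{2}$, where $\sigma_{i}^{2}:=E[\epsilon_{i}^{2}\mid X]$ — is of smaller order than $\sqrt{p}\asymp\sqrt{V}$. Throughout write $\mu:=(I-P_{\mathcal{S}^{*}})X\alpha$ and let $\bar{\alpha}$ be the zero-padded short-regression coefficient $(X_{\mathcal{S}^{*}}'X_{\mathcal{S}^{*}})^{-1}X_{\mathcal{S}^{*}}'X\alpha$, so that $X\bar{\alpha}=P_{\mathcal{S}^{*}}X\alpha$. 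Idempotency of $P$ gives $\sum_{j\neq i}P_{ij}^{2}=P_{ii}(1-P_{ii})\leq 1/4$, hence $\max_{i\neq j}P_{ij}^{2}\leq 1/4$, and $\sum_{i\neq j}P_{ij}^{2}=\trace(P)-\sum_{i}P_{ii}^{2}$; together with (b)--(c) this yields $p/K\leq\sum_{i\neq j}P_{ij}^{2}\leq p$, $1/K\leq\sigma_{i}^{2}\leq\sqrt{K}$, and so $p/K^{3}\leq V\leq Kp$, while (a) and $p\asymp n$ give $\sum_{i}\epsilon_{i}^{2}=O_{p}(p)$ and $\max_{i}\epsilon_{i}^{2}=O_{p}(\sqrt{p})$.

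\textbf{Infeasible part.} Since $\mathcal{S}^{*}\subseteq\{1,\dots,p\}$, $P-P_{\mathcal{S}^{*}}$ is the orthogonal projection onto the complement of the span of $X_{\mathcal{S}^{*}}$ within the span of $X$; using $\yy=X\alpha+\epsilon$, $PX\alpha=X\alpha$, and $(P-P_{\mathcal{S}^{*}})\mu=\mu$, one gets $\mathcal{F}=\yy'(P-P_{\mathcal{S}^{*}})\yy=\norm{\mu}_{2}^{2}+2\mu'\epsilon+\epsilon'P\epsilon-\epsilon'P_{\mathcal{S}^{*}}\epsilon$, hence $\mathcal{F}-\sum_{i}\epsilon_{i}^{2}P_{ii}=\mathcal{Q}+\norm{\mu}_{2}^{2}+2\mu'\epsilon-\epsilon'P_{\mathcal{S}^{*}}\epsilon$. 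By \Cref{assumption:approx_sparsity}, $\norm{\mu}_{2}^{2}=O_{p}(s)$ and $\trace(P_{\mathcal{S}^{*}})\leq s=o(\sqrt{p})$; $\mu'\epsilon$ is conditionally mean zero with conditional variance at most $\sqrt{K}\norm{\mu}_{2}^{2}=O_{p}(s)$; and $\epsilon'P_{\mathcal{S}^{*}}\epsilon\geq 0$ has conditional mean at most $\sqrt{K}\trace(P_{\mathcal{S}^{*}})$; so $\mathcal{F}-\sum_{i}\epsilon_{i}^{2}P_{ii}=\mathcal{Q}+o_{p}(\sqrt{p})$. For the normalizer, independence gives $E[\sum_{i\neq j}\epsilon_{i}^{2}\epsilon_{j}^{2}P_{ij}^{2}\mid X]=V$, and its conditional variance — bounded using only $E[\epsilon_{i}^{4}\mid X]\leq K$, $\max_{i\neq j}P_{ij}^{2}\leq 1/4$, and $\sum_{j}P_{ij}^{2}\leq 1$ — is $O(p)=o(V^{2})$, so $\sum_{i\neq j}\epsilon_{i}^{2}\epsilon_{j}^{2}P_{ij}^{2}=V(1+o_{p}(1))$; hence the denominator in \cref{eq:infeasible} equals $\sqrt{2V}(1+o_{p}(1))$ and dominates the $o_{p}(\sqrt{p})$ remainder. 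It remains to prove $\mathcal{Q}/\sqrt{2V}\overset{d}{\to}\mathcal{N}(0,1)$; since $\mathcal{Q}=\epsilon'A\epsilon$ with $A:=P-\diag(P)$, which has zero diagonal and $\var(\mathcal{Q}\mid X)=2V$, this follows from a central limit theorem for quadratic forms in independent, not identically distributed variables with uniformly bounded fourth moments (as in de Jong's theorem for generalized quadratic forms), whose hypotheses — a Lyapunov-type bound and a trace bound — reduce here to $\sum_{i\neq j}P_{ij}^{4}$ and $\trace(A^{4})$ being $o\bigl((\sum_{i\neq j}P_{ij}^{2})^{2}\bigr)$; both follow from $\max_{i\neq j}P_{ij}^{2}\leq 1/4$, $\norm{A}_{\mathrm{op}}\leq\norm{P}_{\mathrm{op}}+\max_{i}P_{ii}=O(1)$, and $\sum_{i\neq j}P_{ij}^{2}\asymp p$ (heteroskedasticity is absorbed by replacing $A$ with $\diag(\sigma_{i})A\diag(\sigma_{i})$ and using $1/K\leq\sigma_{i}^{2}\leq\sqrt{K}$). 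Chaining the three steps gives \cref{eq:infeasible}.

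\textbf{Feasible part.} I would write $v:=X(\hat{\alpha}-\alpha)=X(\hat{\alpha}-\bar{\alpha})-\mu$, so that $\hat{\epsilon}=\epsilon-v$ and, by (ii), $\norm{v}_{2}\preceq_{p}\sqrt{s\log p}$ and $\norm{X(\hat{\alpha}-\bar{\alpha})}_{2}\preceq_{p}\sqrt{s\log p}$. Since $\yy'(I-P)\yy$ is common to $\mathcal{F}$ and to the feasible statistic, condition (i) gives $\norm{\hat{\epsilon}}_{2}^{2}-\yy'(I-P)\yy\leq\mathcal{F}+O_{p}(s\log p)$, whence the feasible numerator is at most $\bigl(\mathcal{F}-\sum_{i}\epsilon_{i}^{2}P_{ii}\bigr)+O_{p}(s\log p)-\sum_{i}(\hat{\epsilon}_{i}^{2}-\epsilon_{i}^{2})P_{ii}$. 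Now $\hat{\epsilon}_{i}^{2}-\epsilon_{i}^{2}=-2\epsilon_{i}v_{i}+v_{i}^{2}$, so $\sum_{i}(\hat{\epsilon}_{i}^{2}-\epsilon_{i}^{2})P_{ii}=-2\epsilon'\diag(P)v+\sum_{i}v_{i}^{2}P_{ii}$; here $\sum_{i}v_{i}^{2}P_{ii}\leq\norm{v}_{2}^{2}=O_{p}(s\log p)$, while $\epsilon'\diag(P)v=\bigl(\epsilon'\diag(P)X\bigr)(\hat{\alpha}-\bar{\alpha})-\epsilon'\diag(P)\mu$, whose first term is $\leq\norm{\epsilon'\diag(P)X}_{\infty}\norm{\hat{\alpha}-\bar{\alpha}}_{1}\preceq_{p}\sqrt{n\log p}\cdot s\sqrt{\log p/n}=s\log p$ by (v) and (iii), and whose second term has conditional mean zero and conditional variance at most $\sqrt{K}\norm{\mu}_{2}^{2}=O_{p}(s)$. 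As $s\log p=o(\sqrt{p})$ by \Cref{assumption:approx_sparsity}, all of this is $o_{p}(\sqrt{p})$, so the feasible numerator is at most $\bigl(\mathcal{F}-\sum_{i}\epsilon_{i}^{2}P_{ii}\bigr)+o_{p}(\sqrt{p})$. For the denominator I would expand $\hat{\epsilon}_{i}^{2}\hat{\epsilon}_{j}^{2}-\epsilon_{i}^{2}\epsilon_{j}^{2}$ in powers of $v$: each term carries at least one factor of $v$, and using $\max_{i\neq j}P_{ij}^{2}\leq 1/4$, $\sum_{i}\epsilon_{i}^{2}=O_{p}(p)$, $\max_{i}\epsilon_{i}^{2}=O_{p}(\sqrt{p})$, $\norm{v}_{2}^{2}\preceq_{p}s\log p$, and (iii)--(v), Cauchy--Schwarz bounds every term by $o_{p}(p)$; the binding case is $\bigl\lvert\sum_{i\neq j}\epsilon_{i}^{2}\epsilon_{j}v_{j}P_{ij}^{2}\bigr\rvert\leq\bigl(\sum_{i\neq j}\epsilon_{i}^{2}\epsilon_{j}^{2}P_{ij}^{2}\bigr)^{1/2}\bigl(\sum_{j}v_{j}^{2}\sum_{i\neq j}\epsilon_{i}^{2}P_{ij}^{2}\bigr)^{1/2}=O_{p}(\sqrt{p})\cdot O_{p}\bigl(p^{1/4}\sqrt{s\log p}\bigr)=o_{p}(p)$, again by $s\log p=o(\sqrt{p})$. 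Hence $\sum_{i\neq j}\hat{\epsilon}_{i}^{2}\hat{\epsilon}_{j}^{2}P_{ij}^{2}=\sum_{i\neq j}\epsilon_{i}^{2}\epsilon_{j}^{2}P_{ij}^{2}(1+o_{p}(1))$, which is of order $p$ in probability. Combining the two bounds with \cref{eq:infeasible} and Slutsky's theorem, the feasible statistic is bounded above by a sequence converging in distribution to $\mathcal{N}(0,1)$, which gives, for every $z$, $P\bigl(\,\cdot\,\leq z\bigr)\geq\Phi(z)+o(1)$.

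\textbf{Main obstacle.} I expect the technical core to be the quadratic-form CLT for $\mathcal{Q}$ under only a conditional fourth-moment bound and with self-normalization by the random $\sum_{i\neq j}\epsilon_{i}^{2}\epsilon_{j}^{2}P_{ij}^{2}$ rather than by $V$: one must verify both that the normalizer concentrates (at relative rate $1/p$) and that the CLT's moment and spreading conditions hold with no moments beyond the fourth, which is exactly where the leverage facts $\max_{i\neq j}P_{ij}^{2}\leq 1/4$, $\sum_{j}P_{ij}^{2}\leq 1$, and $\norm{P-\diag(P)}_{\mathrm{op}}=O(1)$ earn their keep (a direct martingale-difference decomposition also works but then forces one to control a triangular truncation of $P$). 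In the feasible part, the subtle point is that the only control on $\hat{\alpha}-\bar{\alpha}$ and $v$ afforded by (ii)--(v) — $\ell_{1}$/$\ell_{2}$ rates — combined with the $O_{p}(\sqrt{p})$ size of $\max_{i}\epsilon_{i}^{2}$, is only just enough: every bound is driven by the rate $s\log p=o(\sqrt{p})$ from \Cref{assumption:approx_sparsity}, with essentially no slack.
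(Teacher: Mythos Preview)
Your proposal is correct and follows essentially the same route as the paper: decompose $\mathcal{F}$, isolate the off-diagonal quadratic form $\epsilon'P\epsilon-\sum_{i}P_{ii}\epsilon_{i}^{2}$, apply a quadratic-form CLT (the paper proves this via a martingale CLT in an auxiliary lemma; you invoke de Jong, which amounts to the same verification), show the random normalizer concentrates around $V$, and then in the feasible part reduce to $\sum_{i}P_{ii}(\hat{\epsilon}_{i}^{2}-\epsilon_{i}^{2})=o_{p}(\sqrt{p})$ and $\sum_{i\neq j}(\hat{\epsilon}_{i}^{2}\hat{\epsilon}_{j}^{2}-\epsilon_{i}^{2}\epsilon_{j}^{2})P_{ij}^{2}=o_{p}(p)$.

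The one tactical difference worth flagging is how the cross-terms in the denominator expansion are handled. You bound the ``binding'' term $\sum_{i\neq j}\epsilon_{i}^{2}\epsilon_{j}v_{j}P_{ij}^{2}$ by Cauchy--Schwarz together with $\max_{i}\epsilon_{i}^{2}=O_{p}(\sqrt{p})$, which is clean and uses only condition~(ii) and the fourth-moment bound. The paper instead splits $v=X(\hat{\alpha}-\alpha)$ into the deterministic approximation error $r=(I-P_{\mathcal{S}^{*}})X\alpha$ and the estimation piece $X\tilde{\alpha}$ with $\tilde{\alpha}=\hat{\alpha}-\bar{\alpha}$: for the $r$-part it exploits conditional mean zero and bounds variances directly; for the $X\tilde{\alpha}$-part it uses the sup-norm control $\max_{j}\abs{X_{j}'\tilde{\alpha}}=o_{p}(1)$ coming from (iii)--(iv). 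The paper's decomposition gives somewhat sharper intermediate bounds (e.g.\ $O_{p}(\norm{r}_{2}^{2})$ rather than $O_{p}(\sqrt{p}\,s\log p)$) and makes explicit where each of (iii)--(v) is consumed, but both arguments close because $s\log p=o(\sqrt{p})$.
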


Conditions (a) and (d) are standard assumptions on the regression errors, while
conditions (b) and (c) impose weak restrictions on the design matrix. Condition
(b) bounds leverage away from one, which implies \Cref{assumption:an_long}~(iv),
ensuring the sample residuals $\ddot{D}_{i}$ display non-zero variation.
Condition (c) is analogous to the condition imposed in
\Cref{theorem:hausman-test}. Conditions (i) and (ii) in the second part of the
\namecref{theorem:non-normal-errors_epe} are standard mean-squared error and
$\ell_{1}$ rate conditions that hold for the lasso and post-lasso estimators
\parencite{brt09,BeCh13}. Conditions (iii) and (iv) are tail restrictions on the
covariates and residuals analogous to those in \textcite{bch14}.

\Cref{theorem:non-normal-errors_epe} implies that we can test the sparsity
assumption by calculating the lasso or post-lasso residuals
$\hat{\epsilon}_{i}$, and then checking whether the residual sum of squares of
the lasso is comparable to that of the \ac{OLS} residual sum of squares. If the
difference satisfies
\begin{equation*}
        \norm{\hat{\epsilon}}_{2}^{2}-\yy'(I-P)\yy
\geq z_{1-\alpha}\sqrt{2\sum_{i\neq j}\hat{\epsilon}_{i}^{2}\hat{\epsilon}_{j}^{2}P_{ij}^{2}}
+\sum_{i}\hat{\epsilon}_{i}^{2}P_{ii}
\end{equation*}
where $z_{1-\alpha}$ is the $1-\alpha$ quantile of the standard normal
distribution (1.645 for $\alpha=5\%$ level test), we reject
\Cref{assumption:approx_sparsity}.

In general, testing hypotheses about $\alpha$ when $\lim_{n\to\infty}p/n>0$ can
be quite involved, since plugging in regression residuals into asymptotic
variance expressions leads to bias \parencite[see, e.g.,][]{kss20,AnSo23,cjn18}.
As shown in the proof of \Cref{theorem:non-normal-errors_epe} (see
\cref{eq:replace_variance}), we avoid such difficulties here by virtue of the
fact that under the assumptions of the lemma, the lasso residuals
$\hat{\epsilon}_{i}$ are sufficiently accurate. As a result, the variance part
of the test statistic, the denominator in \cref{eq:infeasible}, can be
consistently estimated under the null hypothesis by plugging in lasso residuals.

\subsection{Empirical tests of sparsity}\label{sec:empir-tests-spars}

While \Cref{theorem:hausman-test,theorem:non-normal-errors_epe} show that the
two tests we develop control size in large samples, analysis of their power
properties is challenging, as this involves analysis of the lasso or post-lasso
estimators without imposing sparsity, and, to our knowledge, there are no
existing results in the literature one could leverage. One may therefore be
concerned that the tests are not powerful in empirically relevant settings.

To evaluate this concern, we now apply the tests developed in
\Cref{sec:hausman-test,sec:residual-test} to the empirical illustrations
considered in \Cref{sec:empir-illustr}. For each specification, we consider the
Hausman test, which compares \ac{OLS} with \ac{SBE}, as well two versions of the
residual test. The first version estimates the outcome regression
in~\cref{eq:outcome} using the post-lasso, and compares the residuals to those
based on \ac{OLS}. The second version compares post-lasso and \ac{OLS} residuals
from estimating the propensity score regression in~\cref{eq:p_score}.

\Cref{tab:t3} reports the results. Column 1 applies the test to the original
specification in each paper. We see that for 6 of the 7 outcomes, at least
one of the tests rejects the assumption of sparsity.

\begin{table}
  \renewcommand*{\arraystretch}{1.2}
  \begin{threeparttable}
    \caption{$p$-values (in percentages) for tests of the sparsity assumption under different normalizations of
      the control matrix.}\label{tab:t3}
    \small
  \begin{tabular*}{0.95\linewidth}{@{\extracolsep{\fill}}ll@{} S@{}S@{}S@{}S@{}S@{}S@{}S@{}S@{}S@{}}
    & & \multicolumn{1}{c}{Repl.}
    & \multicolumn{2}{c}{Collinearity}
    & \multicolumn{2}{c}{Powers}     & \multicolumn{2}{c}{Category sums}
    & \multicolumn{2}{c}{Offset} \\
    \cmidrule(rl){4-5}\cmidrule(rl){6-7}\cmidrule(rl){8-9}\cmidrule(rl){10-11}
    Outcome & Test  & \multicolumn{1}{c}{(1)} & \multicolumn{1}{c}{(2)} & \multicolumn{1}{c}{(3)} & \multicolumn{1}{c}{(4)} & \multicolumn{1}{c}{(5)}
& \multicolumn{1}{c}{(6)} & \multicolumn{1}{c}{(7)}
& \multicolumn{1}{c}{(8)} & \multicolumn{1}{c}{(9)}
\\
    \midrule
      \csname @@input\endcsname ./table3.tex
    \bottomrule
  \end{tabular*}
  \begin{tablenotes}
  \item{}\footnotesize\emph{Notes}: See notes to \Cref{tab:t1} for description
    of the specifications. H\@: Hausman, OR\@: residual test based on outcome
    regression, PR\@: residual test based on propensity score regression. Col.~1
    reports $p$-values under the original specification for the BCH,
    \textcite{ferrara22}, and \textcite{enke20} studies. Cols.~2--3 report the
    range of $p$-values obtained under alternative ways of dropping collinear
    columns of the control matrix. Cols.~4--5, 6--7, and 8--9 report,
    respectively, the $p$-value range under alternative normalizations of
    the controls prior to taking powers and interactions, when categorical
    variables are expressed as indicators for different subsets, and when powers
    are constructed using Hermite polynomials, with an offset $\lambda$ ranging
    between $-1$ and $1$.
  \end{tablenotes}
\end{threeparttable}
\end{table}

To check that these results are not driven by finite-sample size distortions of
these tests, \Cref{sec:monte-carlo} conducts Monte Carlo simulations based on
these applications; in these simulations, the size stays close to or below the
nominal level.

One response to these findings is to seek alternative normalizations of the
control matrix that are consistent with the sparsity assumptions. The remaining
columns in \Cref{tab:t3} report the range of $p$-values under the four different
normalizations we considered in \Cref{sec:empir-illustr}. The table shows that
in these applications, we were unable to find a normalization for five of the
seven outcomes where at least one test did not reject. This is in spite of the
large number of alternative specifications that these normalizations generate.

\section{Conclusions}\label{sec:conclusion}

We have argued, using empirical evidence and theoretical arguments, that
\acp{SBE} display a lack of robustness to the specification of the control
matrix. In the three applications we have examined, the range of variation in
the \ac{SBE} estimates under equally plausible alternative specifications is of
the same order of magnitude as sampling uncertainty. Two reasons underlie this
fragility. First, whether a small number of covariates can account for the bulk
of the confounding depends on the particular specification of the control
matrix: even if the sparsity assumption holds under a particular way of
expressing the column space of the controls, most alternative plausible
normalizations do not admit a sparse approximation. Second, it may be the case
no control matrix in a large class of normalizations admits a sparse
approximation.

What should a practitioner take away from these results? We have argued that
unless $p$ is comparable to $n$, the potential efficiency gains of \acp{SBE}
over \ac{OLS} are limited. Simply reporting \ac{OLS} with standard errors that
are robust to the presence of many controls
\parencite[e.g.,][]{cjn18,jochmans22,DoSu18,dadamo19} will deliver credible
inference at little efficiency loss. When $p$ is comparable to $n$, \ac{OLS}
becomes too noisy to be useful, and infeasible when the covariate dimension
exceeds the sample size. In such cases informative inference requires
restricting the control vector, either by making sparsity assumptions, limiting
the magnitude of the control coefficients \parencite[e.g.,][]{LiMu21,akk20}, or
combining both assumptions \parencite[e.g.,][]{chl17}. If the high
dimensionality of the controls arises because one wishes to flexibly control for
a moderate-dimensional control vector, another possibility is to impose
smoothness restrictions on the outcome and propensity score equations and use
debiased machine learning \parencite[e.g.,][]{ccddhnr18}. All these alternatives
involve substantive modeling restrictions, and as such, they need to be
discussed and defended on substantive grounds, analogous to the discussion of
other key assumptions, such as selection on observables. In particular,
researchers who opt to leverage \acp{SBE} need to explain why sparsity should
plausibly hold under the chosen specification of the control matrix, and not
leave normalization choices to statistical software. The sparsity tests
developed in this paper can serve as a complement to these arguments, provided
they are used as a model specification check rather than a pretest.

We have focused on the sparsity assumption and \acp{SBE} because these
estimators are used frequently, and their theory is well-developed. However,
many other modern machine learning methods likewise lack invariance to linear
reparametrization of the control matrix. When these methods are used for
prediction, this lack of invariance is less important, for two reasons. First,
one is typically interested in average performance over many predictions, and
the overall prediction performance may be robust even if individual predictions
are sensitive to normalizations. Second, one can gauge the performance of a
given procedure directly using a test sample. When we incorporate these methods
into econometric models, however, we are typically interested in inference on a
single causal effect, and test sample benchmarking is unavailable. Understanding
more generally when a lack of invariance leads to fragility is an interesting
area for future research.

\begin{appendices}

\renewcommand{\figurename}{Appendix Figure}
\setcounter{figure}{0}
\renewcommand{\tablename}{Appendix Table}
\setcounter{table}{0}

  \section{Auxiliary results}\label{sec:auxiliary-results}
  For the next two results, we consider a quadratic form $\psi'H\psi$, where $H$
  is an orthogonal projection matrix with rank bounded by $r$. Conditional on
  some $\sigma$-algebra $\mathcal{Z}_{n}$, the elements of $\psi$ are
  independent and mean zero, and $H$ is non-random. We will prove a law of large
  numbers and a central limit theorem for $\psi'H\psi$.

\begin{lemma}\label{theorem:quadratic_form}
  Suppose that uniformly over $i$,
  $E[\abs{\psi_{i}}^{2+\eta}\mid \mathcal{Z}_{n}]\leq K$ for some $K>0$ and some
  $\eta\in[0,2]$. Then
  \begin{equation*}
    \psi'H\psi=E[\psi'H\psi\mid\mathcal{Z}_{n}]+O_{p}(r^{1/2}+r^{2/(2+\eta)}).
  \end{equation*}
\end{lemma}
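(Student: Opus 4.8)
The plan is to decompose $\psi'H\psi$ into its conditional mean plus two fluctuation terms — a "diagonal" term and an "off-diagonal" term — and bound each using moment inequalities. Write $H = (h_{ij})$ and note that, since $H$ is an orthogonal projection of rank at most $r$, we have $\sum_i h_{ii} = \trace(H) \le r$, $\sum_{ij} h_{ij}^2 = \trace(H) \le r$ (using $H^2 = H$, $H = H'$), and $0 \le h_{ii} \le 1$ for all $i$. Then
\begin{equation*}
  \psi'H\psi - E[\psi'H\psi\mid\mathcal{Z}_n]
  = \sum_i h_{ii}\bigl(\psi_i^2 - E[\psi_i^2\mid\mathcal{Z}_n]\bigr)
  + \sum_{i\neq j} h_{ij}\psi_i\psi_j,
\end{equation*}
using that the cross terms have conditional mean zero by independence and mean-zero-ness of the $\psi_i$.

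For the off-diagonal term, condition on $\mathcal{Z}_n$ and compute its conditional second moment: by independence it equals $\sum_{i\neq j} h_{ij}^2\, E[\psi_i^2\mid\mathcal{Z}_n]\,E[\psi_j^2\mid\mathcal{Z}_n] \le K^{2/(2+\eta)}\cdot K^{2/(2+\eta)} \sum_{ij}h_{ij}^2 \le C r$ — here I use that $E[\psi_i^2\mid\mathcal{Z}_n] \le (E[\psi_i^{2+\eta}\mid\mathcal{Z}_n])^{2/(2+\eta)} \le K^{2/(2+\eta)}$ by Jensen when $\eta>0$ (and it is immediate when $\eta=0$). Hence the off-diagonal term is $O_p(r^{1/2})$ by Chebyshev/Markov. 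For the diagonal term, the summands $h_{ii}(\psi_i^2 - E[\psi_i^2\mid\mathcal{Z}_n])$ are conditionally independent and mean zero; if $\eta \ge 2$ we could bound the conditional variance directly by $\sum_i h_{ii}^2 E[\psi_i^4\mid\mathcal{Z}_n] \le K\sum h_{ii} \le Kr$, giving $O_p(r^{1/2})$. But the lemma only assumes a $(2+\eta)$-th moment with $\eta$ possibly less than $2$, so the variance of $\psi_i^2$ need not exist; instead I would apply the Marcinkiewicz–Zygmund / von Bahr–Esseen inequality at exponent $q = (2+\eta)/2 \in [1,2]$ to the conditionally independent mean-zero summands $\xi_i := h_{ii}(\psi_i^2 - E[\psi_i^2\mid\mathcal{Z}_n])$. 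This yields $E\bigl[\bigl|\sum_i\xi_i\bigr|^q\bigm|\mathcal{Z}_n\bigr] \le C\sum_i E[|\xi_i|^q\mid\mathcal{Z}_n] \le C\sum_i h_{ii}^q\, E[\psi_i^{2q}\mid\mathcal{Z}_n] \cdot 2^q \le C' \sum_i h_{ii} \le C' r$ (using $h_{ii}^q \le h_{ii}$ since $0\le h_{ii}\le 1$ and $q\ge 1$, and $E[\psi_i^{2q}\mid\mathcal{Z}_n] = E[\psi_i^{2+\eta}\mid\mathcal{Z}_n] \le K$). Markov's inequality then gives $\sum_i\xi_i = O_p(r^{1/q}) = O_p(r^{2/(2+\eta)})$.

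Combining the two bounds, $\psi'H\psi - E[\psi'H\psi\mid\mathcal{Z}_n] = O_p(r^{1/2} + r^{2/(2+\eta)})$, which is the claim. (Note that when $\eta \ge 2$ the exponent $2/(2+\eta)\le 1/2$ so the $r^{1/2}$ term dominates and one recovers the usual rate; the statement keeps both terms to cover the low-moment regime uniformly.) The only slightly delicate point is the handling of the diagonal term under a moment exponent below $4$: the key obstacle is recognizing that one cannot use a variance bound there and must invoke a Marcinkiewicz–Zygmund-type inequality at exponent $(2+\eta)/2$ — everything else is a routine application of Jensen, Markov, and the trace identities for projections. Unconditioning is automatic since the $O_p$ bounds hold conditionally on $\mathcal{Z}_n$ with constants not depending on $\mathcal{Z}_n$.
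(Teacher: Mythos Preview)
Your proof is correct and follows essentially the same route as the paper: the same diagonal/off-diagonal decomposition, a second-moment (Chebyshev) bound on the off-diagonal term using $\sum_{i\neq j}h_{ij}^2\le r$, and the von Bahr--Esseen inequality at exponent $1+\eta/2$ on the diagonal term to get $O_p(r^{2/(2+\eta)})$. The paper's write-up is terser but the ingredients and structure are identical.
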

\begin{proof}
  Write
  \begin{equation*}
    \psi'H \psi-E[\psi'H\psi\mid\mathcal{Z}_{n}]=
    \sum_{i}(\psi_{i}^{2}-E[\psi_{i}^{2}\mid\mathcal{Z}_{n}])H_{ii}+2\sum_{i<j}\psi_{i}\psi_{j}H_{ij}=:T_{1}+T_{2}.
  \end{equation*}
  By iterated expectations, and the inequality of \textcite{vBEs65},
  \begin{equation*}
    E\abs*{T_{1}}^{1+\eta/2}
    \leq 2E\sum_{i}\abs{\psi_{i}^{2}-E[\psi_{i}^{2}\mid\mathcal{Z}_{n}]}^{1+\eta/2}\abs{H_{ii}}^{1+\eta/2}
    \leq 2Kr,
  \end{equation*}
  so by Markov's inequality, $T_{1}=O_{p}(r^{2/(2+\eta)})$. The term $T_{2}$ is
  mean zero with variance
  \begin{equation*}
    4\sum_{i<j}E[\psi_{i}^{2}\psi_{j}^{2}H_{ij}^{2}]\preceq r,
  \end{equation*}
  so by Markov's inequality, $T_{2}=O_{p}(r^{1/2})$.
\end{proof}

\begin{lemma}\label{sec:qf_clt}
  Let
  $\omega^{2}=2\sum_{i\neq j}H_{ij}^{2}E[\psi_{i}^{2}\psi_{j}^{2}\mid
  \mathcal{Z}_{n}]$. Suppose that for some constant $K\geq 1$, (a)
  $r/\omega^{2}\leq K$, and (b) $E[\psi_{i}^{4}\mid\mathcal{Z}_{n}]\leq K$ a.s.
  Then, as $r\to\infty$,
  \begin{equation*}
    \frac{\psi'H\psi-\sum_{i}H_{ii}\psi_{i}^{2}}{\omega}\overset{d}{\to}\mathcal{N}(0,1).
  \end{equation*}
\end{lemma}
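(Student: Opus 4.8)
The plan is to write the centered quadratic form as a martingale difference sum and invoke a martingale central limit theorem, carrying out all moment computations conditionally on $\mathcal{Z}_{n}$ (where the $\psi_{i}$ are genuinely independent and $H$ is genuinely fixed) and then removing the conditioning at the end. Fix the natural ordering of the indices, set $\mathcal{F}_{k}=\sigma(\mathcal{Z}_{n},\psi_{1},\dotsc,\psi_{k})$, $c_{k}=E[\psi_{k}^{2}\mid\mathcal{Z}_{n}]$, and $Y_{k}=2\psi_{k}\sum_{j<k}H_{kj}\psi_{j}$, so that $\psi'H\psi-\sum_{i}H_{ii}\psi_{i}^{2}=2\sum_{i<j}H_{ij}\psi_{i}\psi_{j}=\sum_{k=1}^{n}Y_{k}$. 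Because $\psi_{k}$ is mean zero and, given $\mathcal{Z}_{n}$, independent of $(\psi_{1},\dotsc,\psi_{k-1})$, we have $E[Y_{k}\mid\mathcal{F}_{k-1}]=0$, so $\{Y_{k},\mathcal{F}_{k}\}$ is a martingale difference array; and since $\omega$ is $\mathcal{Z}_{n}$-measurable (hence $\mathcal{F}_{k-1}$-measurable for every $k\geq1$), so is $\{Y_{k}/\omega,\mathcal{F}_{k}\}$. It then suffices to verify the two standard conditions of a martingale CLT (Brown, 1971; Hall and Heyde, 1980, Theorem 3.2): (i) the summed conditional variances converge, $V_{n}^{2}/\omega^{2}\overset{p}{\to}1$ with $V_{n}^{2}=\sum_{k}E[Y_{k}^{2}\mid\mathcal{F}_{k-1}]$; and (ii) a conditional Lindeberg condition, for which the Lyapunov-type bound $\omega^{-4}\sum_{k}E[Y_{k}^{4}\mid\mathcal{F}_{k-1}]\overset{p}{\to}0$ is enough, since $\sum_{k}E[(Y_{k}/\omega)^{2}\1{\abs{Y_{k}/\omega}>\varepsilon}\mid\mathcal{F}_{k-1}]\leq\varepsilon^{-2}\omega^{-4}\sum_{k}E[Y_{k}^{4}\mid\mathcal{F}_{k-1}]$.

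For (i), conditional independence gives $E[\psi_{k}^{2}\mid\mathcal{F}_{k-1}]=c_{k}$, so $E[Y_{k}^{2}\mid\mathcal{F}_{k-1}]=4c_{k}\bigl(\sum_{j<k}H_{kj}\psi_{j}\bigr)^{2}$, and taking a further conditional expectation yields $E[V_{n}^{2}\mid\mathcal{Z}_{n}]=4\sum_{j<k}H_{kj}^{2}c_{j}c_{k}$, which equals $\omega^{2}$ because $E[\psi_{i}^{2}\psi_{j}^{2}\mid\mathcal{Z}_{n}]=c_{i}c_{j}$ for $i\neq j$. It remains to show $\var(V_{n}^{2}\mid\mathcal{Z}_{n})=o(\omega^{4})$. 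Writing $V_{n}^{2}=4\sum_{k}c_{k}\sum_{j,l<k}H_{kj}H_{kl}\psi_{j}\psi_{l}$ and expanding the conditional variance produces finitely many sums of products of at most eight entries of $H$ against moments of the $\psi_{i}$ of order at most four; with the bounded-fourth-moment hypothesis (b) (which also gives $c_{i}\leq\sqrt{K}$) these reduce to two types of term, one bounded via idempotency by $\sum_{k,k'}\bigl(\sum_{j}H_{kj}H_{k'j}\bigr)^{2}=\trace(H^{2})=\trace H\leq r$, and the other bounded by $\sum_{j}\bigl(\sum_{k}H_{kj}^{2}\bigr)^{2}=\sum_{j}H_{jj}^{2}\leq\trace H\leq r$. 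Hence $\var(V_{n}^{2}\mid\mathcal{Z}_{n})=O(r)$, and since hypothesis (a) gives $\omega^{2}\geq r/K\to\infty$, we obtain $E[(V_{n}^{2}/\omega^{2}-1)^{2}\mid\mathcal{Z}_{n}]=O(r/\omega^{4})=O(1/\omega^{2})\to0$, which yields (i).

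For (ii), $E[\psi_{k}^{4}\mid\mathcal{Z}_{n}]\leq K$ gives $E[Y_{k}^{4}\mid\mathcal{F}_{k-1}]\leq16K\bigl(\sum_{j<k}H_{kj}\psi_{j}\bigr)^{4}$; for an independent mean-zero sum the direct fourth-moment expansion gives $E\bigl[(\sum_{j<k}H_{kj}\psi_{j})^{4}\mid\mathcal{Z}_{n}\bigr]\leq K\sum_{j<k}H_{kj}^{4}+3K\bigl(\sum_{j<k}H_{kj}^{2}\bigr)^{2}\leq4KH_{kk}^{2}\leq4KH_{kk}$, using $\sum_{j}H_{kj}^{2}=H_{kk}\leq1$. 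Summing over $k$ and using $\trace H\leq r$, $E\bigl[\sum_{k}E[Y_{k}^{4}\mid\mathcal{F}_{k-1}]\mid\mathcal{Z}_{n}\bigr]\leq64K^{2}r=o(\omega^{4})$, so Markov's inequality gives (ii). With (i) and (ii) established a.s.\ conditionally on $\mathcal{Z}_{n}$, the martingale CLT gives $E\bigl[\exp\bigl(it\omega^{-1}(\psi'H\psi-\sum_{i}H_{ii}\psi_{i}^{2})\bigr)\mid\mathcal{Z}_{n}\bigr]\to e^{-t^{2}/2}$ a.s., and bounded convergence removes the conditioning, giving the stated convergence in distribution.

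The main obstacle is the variance bound in step (i): one must carefully enumerate the cross terms in $\var(V_{n}^{2}\mid\mathcal{Z}_{n})$ — in particular keeping track of how the four summation indices in $S_{k}^{2}S_{k'}^{2}$ coincide — and repeatedly use the idempotency identity $\sum_{j}H_{ij}H_{kj}=H_{ik}$ together with $\sum_{j}H_{ij}^{2}=H_{ii}$, $0\leq H_{ii}\leq1$ and $\trace H\leq r$ to collapse every surviving sum to something of order $r$. Everything else (the martingale structure, the Lyapunov step, and the conditional-to-unconditional passage) is routine once that bound is in hand. An alternative route would be to quote an off-the-shelf central limit theorem for generalized quadratic forms (e.g., de Jong, 1987) and merely check its hypotheses, but the martingale argument is self-contained and exploits the projection structure directly.
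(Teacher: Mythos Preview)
Your approach is essentially the paper's: the same martingale decomposition $Y_{k}=2\psi_{k}\sum_{j<k}H_{kj}\psi_{j}$, the same Lyapunov bound for the fourth moments, and the same reduction of the conditional-variance step to showing $\var(V_{n}^{2}\mid\mathcal{Z}_{n})=O(r)$. One small wrinkle in your sketch: when you expand $V_{n}^{2}-\omega^{2}$, the off-diagonal piece is (up to constants) $\sum_{l<j}\psi_{j}\psi_{l}\,b_{jl}$ with $b_{jl}=\sum_{k>j}c_{k}H_{kj}H_{kl}$, and its variance involves $\sum_{j,l}b_{jl}^{2}$; because of the triangular restriction $k>j$, the inner sum does \emph{not} collapse to $H_{kk'}$ by idempotency, so the bound $\sum_{k,k'}(\sum_{j}H_{kj}H_{k'j})^{2}$ you quote does not arise directly. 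The paper handles this by writing $b=L'\diag(c)H$ with $L_{kj}=H_{kj}\1{k>j}$ and using $\norm{H\diag(c)L}_{F}\leq\norm{H\diag(c)}_{\mathrm{op}}\norm{L}_{F}\leq\sqrt{K}\,\norm{H}_{F}=\sqrt{Kr}$, which delivers the $O(r)$ bound cleanly; this is exactly the ``careful enumeration'' you flag as the main obstacle.
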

\begin{proof}
  Write
  $(\psi'H\psi-\sum_{i}H_{ii}\psi_{i}^{2})/\omega=\sum_{i}\mathcal{Y}_{i}$,
  where $\mathcal{Y}_{i}=\frac{2}{\omega}\psi_{i}\sum_{j=1}^{i-1}\psi_{j}H_{ij}$
  is a martingale difference array with respect to the filtration
  $\mathcal{F}_{i}=\sigma(\psi_{1}, \dotsc, \psi_{i}, \mathcal{Z}_{n})$. By the
  martingale central limit theorem, it suffices to verify the Lyapunov condition
  $\sum_{i}E[\mathcal{Y}_{i}^{4}]\to 0$, and convergence of the conditional
  variance, $\sum_{i}E[\mathcal{Y}_{i}^{2}\mid \mathcal{F}_{i-1}]=1+o_{p}(1)$.

  The Lyapunov condition follows from the bound
  \begin{multline*}
    \sum_{i}E[\mathcal{Y}_{i}^{4}]
    \leq
    \frac{2^{4}}{\omega^{4}}E\left[\psi^{4}_{i}\cdot 3\sum_{j=1}^{i-1}\psi_{j}^{2}H_{ij}^{2}
      \sum_{k=1}^{i-1}\psi_{k}^{2}H_{ik}^{2}\right]\\
    \leq \sum_{i}E
    \frac{48 K^{3}}{\omega^{4}}\sum_{j=1}^{i-1}\sum_{k=1}^{i-1}
    H_{ij}^{2}
    H_{ik}^{2}
    \leq
    \frac{48 K^{3}}{\omega^{4}}r
    \leq
    \frac{48 K^{5}}{r}\to 0,
  \end{multline*}
  where the last inequality uses condition (a).

  To show convergence of the conditional variance, decompose
  \begin{multline*}
    \sum_{i}E[\mathcal{Y}_{i}^{2}\mid \mathcal{F}_{i-1}]-1
    =
    \frac{4}{\omega^{2}}\sum_{i=1}^{n}\sum_{j=1}^{i-1}E[\psi_{i}^{2}\mid\mathcal{Z}_{n}](\psi_{j}^{2}
    -E[\psi_{j}^{2}\mid\mathcal{Z}_{n}])H_{ij}^{2}\\
    + \frac{8}{\omega^{2}}\sum_{i=1}^{n}E[\psi_{i}^{2}\mid\mathcal{Z}_{n}]\sum_{j=1}^{i-1}\psi_{j}H_{ij}
    \sum_{k=1}^{j-1}\psi_{k}H_{ik}=: T_{1} + T_{2}.
  \end{multline*}
  We have
  \begin{equation*}
    E[T_{1}^{2}\mid \mathcal{Z}_{n}]  \leq \frac{16}{\omega^{4}}\sum_{j=1}^{n}K^{3}\left(\sum_{i=j+1}^{n}H_{ij}^{2}\right)^{2}
    \leq \frac{16 K^{5}}{r},
  \end{equation*}
  so that $T_{1}=o_{p}(1)$ as $r\to\infty$ by Markov's inequality. It remains to
  show that $T_{2}=o_{p}(1)$. Let $\norm{A}_{F}$ denote the Frobenius norm of a
  matrix, and let $L_{ij}=\1{i>j}H_{ij}$. The second moment of $T_{2}$ can be
  bounded as
  \begin{multline*}
    E[T_{2}^{2}\mid \mathcal{Z}_{n}]=
    \frac{64}{\omega^{4}}\sum_{k>j}
    E[\psi_{j}^{2}\psi_{k}^{2}\mid\mathcal{Z}_{n}]\left(\sum_{i=1}^{n}
      E[\psi_{i}^{2}\mid\mathcal{Z}_{n}]L_{ij}H_{ik}\right)^{2}\\
    \leq
    \frac{64 K^{4}}{r^{2}}\sum_{j, k}
    \left(\sum_{i=1}^{n} E[\psi_{i}^{2}\mid\mathcal{Z}_{n}]L_{ij}H_{ik}\right)^{2}\\
    =\frac{64 K^{4}}{r^{2}} \norm{H \diag(E[\psi_{i}^{2}\mid\mathcal{Z}_{n}])L}_{F}^{2}
    \preceq
    \frac{\norm{L}_{F}^{2}}{r^{2}}\leq
    \frac{\norm{H}_{F}^{2}}{r^{2}}=\frac{1}{r},
  \end{multline*}
  where the first inequality uses conditions (a) and (b), and the second
  inequality applies the inequality
  $\norm{AL}_{F}^{2}\leq \lambda_{\max}(A'A)\norm{L}_{F}^{2}$ twice. The claim
  then follows by Markov's inequality.
\end{proof}
\begin{lemma}\label{lem:BernoulliMat}
  Let $\mathcal{B}_{p}^{K}$ be a $K\times p$ matrix with i.i.d.~Bernoulli$(q)$
  entries, $0<q\leq1/2$, and assume $0<\lim_{p\rightarrow \infty}K/p\leq 1$. For
  any $\varepsilon >0$, there exist $C_{1}, C_{2}>0$ only depending on
  $\varepsilon$ and $q$ such that
  \begin{equation*}
    P\left(\min_{\norm{v}_{2}\leq \lfloor C_{1}p\rfloor, \norm{v}_{2}=1}
      \norm{\mathcal{B}_{p}^{K}v}_{2}^{2}\leq C_{2}p\right) < (1-q+\varepsilon)^{K}
  \end{equation*}
  for all large enough $p$.
\end{lemma}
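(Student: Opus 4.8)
The statement is a restricted‑invertibility bound for the rectangular matrix $\mathcal{B}_{p}^{K}$, the first constraint being $\norm{v}_{0}\le\lfloor C_{1}p\rfloor$ rather than a norm bound: the smallest value of $\norm{\mathcal{B}_{p}^{K}v}_{2}$ over $\lfloor C_{1}p\rfloor$‑sparse unit vectors must exceed $\sqrt{C_{2}p}$ outside an event of probability at most $(1-q+\varepsilon)^{K}$. The plan is the net‑plus‑small‑ball scheme standard in the analysis of the smallest singular value of random matrices (as developed by Litvak, Pajor, Rudelson, and Tomczak‑Jaegermann and by Rudelson and Vershynin), splitting the admissible directions into \emph{compressible} ones --- within a small constant $\ell_{2}$‑distance of an $m$‑sparse vector, $m$ a large fixed integer --- and the rest, the \emph{incompressible} directions. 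A preliminary step is to neutralize the operator norm, which for $\mathcal{B}_{p}^{K}$ is of order $p$ (because of its rank‑one mean) and would otherwise force an unmanageably fine net: writing $\mathcal{B}_{p}^{K}=q\mathbf{1}_{K}\mathbf{1}_{p}'+\bar{\mathcal{B}}$ and completing the square gives $\norm{\mathcal{B}_{p}^{K}v}_{2}^{2}\ge\norm{\bar{\mathcal{B}}v}_{2}^{2}-K^{-1}\langle Z,v\rangle^{2}$ with $Z=\bar{\mathcal{B}}'\mathbf{1}_{K}$, and $\bar{\mathcal{B}}$ has operator norm $O(\sqrt{p})$ off an event of probability $(1-q+\varepsilon)^{K}$ (the spectral norm of a $[0,1]$‑matrix concentrates with $O(1)$ fluctuations about a mean of order $\sqrt{p}$, so taking a large multiple of $\sqrt{p}$ yields the rate). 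One is thus reduced to lower‑bounding $\norm{\bar{\mathcal{B}}v}_{2}^{2}$ uniformly over sparse unit $v$, with a constant‑mesh net now sufficing, while separately controlling the correction $K^{-1}\langle Z,v\rangle^{2}$ --- $Z$ having independent mean‑zero $O(\sqrt{K})$‑subgaussian coordinates --- which is at most a small fraction of $p$ once $C_{1}$ is small.

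The rate $(1-q+\varepsilon)^{K}$ comes from the compressible directions. For an $m$‑sparse unit vector $u$ with support $S$, $\norm{\bar{\mathcal{B}}v}_{2}^{2}$ is at least the squared smallest singular value of the $K\times m$ block $\bar{\mathcal{B}}_{:,S}$; discretizing the unit sphere of $\mathbb{R}^{m}$ by a net of size depending only on $m$, one bounds for each fixed net direction $u$ the lower tail of $\norm{\bar{\mathcal{B}}_{:,S}u}_{2}^{2}=\sum_{k=1}^{K}\langle\bar{w}^{(k)}_{S},u\rangle^{2}$, a sum of $K$ i.i.d.\ terms. Conditioning on all coordinates of $\bar{w}^{(k)}_{S}$ except the one matching the largest entry of $u$ leaves a shifted Bernoulli, so the two candidate values of $\langle\bar{w}^{(k)}_{S},u\rangle$ differ by at least $m^{-1/2}$, whence the smaller‑in‑modulus of them is attained with probability at most $\max(q,1-q)=1-q$ --- this is exactly where $q\le 1/2$ is used, and for $q>1/2$ the direction $u\propto e_{i}-e_{j}$ produces the strictly larger atom $q^{2}+(1-q)^{2}>1-q$ and the claimed rate genuinely fails. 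Hence $E\exp(-\lambda\langle\bar{w}_{S},u\rangle^{2})\le(1-q)+e^{-\lambda/(4m)}$, and a Chernoff bound with $\lambda$ a suitable multiple of $m$ gives $P(\norm{\bar{\mathcal{B}}_{:,S}u}_{2}^{2}\le tp)\le e^{c(m)tp}(1-q+\varepsilon/2)^{K}$; for $t$ a small enough constant the prefactor is swallowed, a union over the $\binom{p}{m}=\mathrm{poly}(p)$ supports and the $m$‑net costs only $e^{o(K)}$, and approximating a general compressible $v$ by such a $u$ perturbs $\norm{\bar{\mathcal{B}}v}_{2}$ by $\norm{\bar{\mathcal{B}}}_{\mathrm{op}}$ times the (constant) compressibility radius. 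The compressible contribution to the failure probability is then at most $\tfrac12(1-q+\varepsilon)^{K}$.

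For incompressible $v$, discarding its $m$ largest coordinates leaves a residual of $\ell_{2}$‑mass at least the compressibility radius with every coordinate at most $(m+1)^{-1/2}$; conditioning on the discarded coordinates, $\langle\bar{w}^{(k)},v\rangle$ is then a sum of independent bounded summands with Lyapunov ratio $O(m^{-1/2})$, hence within $O(m^{-1/2})$ in Kolmogorov distance of a Gaussian of variance $q(1-q)$, uniformly over the net. Its concentration function at a constant scale is therefore below any prescribed $\delta$ for $m$ large, so a Chernoff bound gives a per‑direction lower‑tail probability $e^{\kappa p}(2\delta)^{K}$ with $\kappa$ small --- an exponential rate $\log\!\big(1/(2\delta)\big)$ that can be made as large as desired --- and this dominates the constant‑mesh net entropy $O\!\big(C_{1}\log(1/C_{1})\,p\big)$ once $C_{1}$ is a small enough constant. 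The incompressible contribution is thus also at most $\tfrac12(1-q+\varepsilon)^{K}$. Adding back the negligible events on which $\norm{\bar{\mathcal{B}}}_{\mathrm{op}}$ or $K^{-1}\langle Z,v\rangle^{2}$ is large completes the argument; the result extends to the sparse regime what \textcite{tikhomirov20} establish for exactly singular square Bernoulli matrices, their $(1-q+o(1))^{n}$ rate being the $C_{1}\to 0$ shadow of the bound here.

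The main obstacle is the joint calibration of $C_{1}$, $C_{2}$, $m$, the compressibility radius, the net mesh, and the Chernoff scales, so that in each regime the exponential per‑direction rate strictly beats the combinatorial cost of its union bound while the aggregate stays below $(1-q+\varepsilon)^{K}$ --- against the competing constraint that the operator‑norm bound and the all‑ones‑direction correction force the singular‑value targets to be honest (not arbitrarily small) constants even though the compressible estimate only controls $\norm{\bar{\mathcal{B}}_{:,S}u}_{2}^{2}$ down to order $p/m$. Reconciling this with the large‑$m$ requirement of the incompressible step is what makes the argument delicate; the same tension surfaces in the borderline directions --- carried by a bounded‑but‑not‑tiny number of moderate coordinates --- where one must check that peeling off $m$ coordinates still leaves a residual to which the Lindeberg/Littlewood--Offord step applies, which is why the compressible class is defined by proximity to $m$‑sparse vectors for a large but fixed $m$ rather than for $m=1$ or $2$.
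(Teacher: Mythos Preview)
Your proposal and the paper's proof share the same underlying machinery: the paper simply invokes Proposition~3.6 of \textcite{tikhomirov20} and records two places where the argument must be adjusted because the matrix is $K\times p$ rather than $(p-1)\times p$, while you reconstruct the compressible/incompressible net-plus-small-ball scheme from scratch. At the level of strategy the two coincide.

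There is, however, a genuine gap in your handling of the rank-one mean. Your completing-the-square bound is $\norm{\mathcal{B}_{p}^{K}v}_{2}^{2}\ge\norm{\bar{\mathcal{B}}v}_{2}^{2}-K^{-1}\langle Z,v\rangle^{2}=\norm{(I-K^{-1}\mathbf{1}_{K}\mathbf{1}_{K}')\bar{\mathcal{B}}v}_{2}^{2}$, and you assert the correction is ``at most a small fraction of $p$ once $C_{1}$ is small.'' But $\langle Z,v\rangle$ is $O(\sqrt{K})$-subgaussian for each fixed unit $v$, so to push the tail below $(1-q+\varepsilon)^{K}$ uniformly over a net of $\lfloor C_{1}p\rfloor$-sparse directions one needs deviations of order $K$, giving $K^{-1}\langle Z,v\rangle^{2}\lesssim K\asymp p$ with a constant that does \emph{not} vanish as $C_{1}\to 0$ (the $e^{-cK}$ tail requirement, not the net entropy, is what pins the constant). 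Your compressible bound, on the other hand, delivers only $\norm{\bar{\mathcal{B}}u}_{2}^{2}\gtrsim c(\varepsilon)\,p/m$ with $c(\varepsilon)\to 0$ as $\varepsilon\to 0$; subtracting a fixed-constant correction can wipe this out. You correctly flag the calibration as ``the main obstacle,'' but the sketch does not show the constants close, and on the face of it they do not.

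The paper's modification notes hint at how Tikhomirov avoids this: he treats $|\sum_{i}v_{i}|$ directly rather than centering. When $|\mathbf{1}_{p}'v|$ exceeds a threshold, the deterministic mean component $q\sqrt{K}\,|\mathbf{1}_{p}'v|$ already forces $\norm{\mathcal{B}_{p}^{K}v}_{2}$ to be large on the event $\norm{\bar{\mathcal{B}}}_{\mathrm{op}}=O(\sqrt{p})$; when it is below threshold, the cross term $2q(\mathbf{1}_{p}'v)\langle Z,v\rangle$ is controlled by that bounded prefactor rather than by a worst-case square, and the compressible small-ball estimate can be run on $\mathcal{B}$ itself. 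Recasting your argument along these lines---a case split on $|\mathbf{1}_{p}'v|$ in place of the global centering---should close the gap; as written, the centering step does not.
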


\begin{proof}
  The proof follows from the same arguments as Proposition 3.6 of
  \textcite{tikhomirov20}. There are only two instances in the proof where
  $p-K\neq 1$ matters: first, in the second to last displayed inequality on
  Tikhomirov's page 601, the second inequality now invokes Lemma 3.5 with $m=K$,
  so in his notation, the two $n-1$ terms are replaced by $K$ (where his $n$
  corresponds to our $p$). Second, in the last displayed inequality on page 601,
  the $n-1$ term is again replaced by $K$. Since
  $\lim_{n\rightarrow \infty}K/p=c_{1}>0$, the second inequality now still
  holds for all $|\sum x_{i}|$ that are larger by a factor of $2/\sqrt{c_{1}}$,
  at least for all large enough $n$. This only affects the constant $C$ in
  Tikhomirov's first displayed inequality on page 602, and the result follows as
  in Tikhomirov's proof.
\end{proof}

\begin{lemma}\label{theorem:F_approx_clust}
  Suppose that \Cref{assumption:approx_sparsity} holds, and that condition (a),
  and conditions (i), (ii), and (iv) of \Cref{theorem:non-normal-errors_epe}
  hold. Then
  \begin{equation*}
    \norm{\yy-X\hat{\alpha}}_{2}^{2}-
    \yy'(I-P_{\mathcal{S}^{*}})\yy=O_{p}(s\log(p)).
  \end{equation*}
\end{lemma}
\begin{proof}[Proof of \Cref{theorem:F_approx_clust}]
  Letting $r=(I-P_{\mathcal{S}^{*}})X\alpha$, and
  $\tilde{\alpha}=\hat{\alpha}-(X_{\mathcal{S}^{*}}'X_{\mathcal{S}^{*}})^{-1}X_{\mathcal{S}^{*}}'X\alpha$,
  we may write,
  \begin{equation*}
    \begin{split}
      \norm{\yy-X\hat{\alpha}}_{2}^{2}-
      \yy'(I-P_{\mathcal{S}^{*}})\yy
      &=
        \norm{X\alpha-X\hat{\alpha}}_{2}^{2}+2\epsilon'(X\alpha-X\hat{\alpha})+\epsilon'\epsilon
        -\norm{(I-P_{\mathcal{S}^{*}})\epsilon+r}_{2}^{2}\\
      &= \norm{X\alpha-X\hat{\alpha}}_{2}^{2}-2\epsilon'X\tilde{\alpha}
        +\epsilon'P_{\mathcal{S}^{*}}\epsilon-\norm{r}_{2}^{2}.
    \end{split}
  \end{equation*}
  Hence, by \cref{eq:eps_Ps_eps}, \Cref{assumption:approx_sparsity}, and
  condition (i) of \Cref{theorem:non-normal-errors_epe},
  \begin{equation*}
    \abs*{\norm{\yy-X\hat{\alpha}}_{2}^{2}-
      \yy'(I-P_{\mathcal{S}^{*}})\yy}
    \preceq_{p} s\log(p)+\norm{\epsilon'X}_{\infty}\norm{\tilde{\alpha}}_{1}\preceq_{p}
    s\log(p),
  \end{equation*}
  where the second inequality uses conditions (ii) and (iv) of
  \Cref{theorem:non-normal-errors_epe}.
\end{proof}

\section{Proofs}
\allowdisplaybreaks%

\begin{proof}[Proof of \Cref{thm:rotation}]
  Let $A_{p}$ be the probability of the event that the model with regressor
  $\mathcal{R}W_{i}$ satisfies \cref{eq:approx_sparsity1,eq:approx_sparsity2},
  and let
  \begin{equation*}
    r_{s, \mathcal{R}}=\min_{\norm{v}_{0}\leq s}E[(W_{i}'\mathcal{R}'\mathcal{R}\gamma
    -W_{i}^{\prime}\mathcal{R}^{\prime}v)^{2}\mid \mathcal{R}]
    =\min_{\norm{v}_{0}\leq s}(\mathcal{R}\gamma -v)^{\prime}\mathcal{R}
    E[W_{i}W_{i}^{\prime}]\mathcal{R}'(\mathcal{R}\gamma -v)
  \end{equation*}
  denote the mean square approximation error under the best $s$-sparse
  approximation in this model. Using the assumption that $E[W_{i}W_{i}']$ has
  eigenvalues bounded from below, \cref{eq:rot_dist}, and the assumption that
  $\norm{\gamma}_{2}$ is bounded from below, it follows that for some $K>0$
  large enough,
\begin{equation*}
    r_{s, \mathcal{R}}
    \geq\frac{1}{K}\min_{\norm{v}_{0}\leq s}
    \norm{\mathcal{R}\gamma -v}_{2}^{2} \geq\frac{1}{K^{2}}
    \frac{\sum_{j=1}^{p-s}\mathcal{Z}_{j:p}^{2}}{\sum_{j=1}^{p}\mathcal{Z}_{j}^{2}}
\end{equation*}
where $\mathcal{Z}_{j:p}^{2}$ are the order statistics
$\mathcal{Z}_{1:p}^{2} \leq \mathcal{Z}_{2:p}^{2}\leq \cdots \leq
\mathcal{Z}_{p:p}^{2}$ of the sample $\{\mathcal{Z}_{j}^{2}\}_{j=1}^{p}$, and
$\mathcal{Z}_{j}$ are i.i.d.~standard normal.

If \cref{eq:approx_sparsity1,eq:approx_sparsity2} hold, then for $p$ large
enough, $r_{s, \mathcal{R}}\leq \eta :=K^{-2} s/p$ for
$s=\lfloor \sqrt{p}/\log (p)\rfloor $. Hence, by the union bound, for $p$ large
enough,
\begin{equation*}
  \begin{split}
    A_{p}& \leq P\left(r_{s, \mathcal{R}}\leq \eta \right)\\
         & \leq \binom{p}{s}P\left(\frac{1}{K^{2}}
           \frac{\sum_{j=1}^{p-s}\mathcal{Z}_{j}^{2}}{
           \sum_{j=1}^{p}\mathcal{Z}_{j}^{2}}\leq \eta \right)\\
         & =\binom{p}{s}P\left(
           \frac{\sum_{j=1}^{p-s}\mathcal{Z}_{j}^{2}/(p-s)}{\sum_{j=p-s}^{p}\mathcal{Z}_{j}^{2}/s}\leq
           \frac{s}{p-s}\frac{K^{2}\eta}{1-K^{2}\eta}\right)\\
         & =\binom{p}{s}I_{K^{2}\eta}(p/2-s/2, s/2),
\end{split}
\end{equation*}
where the last line uses the fact that an $F$-distribution with $d_{1}$ and
$d_{2}$ degrees of freedom has
c.d.f.~$I_{d_{1}x/(d_{1}x+d_{2})}(d_{1}/2,d_{2}/2)$. Here
$I_{x}(a;b)=\int_{0}^{x}t^{a-1}(1-t)^{b-1}dt/B(a, b)$ is the regularized
incomplete beta function, and $B(a, b)$ is the beta function. Since
$\int_{0}^{x}t^{a-1}(1-t)^{b-1}dt\leq \int_{0}^{x}t^{a-1}dt=x^{a}/a$,
\begin{multline}\label{eq:apbound}
  A_{p} \leq \binom{p}{s}\frac{1}{B(p/2-s/2, s/2)}\frac{(K^{2}\eta)^{p/2-s/2}}{p/2-s/2} \\
  \leq e^{3s/2}(p/s)^{3s/2-1}(K^{2}\eta)^{p/2-s/2}
  =e^{3s/2}(p/s)^{2s-1-p/2}\leq
  e^{3s/2}(\sqrt{p})^{2s-1-p/2},
\end{multline}
where the second line uses the identity
$\binom{p}{s}=\frac{p}{s(p-s)}\frac{1}{B(s, p-s)}$ and a double application of
the beta function bound $\frac{p}{s(p-s)}\frac{1}{B(s, p-s)} \leq (ep/s)^{s}$
for all $s\geq 1$, and the last inequality uses the definition of $s$ and holds
for $p$ large enough so that $2s-1-p/2$ is negative. The beta function bound
follows from the generalization of Stirling's inequality to gamma functions due
to \textcite[Theorem 5]{gordon94},
\begin{equation}\label{eq:stirling_gamma}
  \Gamma (t)=\sqrt{2\pi}t^{t-1/2}e^{-t}R_{t}, \qquad R_{t}\in [1, \sqrt{2}]
\end{equation}
for all $t\geq 1/2$.\footnote{Specifically,
  \begin{multline*}
    \frac{p}{s(p-s)}\frac{1}{B(s, p-s)}=\frac{p\Gamma (p)}{s\Gamma (s)(p-s)\Gamma (p-s)}=\frac{R_{p}}{\sqrt{2\pi}R_{s}R_{p-s}}\frac{(p/s)^{s}}{s^{1/2}}\left(1+\frac{s}{p-s}\right) ^{p-s+1/2}\\\leq \frac{1}{\sqrt{\pi}}\sqrt{\frac{p}{s(p-s)}}(ep/s)^{s}\leq (ep/s)^{s},
  \end{multline*}
  where the second equality uses \cref{eq:stirling_gamma}, the first inequality
  uses $1+x\leq e^{x}$ and the bounds on $R_{t}$ in \cref{eq:stirling_gamma},
  and the last inequality uses $p/(s(p-s))\leq 2$.} Taking logs of
\cref{eq:apbound} then yields the upper bound
\begin{equation*}
\log (A_{p})\leq \tfrac{3}{2}s-(p/4-s+1/2)\log (p)\asymp -\frac{p}{4}\log (p).
\end{equation*}

To derive a lower bound for $A_{p}$, observe that
$r_{s, \mathcal{R}}\leq
K^{2}\sum_{j=1}^{p-s}\mathcal{Z}_{j:p}^{2}/\sum_{j=1}^{p}\mathcal{Z}_{j}^{2}$
for a large enough $K$. Since the probability of an approximately sparse
representation is lower bounded by the probability that
$r_{s, \mathcal{R}} \leq K^{2}s/p$ for
$s=\lfloor \sqrt{p}/\log (p)^{2}\rfloor $, we have
\begin{equation*}
\begin{split}
  A_{p}& \geq P\left(\frac{\sum_{j=1}^{p-s}\mathcal{Z}_{j}^{2}}{
         \sum_{j=1}^{p}\mathcal{Z}_{j}^{2}}\leq s/p\right) = I_{s/p}(p/2-s/2, s/2) \\
       & \geq \frac{(1-s/p)^{s/2-1}(s/p)^{p/2-s/2}}{(p/2-s/2) B(p/2-s/2, s/2)}\\
       & \geq
         \frac{1}{2\sqrt{2\pi}}
         \frac{(s/p)^{p/2-s+1}}{
         (s/2)^{1/2} (1-s/p)^{p/2-s+3/2}} \geq \frac{1}{2\sqrt{\pi}}\frac{(s/p)^{p/2-s+1}}{\sqrt{s}},
\end{split}
\end{equation*}
where the second line uses the bound
$\int_{0}^{x}t^{a-1}(1-t)^{b-1}dt\geq (1-x)^{b-1}\int_{0}^{x}t^{a-1}dt =
(1-x)^{b-1}x^{a}/a$, and the third line uses the inequality
$\frac{p}{s(p-s)B(s, p-s)}%
\geq\frac{1}{2\sqrt{2\pi}} \frac{(p/s)^{s}}{s^{1/2} (1-s/p)^{p-s+1/2}} $ that
follows from \cref{eq:stirling_gamma}. Hence,
$\log (A_{p})\succeq -\frac{p}{4}\log(p)$, as claimed.
\end{proof}

\begin{proof}[Proof of \Cref{thm:categories}]
  Let $s=\lfloor \sqrt{p}/\log (p)\rfloor$. If
  \cref{eq:approx_sparsity1,eq:approx_sparsity2} hold, then for $p$ large
  enough,
  $\min_{\norm{v}_{0}\leq s}E[(Z_{i}^{\prime}A_{0}' \gamma
  -Z_{i}^{\prime}\mathcal{A}' v)^{2}\mid \mathcal{A}]\leq s/p$. It hence
  suffices to show that given $\varepsilon >0$, for all large enough $p$,
  \begin{equation*}
    P\left(\min_{\norm{v}_{0}\leq s}E[(Z_{i}^{\prime}A_{0}'\gamma
      -Z_{i}^{\prime}\mathcal{A}'v)^{2}\mid \mathcal{A}]
      \leq \frac{s}{p}\right) <(1-q+\varepsilon
    )^{K}.
  \end{equation*}
  Let $\psi =A_{0}'\gamma$, a $p\times 1$ vector with $K$ elements equal to zero
  and $p-K$ elements equal to $\norm{\gamma}_{2}$ (which is equal to the single
  non-zero element of $\gamma $). By assumption, $p E[Z_{i}Z_{i}^{\prime}]$ is a
  diagonal matrix with all diagonal elements bounded below by some constant
  $C_{0}>0$. Thus, it suffices to show that
  \begin{equation}\label{eq:catproof}
    P\left(\min_{\norm{v}_{0}\leq s}\norm{\psi -\mathcal{A}'v}_{2}^{2}\leq s/C_{0}\right)
    <(1-q+\varepsilon)^{K}
  \end{equation}
  for all $p$ large enough.

  Let $\mathcal{B}$ be a $p\times p$ matrix with i.i.d.~Bernoulli$(q)$ entries,
  and without loss of generality, assume $\mathcal{A}=\mathcal{B}$ if
  $\mathcal{B}$ is non-singular. By Theorem A of \textcite{tikhomirov20}, given
  any $\varepsilon_{0}>0$, the probability of $\mathcal{B}$ being singular is
  smaller than $(1-q+\varepsilon_{0})^{p}$ for all large enough $p$. Applying
  this result with $\varepsilon_{0}=\varepsilon/2$, and using the law of total
  probability, we can bound the left-hand side by
  \begin{equation*}
    \begin{split}
      P\left(\min_{\norm{v}_{0}\leq s}\norm{\psi -\mathcal{A}'v}_{2}^{2}\leq s/C_{0}\right)
      & \leq P\left(\text{$\mathcal{B}$ is singular}\right)
        + P\left(\min_{\norm{v}_{0}\leq s}\norm{\psi -\mathcal{B}'v}_{2}^{2}\leq s/C_{0}\right)\\
      & \leq (1-q+\varepsilon/2)^{p}
        +P\left(\min_{\norm{v}_{0}\leq s}\norm{\psi -\mathcal{B}'v}_{2}^{2}\leq s/C_{0}\right).
    \end{split}
  \end{equation*}
  Since $2(1-q+\varepsilon/2)^{p}\leq (1-q+\varepsilon)^{p}$ for
  $p\geq \log(2)/\log(\tfrac{1-q+\varepsilon}{1-q+\varepsilon/2})$, to show
  \cref{eq:catproof}, suffices to show that the second term in the above display
  is bounded by $(1-q+\varepsilon/2)^{p}$.

With $\eta =\norm{\gamma}_{2}/(2\sqrt{s})$, by the union bound,
\begin{multline}\label{eq:catproof2}
  P\left(\min_{\norm{v}_{0}\leq s}\norm{\psi -\mathcal{B}v}_{2}^{2}\leq s/C_{0}\right)
  \leq
  P\left(\min_{\norm{v}_{0}\leq s, \norm{v}_{2}<\eta}\norm{\psi -\mathcal{B}v}_{2}^{2}\leq
    s/C_{0}\right)+ \\
  P\left(\min_{\norm{v}_{0}\leq s, \norm{v}_{2}\geq \eta}
    \norm{\psi -\mathcal{B}v}_{2}^{2}\leq s/C_{0}\right).
\end{multline}
Note that all elements of $\mathcal{B}$ are either zero or one. By the
Cauchy-Schwarz inequality, if $\norm{v}_{0}\leq s$ and $\norm{v}_{2}\leq \eta $,
all elements of $\mathcal{B}v$ are bounded above by
$\tfrac{1}{2}\norm{\gamma}_{2}$, and hence
$\norm{\psi -\mathcal{B}v}_{2}^{2}\geq \tfrac{1}{4}\norm{\gamma}_{2}^{2}(p-K)$
almost surely. The first probability on the right-hand side of
\cref{eq:catproof2} is thus equal to zero for all large enough $p$.

Let $\mathcal{B}_{p}^{K}$ be the $K\times p$ matrix that collects the $K$ rows
of $\mathcal{B}$ where the corresponding element in $\psi $ is zero. Then for
any $C_{1}>0$,
\begin{equation*}
  \min_{\norm{v}_{0}\leq s, \norm{v}_{2}>\eta}\norm{\psi -\mathcal{B}v}_{2}^{2}
  \geq
  \min_{\norm{v}_{0}\leq s, \norm{v}_{2}\geq \eta}
    \norm{\mathcal{B}_{p}^{K}v}_{2}^{2}
  \geq \eta ^{2}\min_{\norm{v}_{0}\leq \lfloor C_{1}p\rfloor, \norm{v}_{2}=1}
          \norm{\mathcal{B}_{p}^{K}v}_{2}^{2}
\end{equation*}
almost surely, where the last inequality holds for all $p$ large enough so that
$\lfloor C_{1}p\rfloor \geq s$. Thus, for all $C_{2}>0$ and large enough $p$,
\begin{multline*}
  P\left(\min_{\norm{v}_{0}\leq s, \norm{v}_{2}\geq \eta}\norm{\psi -\mathcal{B}v}_{2}^{2}\leq
    s/C_{0}\right)
  \leq P\left(\min_{\norm{v}_{2}\leq \lfloor C_{1}p\rfloor, \norm{v}_{2}=1}
    \norm{\mathcal{B}_{p}^{K}v}_{2}^{2}\leq s/\eta ^{2}\cdot 1/C_{0}\right) \\
  \leq P\left(\min_{\norm{v}_{2}\leq \lfloor C_{1}p\rfloor, \norm{v}_{2}=1}
    \norm{\mathcal{B}_{p}^{K}v}_{2}^{2}\leq C_{2}p\right)
\end{multline*}
since $s/\eta ^{2}\asymp p/(\ln p)^{2}$. The result now follows from choosing
$C_{1}$ and $C_{2}$ from \Cref{lem:BernoulliMat}, with $\varepsilon/2$ playing the
role of $\varepsilon$.
\end{proof}

\begin{proof}[Proof of \Cref{thm:Hermite}]
  Let $p_{0}=p-1$. Using the identity
  \begin{equation*}
    \sqrt{j!} H_{j}(x+y)= \sum_{k=0}^{j}\binom{j}{k}x^{j-k}\sqrt{k!} H_{k}(y),
  \end{equation*}
  we find
\begin{equation*}
  H_{p_{0}}(z_{i}+\lambda)=\sum_{j=0}^{p_{0}}\binom{p_{0}}{j}\sqrt{j!/p_{0}!}
  \lambda ^{p_{0}-j}\tilde{W}_{i, j+1} = \sum_{k=0}^{p_{0}} \tilde{\gamma}_{p-k}
  \tilde{W}_{i, p-k},
\end{equation*}
where
$\tilde{\gamma}_{p-k}=\binom{p_{0}}{p_{0}-k}\sqrt{(p_{0}-k)!/p_{0}!}
\lambda^{k}$.

Applying the Stirling formula \parencite{robbins55}
\begin{equation*}
\log (n!)=\frac{1}{2}\log(2\pi) + (n+1/2)\log (n)-n+R_{n}, \qquad R_{n}\in \left[1/(12n+1),1/12n\right]
\end{equation*}
we obtain, for $k\geq 1$,
\begin{equation}\label{eq:coef2}
\begin{split}
  \log (\tilde{\gamma}_{p-k}^{2})
  & =\log \left(\frac{p_{0}!\lambda ^{2k}}{k!^{2}(p_{0}-k)!}\right) \\
  & =(p_{0}-k+1/2)\log \left(\frac{p_{0}}{p_{0}-k}\right) +k-\log (k)+2k\log
    \left(\frac{\sqrt{p_{0}}\lambda}{k}\right) +R_{p_{0}, k}^{\prime},
\end{split}
\end{equation}
where $R_{p_{0}, k}^{\prime}=R_{p_{0}}-2R_{k}-R_{p_{0}-k}-\log(2\pi)$ is bounded
above and below by a constant, since $R_{n}\in [0, 1/12]$. Now,
$\log (p_{0}/(p_{0}-k))=\log (1+k/(p_{0}-k))\geq k/p_{0}$ since
$\log (1+x)\geq x/(1+x)$ for $x\geq -1$. Hence, for some constant $C$, and for
$1\leq k\leq L\sqrt{p_{0}}/\log (p_{0})$
\begin{equation*}
\begin{split}
  \log (\tilde{\gamma}_{p-k}^{2})
  & \geq k\left[2+\frac{1/2-k}{p_{0}}-\log
    (k)/k\right] +2k\log \left(\frac{\sqrt{p_{0}}L}{k\log (p_{0})}\right) +C \\
  & \geq k\left[2+\frac{1/2-L\sqrt{p_{0}}/\log (p_{0})}{p_{0}}-\log (k)/k\right]
    + 2k\log \left(1\right) + C \\
  & \geq k\left[1-\frac{L}{\log (p_{0})\sqrt{p_{0}}}\right] +C,
\end{split}
\end{equation*}
where the last inequality uses $\log (k)/k\leq 1$. The expression in brackets is
bounded below by $1/2$ for $p\geq \max \{6,(2L)^{2}\}$, which yields the result
for $L$ fixed.

To prove the result when $L\to 0$, \cref{eq:coef2} implies, for some constant
$C$,
\begin{equation*}
\begin{split}
  \log (\tilde{\gamma}_{p-k}^{2})
  & \leq (p_{0}-k+1/2)\log\left(1+\frac{k}{p_{0}-k}\right)
    +k-\log (k)+2k\log \left(\frac{\sqrt{p_{0}}\lambda}{k}\right) +C \\
  & \leq 2k+\frac{1}{2}\frac{k}{p_{0}-k}
    + 2k\log \left(\frac{\sqrt{p_{0}}\lambda}{k}\right)
    + C\leq k\left[3+2\log \left(\frac{\sqrt{p_{0}}\lambda}{k}\right) \right]
    +C,
\end{split}
\end{equation*}
where the second line uses $\log (1+x)\leq x$. Since for
$k\geq \sqrt{p_{0}}\lambda e^{2}$ the expression in square brackets is smaller
than $-1$, if we approximate the regression function using the last
$s=\lceil e^{2}\max \{L/\log (p_{0}),1/\log (p_{0})^{2}\}\sqrt{p_{0}} \rceil$
regressors, we make an approximation error that is bounded by
\begin{equation*}
\sum_{k\geq s}\tilde{\gamma}_{p-k}^{2}\leq C\sum_{k\geq s}e^{-k}\preceq
e^{-s}\preceq e^{-e^{2}\sqrt{p}/\log (p)^{2}},
\end{equation*}
which is of lower order than $s/p$.
\end{proof}

\begin{proof}[Proof of \Cref{theorem:asymptotic_normality}]
  Write
  \begin{equation*}
    \hat{\beta}_{OLS}-\beta=\sum_{i=1}^{n}\eta_{i},
    \qquad \eta_{i}=\frac{1}{\ddot{D}'\ddot{D}} \ddot{D}_{i}U_{i}.
  \end{equation*}
  Conditional on $(D, W)$, $\eta_{i}$ are independent, with variance
  $\omega_{OLS}^{2}:=\var(\sum_{i}\eta_{i}\mid D,
  W)=\frac{\sum_{i}\ddot{D}_{i}^{2}E[U_{i}^{2}\mid D,
    W]}{(\ddot{D}'\ddot{D})^{2}}$. By a conditional version of the
  Lindeberg-Feller theorem \parencite[Theorem 1]{bulinski17}, it therefore
  suffices to verify a conditional version of the Lyapunov condition
  \begin{equation*}
    \sum_{i=1}^{n}
    \frac{E\left[\abs{\eta_{i}}^{2+\eta}\mid D, W\right]}{\omega_{OLS}^{2+\eta}}=o_{p}(1),
  \end{equation*}
  and that
  \begin{equation}\label{eq:cond_variance}
    \ss^{2}_{OLS}/\omega^{2}_{OLS}=1+o_{p}(1).
 \end{equation}
  To show the Lyapunov condition, substitute in the definition of $\eta_{i}$ and
  $\omega_{OLS}$ to write the left-hand side as
  \begin{multline*}
    \sum_{i=1}^{n} \frac{E[\abs{U_{i}}^{2+\eta}\mid D, W]\abs{\ddot{D}_{i}}^{2+\eta}}{
      (\sum_{j=1}^{n}\ddot{D}_{j}^{2} E[U_{j}^{2}\mid D, W])^{1+\eta/2}}\\
    \leq
    K^{2+\eta/2} \frac{\sum_{i=1}^{n}\abs{\ddot{D}_{i}}^{2+\eta}}{
      (\sum_{j=1}^{n}\ddot{D}_{j}^{2})^{1+\eta/2}}
    \leq
    K^{2+\eta/2} \frac{\max_{i'}\abs{\ddot{D}_{i'}}^{\eta}
      \sum_{i=1}^{n}\ddot{D}_{i}^{2}}{
      (\sum_{j=1}^{n}\ddot{D}_{j}^{2})^{1+\eta/2}}
    =
    K^{2+\eta/2} \left(\frac{\max_{i}\ddot{D}^{2}_{i}/n}{
        \frac{1}{n}\sum_{j=1}^{n}\ddot{D}_{j}^{2}}\right)^{\eta/2},
  \end{multline*}
  where the first inequality uses \Cref{assumption:an_long} (ii) and (iii).
  Therefore, to verify the Lyapunov condition, it suffices to show that
  \begin{equation}\label{eq:max_ddotd}
    \max_{i}\abs{\ddot{D}_{i}}=O_{p}(n^{1/4}),
  \end{equation}
  and that
  \begin{equation}
    \label{eq:mean_ddotd}
    \frac{1}{n}\sum_{i=1}^{n}\ddot{D}_{i}^{2}\asymp_{p} 1.
  \end{equation}
  Now, by~\Cref{theorem:quadratic_form} and \Cref{assumption:an_long} (ii) and
  (iii),
  $\frac{1}{n}\sum_{i=1}^{n}\ddot{D}_{i}^{2}=\frac{1}{n}\sum_{i}M_{ii}E[\tilde{D}_{i}^{2}\mid
  W]+o_{p}(1)$, $M_{ii}=(I-P)_{ii}$, with the first term bounded between
  $(1-p/n)/K$ and $K$. Therefore, \cref{eq:mean_ddotd} follows by
  \Cref{assumption:an_long} (iv). Furthermore, by the union bound and Markov's
  inequality
  \begin{equation*}
    P(\max_{i}\abs{\ddot{D}_{i}}/n^{1/4}\geq \epsilon)
    \leq \sum_{i=1}^{n}P(\abs{\ddot{D}_{i}}\geq n^{1/4}\epsilon)
    \leq\sum_{i=1}^{n} \frac{E[\ddot{D}_{i}^{4}]}{n\epsilon^{4}}
    \leq\frac{1}{\epsilon^{4}}4K,
  \end{equation*}
  which implies \cref{eq:max_ddotd}. Here the last inequality follows from the
  bound
  \begin{equation}\label{eq:ddot4}
    E[\ddot{D}_{i}^{4}\mid W]=
    \sum_{j}M_{ij}^{4}E[\tilde{D}_{j}^{4}\mid W]
    +3\sum_{j\neq k}M_{ij}^{2} M_{ik}^{2} E[\tilde{D}_{j}^{2}\tilde{D}_{k}^{2}\mid W]
    \leq
    4K\sum_{j, k}M_{ij}^{2} M_{ik}^{2}=4K M_{ii}^{2}.
  \end{equation}

  It remains to verify~\cref{eq:cond_variance}. Write
  \begin{equation*}
    \frac{\ss_{OLS}^{2}}{\omega_{OLS}^{2}}-1=
    \frac{\frac{1}{n}\sum_{i=1}^{n}\ddot{D}_{i}^{2}(U_{i}^{2}-E[U_{i}^{2}\mid
      D, W])}{\frac{1}{n}\sum_{i=1}^{n}\ddot{D}_{i}^{2}
      E[U_{i}^{2}\mid D, W]}
  \end{equation*}
  The denominator satisfies
  $\frac{1}{n}\sum_{i=1}^{n}\ddot{D}_{i}^{2} E[U_{i}^{2}\mid D, W]\geq
  \frac{1}{n}\sum_{i=1}^{n}\ddot{D}_{i}^{2}/K\asymp_{p}1$
  by~\cref{eq:mean_ddotd}. The result therefore follows if we can show that the
  numerator is of the order $o_{p}(1)$. This follows from the fact that for any
  variable $\mathcal{U}_{i}$, that, conditional on $(D, W)$ has mean zero and
 uniformly bounded $1+\eta/2$ moments,
  \begin{equation}\label{eq:ddotDU}
    \frac{1}{n}\sum_{i}\ddot{D}_{i}^{2}\mathcal{U}_{i}=o_{p}(1).
  \end{equation}
  In turn, \cref{eq:ddotDU} follows by Markov's inequality, and the bound
  \begin{multline*}
    E\abs*{\frac{1}{n}\sum_{i=1}^{n}\ddot{D}_{i}^{2}\mathcal{U}_{i}}^{1+\eta/2}\preceq
    \frac{1}{n^{1+\eta/2}}
    \sum_{i=1}^{n}E\abs{\ddot{D}_{i}}^{2+\eta}\leq
    \frac{1}{n^{1+\eta/2}} \sum_{i=1}^{n}E[\ddot{D}_{i}^{4}]^{(2+\eta)/4}\\
    \leq \frac{1}{n^{1+\eta/2}}
    E\sum_{i=1}^{n}\left(4KM_{ii}^{2}\right)^{(2+\eta)/4}\to 0
  \end{multline*}
  Here the first inequality uses iterated expectations, and the inequality of
  \textcite{vBEs65}, the second inequality uses Jensen's inequality, the third
  uses \cref{eq:ddot4}, and the final limit uses $M_{ii}^{2}\leq 1$.
\end{proof}

\begin{proof}[Proof of~\Cref{theorem:hausman-test}]
  Observe first that
  \begin{equation}\label{eq:z_mean}
    \begin{split}
      n\sum_{i=1}^{n}Z_{i}^{2}
      =&n\sum_{i=1}^{n}\left(\frac{\ddot{D}_{i}^{2}}{(\ddot{D}'\ddot{D})^{2}}+
      \frac{\tilde{D}_{i}^{2}}{(\tilde{D}'\tilde{D})^{2}}-\frac{2\ddot{D}_{i}\tilde{D}_{i}}{(\ddot{D}'\ddot{D})(\tilde{D}'\tilde{D})}\right)
    =n\frac{\tilde{D}'\tilde{D}+\ddot{D}'\ddot{D}-2\ddot{D}'\tilde{D}}{\tilde{D}'\tilde{D}\cdot\ddot{D}'\ddot{D}}\\
       & =n\frac{\tilde{D}'\tilde{D}+\tilde{D}'(I-P)\tilde{D}-2\tilde{D}'(I-P)\tilde{D}}{\tilde{D}'\tilde{D}\cdot\ddot{D}'\ddot{D}}
    =\frac{\tilde{D}'P\tilde{D}/n}{\tilde{D}'\tilde{D}/n\cdot \ddot{D}'\ddot{D}/n}\\
       & =\frac{\frac{1}{n}\sum_{i}P_{ii}E[\tilde{D}_{i}^{2}\mid W]+O_{p}(p^{1/2}/n)}{\tilde{D}'\tilde{D}/n
      \cdot \ddot{D}'\ddot{D}/n}
    \asymp_{p}\frac{\trace(W)}{n},
\end{split}
  \end{equation}
  where the first two lines follow by algebra, the equality on the third line
  follows by \Cref{theorem:quadratic_form}, and the last step follows by the law
  of large numbers, \cref{eq:mean_ddotd}, and \Cref{assumption:an_long} (ii) and
  (iii). Hence, letting
  $\omega_{H}^{2}=\sum_{i=1}^{n}Z_{i}^{2}E[U_{i}^{2}\mid D, W]$, it follows from
  \Cref{assumption:an_long} (iii), and the fact that $\ss^{2}_{*}=O_{p}(1/n)$
  that
  \begin{equation*}
    \frac{\ss_{*}^{2}}{\omega_{H}^{2}}=
    \frac{O_{p}(1)
    }{n\sum_{i=1}^{n}Z_{i}^{2}E[U_{i}^{2}\mid D, W]}=O_{p}(n/\trace(P)).
  \end{equation*}
  Therefore, under \cref{eq:efficient_if}, since
  $\liminf_{n\to\infty}n/\trace(P)$ is bounded,
  \begin{equation*}
   \frac{\hat{\beta}_{OLS}-\hat{\beta}^{*}}{\omega_{H}}
    =
   \frac{1}{\omega_{H}}\sum_{i=1}^{n}Z_{i}U_{i}+o_{p}(\ss_{*}/\omega_{H})=
   \frac{1}{\omega_{H}}\sum_{i=1}^{n}Z_{i}U_{i}+o_{p}(1).
  \end{equation*}
  By a conditional version of the Lindeberg-Feller theorem \parencite[Theorem
  1]{bulinski17}, the first term is asymptotically standard normal if a
  conditional version of the
  Lyapunov condition holds,
  \begin{equation}\label{eq:lyap_con}
        \sum_{i=1}^{n}\frac{\abs{Z_{i}}^{2+\eta}E[
      \abs{U_{i}}^{2+\eta}\mid D, W]}{\omega_{H}^{2+\eta}}=o_{p}(1).
  \end{equation}
  By~\Cref{assumption:an_long} (ii) and (iii) and \cref{eq:z_mean}, the
  left-hand side is bounded above by
  \begin{equation*}
    K^{2+\eta/2}\left(\frac{n\max_{i}Z_{i}^{2}}{n\sum_{i}Z_{i}^{2}}\right)^{\eta/2}
    \preceq_{p} (n^{2}/\trace(P)\cdot \max_{i}Z_{i}^{2})^{\eta/2}.
  \end{equation*}
  The claim in \cref{eq:lyap_con} then follows from the bound
  \begin{equation*}
    n\max_{i}Z_{i}^{2}\leq 2\frac{\max_{i}\ddot{D}_{i}^{2}/n}{(\ddot{D}'\ddot{D}/n)^{2}}+
    2\frac{\max_{i}\tilde{D}_{i}^{2}/n}{(\tilde{D}'\tilde{D}/n)^{2}}.
  \end{equation*}
  The first term is $O_{p}(n^{-1/2})$ by~\cref{eq:max_ddotd,eq:mean_ddotd}. The
  second term is also $O_{p}(n^{-1/2})$, since
  $\tilde{D}_{i}'\tilde{D}_{i}/n\asymp_{p} 1$ by the law or large numbers and
  \Cref{assumption:an_long} (ii) and (iii), and since
  \begin{equation}\label{eq:max_dtilde}
    \max_{i}\abs{\tilde{D}_{i}}= O_{p}(n^{1/4}).
  \end{equation}
  Specifically, \cref{eq:max_dtilde} holds since by the union bound and Markov's inequality,
  \begin{equation*}
    P(\max_{i}\abs{\tilde{D}}_{i}/n^{1/4}>\epsilon)
    \leq \sum_{i=1}^{n} P(\abs{\tilde{D}}_{i}/n^{1/4}>\epsilon)
    \leq
    \sum_{i=1}^{n} \frac{E[\tilde{D}_{i}^{4}]}{n\epsilon^{4}}
    \leq \frac{K}{\epsilon^{4}}.
  \end{equation*}
  To establish the first claim of \Cref{theorem:hausman-test}, it now suffices
  to show that $\omega_{H}/\ss_{H}=1+o_{p}(1)$. This follows from writing
  \begin{equation*}
    \frac{\ss_{H}^{2}}{\omega_{H}^{2}}-1
    =\frac{n\sum_{i}Z_{i}^{2}(U^{2}_{i}-E[U_{i}^{2}\mid D, W])}{n\sum_{i}Z_{i}^{2}E[U_{i}^{2}\mid D, W]}.
  \end{equation*}
  The denominator is of the order $p/n$ by \cref{eq:z_mean} and
  \Cref{assumption:an_long} (ii) and (iii). To show that the numerator is
  $o_{p}(1)$, let $\mathcal{U}_{i}=U_{i}^{2}-E[U_{i}^{2}\mid D, W]$, and decompose
  it as
  \begin{multline*}
    n\sum_{i}Z_{i}^{2}\mathcal{U}_{i}=
    \frac{\frac{1}{n}\sum_{i}
        \ddot{D}_{i}^{2}\mathcal{U}_{i}}{(\ddot{D}'\ddot{D}/n)^{2}}+
      \frac{\frac{1}{n}\sum_{i}\tilde{D}_{i}^{2}\mathcal{U}_{i}}{(\tilde{D}'\tilde{D}/n)^{2}}
      -2\frac{\frac{1}{n}\sum_{i}\ddot{D}_{i}\tilde{D}_{i}\mathcal{U}_{i}}{(\ddot{D}'\ddot{D}/n)(\tilde{D}'\tilde{D}/n)}\\
    =
    O_{p}(1)\frac{1}{n}\sum_{i}
      \ddot{D}_{i}^{2}\mathcal{U}_{i}+
    O_{p}(1)\frac{1}{n}\sum_{i}\tilde{D}_{i}^{2}\mathcal{U}_{i}
    +O_{p}(1)\frac{1}{n}\sum_{i}\ddot{D}_{i}\tilde{D}_{i}\mathcal{U}_{i}=o_{p}(1).
  \end{multline*}
  Here the second equality uses~\cref{eq:mean_ddotd}, and the last inequality
  uses~\cref{eq:ddotDU}, the law of large numbers, and the result that
  \begin{equation}\label{eq:ddottildeDU}
    \frac{1}{n}\sum_{i}\ddot{D}_{i}\tilde{D}_{i}\mathcal{U}_{i}=o_{p}(1),
  \end{equation}
  which follows by Markov's inequality and the bound
  \begin{multline*}
    E\abs*{\frac{1}{n}\sum_{i}\ddot{D}_{i}\tilde{D}_{i}\mathcal{U}_{i}}^{1+\eta/2}
    \preceq \frac{1}{n^{1+\eta/2}}
    \sum_{i}E\abs{\ddot{D}_{i}\tilde{D}_{i}}^{1+\eta/2}
    \leq
    \frac{1}{n^{1+\eta/2}}
    \sum_{i}(E\ddot{D}_{i}^{2}\tilde{D}_{i}^{2})^{(2+\eta)/4}\\
    =
    \frac{1}{n^{1+\eta/2}}
    \sum_{i}(E \sum_{j}M_{ij}^{2}\tilde{D}_{j}^{2}\tilde{D}_{i}^{2})^{(2+\eta)/4}
    \leq
    \frac{1}{n^{1+\eta/2}}
    \sum_{i}(M_{ii}K)^{(2+\eta)/4}\to 0.
  \end{multline*}
  Here the first inequality uses iterated expectations, and the inequality of
  \textcite{vBEs65}, the second inequality uses Jensen's inequality, the third
  uses \Cref{assumption:an_long} (ii), and the final limit uses $M_{ii}\leq 1$.

  To prove the second claim in \Cref{theorem:hausman-test}, let
  ${r}_{\gamma, i}=r_{\gamma}(W_{i})$, ${r}_{\delta, i}=r_{\delta}(W_{i})$,
  $\tilde{u}_{i}={r}_{\gamma, i}+U_{i}$ and $\tilde{d}_{i}=r_{\delta,
    i}+\tilde{D}_{i}$ so that we may write
  \begin{equation}\label{eq:var_cons2}
    \abs*{\frac{\hat{\ss}_{H}^{2}}{\omega_{H}^{2}}-1}
    \preceq_{p}\abs*{n\sum_{i}(\hat{Z}_{i}^{2}\hat{U}_{i}^{2}-\tilde{z}_{i}^{2}\tilde{u}_{i}^{2})}
    +\abs*{n\sum_{i}({Z}_{i}^{2}{U}_{i}^{2}-\tilde{z}_{i}^{2}\tilde{u}_{i}^{2})}
    +o_{p}(1).
  \end{equation}
  The second term is of the order $o_{p}(1)$ by condition (iii), so it suffices
  to show that the first term is of the order $o_{p}(1)$.

  Let $\check{U}_{i}=\hat{U}_{i}-\tilde{u}_{i}$ and
  $\check{D}_{i}=\hat{D}_{i}-\tilde{d}_{i}$, so we may write
  $n\hat{Z}_{i}\hat{U}_{i}=\frac{\ddot{D}_{i}\check{U}_{i}}{\ddot{D}'\ddot{D}/n}
  +\frac{\ddot{D}_{i}\tilde{u}_{i}}{\ddot{D}'\ddot{D}/n}
  -\frac{\check{D}_{i}\check{U}_{i}+\check{D}_{i}\tilde{u}_{i}+\tilde{d}_{i}\check{U}_{i}}{\hat{D}'\hat{D}/n}
  -\frac{\tilde{d}_{i}\tilde{u}_{i}}{\hat{D}'\hat{D}/n}$ and
  $n\tilde{z}_{i}\tilde{u}_{i}=\frac{\tilde{u}_{i}\ddot{D}_{i}}{\ddot{D}'\ddot{D}/n}
  -\frac{\tilde{u}_{i}\tilde{d}_{i}}{\tilde{d}'\tilde{d}/n}$. Plugging these
  expressions into the first term in \cref{eq:var_cons2} and expanding the term
  yields
  \begin{multline*}
    n\sum_{i}(\hat{Z}_{i}^{2}\hat{U}_{i}^{2}-
    \tilde{z}_{i}^{2}\tilde{u}_{i}^{2})= \frac{\frac{1}{n}\sum_{i}
      [(\check{D}_{i}\check{U}_{i}+\check{D}_{i}\tilde{u}_{i}+\tilde{d}_{i}\check{U}_{i})^{2}
      +2(\check{D}_{i}\check{U}_{i}+\check{D}_{i}\tilde{u}_{i}+\tilde{d}_{i}\check{U}_{i})
      \tilde{d}_{i}\tilde{u}_{i}]}{(\hat{D}'\hat{D}/n)^{2}}\\
    + \frac{\frac{1}{n}\sum_{i}(\ddot{D}_{i}^{2}\check{U}_{i}^{2}
      +2\ddot{D}_{i}^{2}\check{U}_{i}\tilde{u}_{i})}{(\ddot{D}'\ddot{D}/n)^{2}}
    -\frac{2}{n}\sum_{i}\frac{(\ddot{D}_{i}\check{U}_{i}+\ddot{D}_{i}\tilde{u}_{i})
      (\check{D}_{i}\check{U}_{i}+\check{D}_{i}\tilde{u}_{i}+\tilde{d}_{i}\check{U}_{i})
      +\ddot{D}_{i}\check{U}_{i}\tilde{d}_{i}\tilde{u}_{i}
    }{\ddot{D}'\ddot{D}/n\cdot\hat{D}'\hat{D}/n}
    \\
    +\frac{1}{n}\sum_{i}\tilde{d}_{i}^{2}\tilde{u}_{i}^{2}\left[\frac{1}{(\hat{D}'\hat{D}/n)^{2}}
      -\frac{1}{(\tilde{d}'\tilde{d}/n)^{2}}\right]
    +2\frac{\frac{1}{n}\sum_{i}\tilde{d}_{i}\ddot{D}_{i}\tilde{u}_{i}^{2}}{\ddot{D}'\ddot{D}/n}
    \left[\frac{1}{\tilde{d}'\tilde{d}/n}-\frac{1}{\hat{D}'\hat{D}/n}\right].
  \end{multline*}
  Using the inequality $\abs{ab}\leq (a^{2}+b^{2})/2$, \cref{eq:mean_ddotd}, and
  the fact that by condition (iv),
  $\hat{D}'\hat{D}/n=\tilde{d}'\tilde{d}/n+o_{p}(1)$ and that
  $\tilde{d}'\tilde{d}/n\asymp_{p}1$ by condition (i) and
  \Cref{assumption:an_long} (ii) and (iii), it follows that the right-hand side
  is smaller than
  \begin{multline*}
    \abs*{n\sum_{i}(\hat{Z}_{i}^{2}\hat{U}_{i}^{2}-
      \tilde{z}_{i}^{2}\tilde{u}_{i}^{2})}\leq
    O_{p}(1)\frac{1}{n}\sum_{i}[\check{D}_{i}^{2}\check{U}_{i}^{2}
    +\check{D}_{i}^{2}\tilde{u}_{i}^{2}+\tilde{d}_{i}^{2}\check{U}_{i}^{2}
    + \abs{\check{D}_{i}\tilde{d}_{i}}\tilde{u}_{i}^{2}+\tilde{d}_{i}^{2}\abs{\check{U}_{i}\tilde{u}_{i}}
    ] \\
    + O_{p}(1){\frac{1}{n}\sum_{i}
      [\ddot{D}_{i}^{2}\check{U}_{i}^{2}+(\ddot{D}_{i}^{4}+\tilde{u}_{i}^{2})\abs{\check{U}_{i}}]}\\
    +O_{p}(1)\frac{1}{n}\sum_{i}[
    \check{U}_{i}^{2}\check{D}_{i}^{2}+\ddot{D}_{i}^{2}\check{U}_{i}^{2}+\check{D}_{i}^{2}\tilde{u}_{i}^{2}
    +\tilde{d}_{i}^{2}\check{U}_{i}^{2}
    +\tilde{u}_{i}^{2}\abs{\ddot{D}_{i}\check{D}_{i}}
    +\abs{\check{U}_{i}}(\ddot{D}_{i}^{2}
    +\tilde{d}_{i}^{2}\tilde{u}_{i}^{2})]
    \\
    +o_{p}(1)\frac{1}{n}\sum_{i}\tilde{d}_{i}^{2}\tilde{u}_{i}^{2}
    +o_{p}(1)\frac{1}{n}\sum_{i}(\ddot{D}_{i}^{2}+\tilde{d}_{i}^{2})\tilde{u}_{i}^{2}\\
    \leq
    o_{p}(1)\frac{1}{n}\sum_{i}[1+\tilde{u}_{i}^{2}+\tilde{d}_{i}^{2}
    + \tilde{d}_{i}^{2}\tilde{u}_{i}^{2}
    +\ddot{D}_{i}^{4} +\ddot{D}_{i}^{2}\tilde{u}_{i}^{2}],
  \end{multline*}
  where the second inequality uses condition (iv) and the inequality
  $2\abs{a}\leq 1+a^{2}$. We now show that each summand is bounded in
  probability. By condition (i), and \Cref{assumption:an_long}~(ii),
  $\frac{1}{n}\sum_{i}(\tilde{u}_{i}^{2}+\tilde{d}_{i}^{2})=
  \frac{1}{n}\sum_{i}(U_{i}^{2}+D_{i}^{2})+o_{p}(1)=O_{p}(1)$. Next,
  \begin{equation*}
    \frac{1}{n}\sum_{i}\tilde{d}_{i}^{2}\tilde{u}_{i}^{2}
    \leq
    \frac{4}{n}\sum_{i}(\tilde{D}_{i}^{2}+r_{\delta, i}^{2})(r_{\gamma, i}^{2}+U_{i}^{2})
  \end{equation*}
  which by \Cref{assumption:an_long}~(ii) and conditions (i) and (ii) has
  expectation bounded by a constant times
  $\frac{1}{n}\sum_{i=1}^{n}E[1+r_{\gamma, i}^{2}+r_{\delta, i}^{2}+r_{\delta,
    i}^{2}r_{\gamma, i}^{2}]=O(1)$, so
  $\frac{1}{n}\sum_{i}\tilde{d}_{i}^{2}\tilde{u}_{i}^{2}=O_{p}(1)$ by Markov's
  inequality. Likewise, it follows from~\cref{eq:ddot4} that
  $\frac{1}{n}\sum_{i}\ddot{D}_{i}^{4}=O_{p}(1)$. Finally, by
  \Cref{assumption:an_long}~(ii), and the bound
  $E[\ddot{D}_{i}^{2}\mid W]\leq K M_{ii}\leq K$, we have
  $\frac{1}{n}\sum_{i}E[\ddot{D}_{i}^{2}\tilde{u}_{i}^{2}]\leq
  2\frac{1}{n}\sum_{i}E[\ddot{D}_{i}^{2}(K+r_{\gamma}^{2})]\leq
  2E[K(K+r_{\gamma}^{2})]=O(1)$, so that by Markov's inequality, the last term
  is also bounded in probability.
\end{proof}

\begin{proof}[Proof of~\Cref{theorem:non-normal-errors_epe}]
  Letting $r=(I-P_{\mathcal{S}^{*}})X\alpha$, we may write
  \begin{equation}\label{eq:fs_decomp}
    \mathcal{F} =\epsilon'P\epsilon- \epsilon'P_{\mathcal{S}^{*}}\epsilon
    +2\epsilon'r+r'r
    =\epsilon'P\epsilon+O_{p}(s+\norm{r}_{2}+\norm{r}_{2}^{2}),
  \end{equation}
  where the second equality follows since the second term satisfies
  \begin{equation}\label{eq:eps_Ps_eps}
    \epsilon'P_{\mathcal{S}^{*}}\epsilon=E[\epsilon'P_{\mathcal{S}^{*}}\epsilon]
    +O_{p}(s^{1/2})\leq Ks+O_{p}(s^{1/2})
  \end{equation}
  by \Cref{theorem:quadratic_form} and condition (a), and the
  third term is mean zero with variance bounded by $K\norm{r}^{2}$ by condition
  (a). Furthermore, by conditions (b) and (d),
  \begin{equation*}
    \omega^{2}:=2\sum_{i\neq j}E[\epsilon_{i}^{2}\epsilon_{j}^{2}\mid X]P_{ij}^{2}\geq
    2\sum_{i\neq j}P_{ij}^{2}/K^{2}\geq
    \frac{2}{K^{2}}\trace(P)(1-\max_{i}P_{ii})\geq 2\trace(P)/K^{3}.
  \end{equation*}
  Hence, by condition (c), condition (a) of \Cref{sec:qf_clt} holds. Since
  condition (b) of \Cref{sec:qf_clt} holds by condition (a), we can apply
  \Cref{sec:qf_clt} to $\epsilon'P\epsilon$ to yield
  $(\epsilon'P\epsilon-\sum_{i=1}^{n}\epsilon_{i}^{2}P_{ii})/\omega
  \overset{d}{\to}\mathcal{N}(0,1)$. Combining this result with
  \cref{eq:fs_decomp} and \Cref{assumption:approx_sparsity} yields
  \begin{equation}\label{eq:ePe_normal}
    \frac{\mathcal{F}-\sum_{i=1}^{n}\epsilon_{i}^{2}P_{ii}}{\omega}
    =
    \frac{\epsilon'P\epsilon-\sum_{i=1}^{n}\epsilon_{i}^{2}P_{ii}}{\omega}
    +o_{p}(1)
    \overset{d}{\to}\mathcal{N}(0,1).
  \end{equation}
  Furthermore, letting $\psi_{i}=\epsilon_{i}^{2}-E[\epsilon_{i}^{2}\mid X]$,
  and using the identity
  $\epsilon_{i}^{2}\epsilon_{j}^{2}-E[\epsilon^{2}_{i}\epsilon^{2}_{j}\mid X] =
  \psi_{i}\psi_{j}+\psi_{i}E[\epsilon_{j}^{2}\mid X]+E[\epsilon_{i}^{2}\mid
  X]\psi_{j}$, we have
  \begin{multline*}
    \abs*{\frac{2\sum_{i\neq
          j}\epsilon_{i}^{2}\epsilon_{j}^{2}P_{ij}^{2}-\omega^{2}}{\omega^{2}}}
    \leq \frac{K^{3}}{\trace(P)}
    \abs*{\sum_{i \neq j}
      (\epsilon_{i}^{2}\epsilon_{j}^{2}-E[\epsilon^{2}_{i}\epsilon^{2}_{j}\mid X])P_{ij}^{2}}\\
    \leq
     \frac{2K^{3}}{\trace(P)}\abs*{\sum_{i< j}\psi_{i}\psi_{j}P_{ij}^{2}}+
     \frac{2K^{3}}{\trace(P)}\abs*{\sum_{i \neq j}\psi_{i}E[\epsilon_{j}^{2}\mid X]P_{ij}^{2}}.
  \end{multline*}
  By condition (a), conditional on $X$, the term inside the first absolute value
  function is mean zero with variance bounded by a constant times
  $\sum_{i, j}P_{ij}^{4}\leq \trace(P)$, and the term inside the second absolute
  value function is also mean zero with variance bounded by a constant times
  $\sum_{i=1}^{n}(\sum_{j=1}^{n}P_{ij}^{2})^{2}\leq \trace(P)$, so that the
  above display is $O_{p}(\trace(P)^{-1/2})=o_{p}(1)$ by Markov's inequality.
  Combining this result with \cref{eq:ePe_normal} then yields the first claim.

  To prove the second claim in \Cref{theorem:non-normal-errors_epe}, it suffices
  to show that
  \begin{equation*}
    \norm{\yy-X\hat{\alpha}}_{2}^{2}-
    \yy'(I-P_{\mathcal{S}^{*}})\yy=O_{p}(s\log(p)),
  \end{equation*}
  that
  \begin{equation}\label{eq:replace_bias_correction}
    \sum_{i}P_{ii}(\hat{\epsilon}_{i}^{2}-\epsilon_{i}^{2})=o_{p}(p^{1/2})
  \end{equation}
  and that
  \begin{equation}\label{eq:replace_variance}
    \sum_{i\neq j}
      (\hat{\epsilon}_{i}^{2}\hat{\epsilon}_{j}^{2}-\epsilon_{i}^{2}\epsilon_{j}^{2})
      P_{ij}^{2}=o_{p}(p).
  \end{equation}
  The first assertion follows by \Cref{theorem:F_approx_clust}. To show
  \cref{eq:replace_bias_correction}, decompose
  $\hat{\epsilon}_{i} =\epsilon_{i}-\tilde{f}_{i}
  =\epsilon_{i}+r_{i}-X_{i}'\tilde{\alpha}$, where
  $\tilde{f}_{i}=X_{i}'(\hat{\alpha}-\alpha)$, and
  $\tilde{\alpha}=\hat{\alpha}-(X_{\mathcal{S}^{*}}'X_{\mathcal{S}^{*}})^{-1}X_{\mathcal{S}^{*}}'X\alpha$,
  so that we may write
  \begin{multline*}
    \abs*{\sum_{i}P_{ii}(\hat{\epsilon}_{i}^{2}-\epsilon_{i}^{2})}
    =\abs*{\sum_{i}P_{ii}\tilde{f}_{i}^{2}
      +2\sum_{i}P_{ii}\epsilon_{i}r_{i}
      -2\sum_{i}P_{ii}\epsilon_{i}X_{i}'\tilde{\alpha}}\\
    \preceq_{p}
    s \log(p)
    +\norm{r}_{2}
    +2\norm*{\sum_{i}P_{ii}\epsilon_{i}X_{i}}_{\infty}\norm{\tilde{\alpha}}_{1}\preceq_{p}
    s \log(p)
    +\norm{r}_{2},
  \end{multline*}
  where the first inequality follows by condition (i) and Markov's inequality,
  and since conditional on $X$, the second term is mean zero with standard
  deviation bounded by a constant times $\norm{r}_{2}$. The second inequality
  applies conditions (ii) and (iv). \Cref{eq:replace_bias_correction} then
  follows by \Cref{assumption:approx_sparsity}.

  To show \cref{eq:replace_variance}, write the left-hand side as
  \begin{multline*}
    \sum_{i\neq j}
    (\hat{\epsilon}_{i}^{2}\hat{\epsilon_{j}}^{2}-\epsilon_{i}^{2}\epsilon_{j}^{2})
    P_{ij}^{2}\\
    =
    \sum_{i\neq j}
    \tilde{f}_{i}^{2}\tilde{f}_{j}^{2}P_{ij}^{2}
    +2 \sum_{i\neq j}
    \epsilon_{i}^{2}\tilde{f}_{j}^{2}P_{ij}^{2}
    -4 \sum_{i\neq j}
    \epsilon_{j}\tilde{f}_{j}\tilde{f}_{i}^{2}P_{ij}^{2}
    -4 \sum_{i\neq j}
    \epsilon_{i}\tilde{f}_{i}\epsilon_{j}^{2}P_{ij}^{2}
    +4 \sum_{i\neq j}
    \epsilon_{i}\tilde{f}_{i}\epsilon_{j}\tilde{f}_{j}P_{ij}^{2}
    \\
    =:T_{1}+2T_{2}-4T_{3}-4T_{4}+4T_{5}= O_{p}(s^{2}\log(p)^{2}
    +\norm{r}_{2}^{2})+o_{p}(p),
  \end{multline*}
  where the bounds follow by bounding each term, as derived next. First, by
  condition (i), $T_{1}$ is bounded by a constant times
  $\norm{X(\hat{\alpha}-\alpha)}_{2}^{4}\preceq_{p} s^{2}\log(p)^{2}$. Second,
  \begin{equation*}
    T_{2}\leq 2\sum_{i\neq j}\epsilon_{i}^{2}r_{j}^{2} P_{ij}^{2}
    +2\sum_{i\neq j}\epsilon_{i}^{2}(X_{j}'\tilde{\alpha})^{2}P_{ij}^{2}
    = O_{p}(\norm{r}_{2}^{2})+
    o_{p}(p).
  \end{equation*}
  where the second equality follows by Markov's inequality, since the first
  term, conditional on $X$, has expectation bounded by a constant times
  $\sum_{j}r_{j}^{2} P_{jj}\leq \norm{r}_{2}^{2}$, and the second term is
  bounded by
  $\max_{j}(X_{j}'\tilde{\alpha})^{2}\cdot \sum_{i\neq
    j}\epsilon_{i}^{2}P_{ij}^{2}=o_{p}(1)\cdot O_{p}(p)$, since
  $\sum_{i\neq j}E[\epsilon_{i}^{2}P_{ij}^{2}]$ is bounded by a constant times
  $\sum_{i\neq j}P_{ij}^{2}\leq p$, and by conditions (ii) and (iii),
  \begin{equation}\label{eq:maxXalpha}
    \max_{j}\abs{X_{j}'\tilde{\alpha}}\leq \max_{j}\norm{X_{j}}_{\infty}\norm{\tilde{\alpha}}_{1}
    =o_{p}(1).
  \end{equation}
  Third,
  \begin{equation*}
    \abs{T_{3}}
    \leq
    2\sum_{i\neq j}\epsilon_{j}^{2}\tilde{f}_{i}^{2}P_{ij}^{2}+
    2\sum_{i\neq j}\tilde{f}_{j}^{2}\tilde{f}_{i}^{2}P_{ij}^{2}=2T_{1}+2T_{2}.
  \end{equation*}
  Fourth,
  \begin{multline*}
    \abs{T_{4}}
    \leq \abs*{\sum_{i\neq j}\epsilon_{i}X_{i}'\tilde{\alpha}\epsilon_{j}^{2}P_{ij}^{2}}
    +2\abs*{\sum_{i\neq j}\epsilon_{i}r_{i}(\epsilon_{j}^{2}-E[\epsilon_{j}^{2}\mid X])P_{ij}^{2}}
    +\abs*{\sum_{i\neq j}\epsilon_{i}r_{i} E[\epsilon_{j}^{2}\mid X]P_{ij}^{2}}\\
    =
    o_{p}(p)+O_{p}(\norm{r}_{2}).
  \end{multline*}
  Here the conclusion follows since the first term is bounded by
  $\max_{i}\abs{X_{i}'\tilde{\alpha}} \sum_{i\neq
    j}\abs{\epsilon_{i}}\epsilon_{j}^{2}P_{ij}^{2}=o_{p}(p)$ by
  \cref{eq:maxXalpha} and the fact that
  $E\sum_{i\neq j}\abs{\epsilon_{i}}\epsilon_{j}^{2}P_{ij}^{2}\preceq
  \sum_{i, j}P_{ij}^{2}\leq p$, the second term is mean zero with variance bounded by
  a constant times $\sum_{i, j}r_{i}^{2}P_{ij}^{4}\leq \norm{r}_{2}^{2}$, and the
  third term is mean zero with variance bounded by a constant times
  $\sum_{i}r_{i}^{2}P_{ii}^{2}\leq \norm{r}_{2}^{2}$. Finally,
  \begin{equation*}
    T_{5}=
    2\sum_{i< j}\epsilon_{i}\epsilon_{j}r_{j}r_{i}P_{ij}^{2}
    +
    2\sum_{i\neq j}\epsilon_{i}r_{i}\epsilon_{j}X_{j}'\tilde{\alpha}
    P_{ij}^{2}
    + \sum_{i\neq j}\epsilon_{i}\epsilon_{j}(X_{j}'\tilde{\alpha})
    (X_{i}'\tilde{\alpha})P_{ij}^{2}.
  \end{equation*}
  The first term is mean zero with variance bounded by a constant times
  $\sum_{i< j}r_{j}^{2}r_{i}^{2}P_{ij}^{4}\leq
  \norm{r}_{2}^{4}$. The second term is bounded by a constant times
  $\max_{j}\abs{X_{j}'\tilde{\alpha}} \cdot\sum_{i\neq
    j}r_{i}^{2}\abs{\epsilon_{j}} P_{ij}^{2} +
  \max_{j}\abs{X_{j}'\tilde{\alpha}}\cdot \sum_{i\neq j}\epsilon_{i}^{2}\abs{\epsilon_{j}}
  P_{ij}^{2}=o_{p}(\norm{r}_{2}^{2}+p)$, and the last term is
  bounded by
  $\max_{j}\abs{X_{j}'\tilde{\alpha}}^{2}\cdot \sum_{i\neq
    j}\abs{\epsilon_{i}\epsilon_{j}}P_{ij}^{2}=o_{p}(p)$. Hence,
  $T_{5}=O_{p}(\norm{r}^{2})+o_{p}(p)$, concluding the proof.
\end{proof}

\section{Finite-sample size properties of sparsity tests}\label{sec:monte-carlo}

We now conduct a simple Monte Carlo simulation to assess the finite-sample
performance of the Hausman and residual tests considered in
\Cref{sec:testing-sparsity}.

The control matrix $W$ in these simulations corresponds to one of seven designs,
each corresponding to one of the seven specifications reported in the first two
columns of \Cref{tab:t1}. The treatment and outcome vectors $D$ and $Y$ are
drawn from independent normal distributions with zero means and homoskedastic
variances $\sigma^{2}_{Y}$ and $\sigma^{2}_{D}$. For each control matrix, we
consider two values for these variances: either the variance of the outcome and
treatment, or else the variances of OLS residuals in each empirical
specification. The first method yields standard deviation values
$(\sigma_{Y}, \sigma_{D})$ given by
$\{(0.082, 0.083), (0.052, 0.095), (0.264, 0.083)\}$ in the three BCH
specifications, $(14.2, 2.1)$ in the \textcite{ferrara22} application, and
$\{(37.5, 1.1), \allowbreak (49.7 1.0), (50.0 0.8)\}$ in the three
\textcite{enke20} specifications. The second method gives standard deviation
values $\{(0.045, 0.004), (0.030, 0.009), (0.197, 0.004)\}$ in the three BCH
specifications, $(3.1, 0.7)$ in the \textcite{ferrara22} application, and
$\{(27.2, 0.5), (29.8, 0.6), (31.4, 0.5)\}$ in the three \textcite{enke20}
specifications.

In other words, each design corresponds exactly to the empirical specifications
we studied in \Cref{sec:empir-tests-spars}, except we replace the outcome and
treatment with mean zero Gaussian variables. The parameter values
$\beta, \gamma$ and $\delta$ in the outcome and propensity score regressions in
\cref{eq:outcome,eq:p_score} therefore all equal $0$, so that the sparsity
assumption holds trivially. For each design, we apply the same three tests as in
\Cref{tab:t3}, with the same implementation. In particular, the tests allow for
heteroskedasticity in the \textcite{enke20} specification, and for clustering in
the other two specifications. To benchmark the performance of the tests, we also
compare the test to an oracle which replaces the \ac{SBE} estimates with a
constrained \ac{OLS} estimate that sets all penalized coefficients to
zero.\footnote{The unpenalized controls comprise 12 year fixed effects in BCH,
  county fixed effects \textcite{ferrara22}, and the intercept in
  \textcite{enke20}.}

\begin{table}
  \renewcommand*{\arraystretch}{1.2}
  \begin{threeparttable}
    \caption{Empirical null rejection rates (in percentages) of sparsity tests
      with nominal level 5\% in simulations.}\label{tab:simulation}
  \begin{tabular*}{0.95\linewidth}{@{\extracolsep{\fill}}ll@{} S@{}S@{}S@{}S@{}}
    & & \multicolumn{2}{c}{Large variance} & \multicolumn{2}{c}{Small variance}\\
    \cmidrule(rl){3-4}\cmidrule(rl){5-6}
    && \multicolumn{1}{c}{\ac{SBE}} & \multicolumn{1}{c}{Oracle}
                                           & \multicolumn{1}{c}{\ac{SBE}} & \multicolumn{1}{c}{Oracle}\\
    Outcome & Test & \multicolumn{1}{c}{(1)} & \multicolumn{1}{c}{(2)}
            & \multicolumn{1}{c}{(3)} & \multicolumn{1}{c}{(4)} \\
    \midrule
      \csname @@input\endcsname ./table4.tex
    \bottomrule
  \end{tabular*}
  \begin{tablenotes}
  \item{}\footnotesize\emph{Notes}: H\@: Hausman, OR\@: residual test based on
    outcome regression, PR\@: residual test based on propensity score
    regression. Oracle corresponds to a constrained \ac{OLS} estimator that sets
    all penalized coefficients to zero. The control matrix $W$ corresponds to
    the control matrix in Cols.~1--2 of \Cref{tab:t1}. In Cols.~1--2, the
    outcome and treatment variance correspond to the variance of the outcome and
    treatment in the corresponding specification in Cols.~1--2 of \Cref{tab:t1}.
    In Cols.~3--4, the outcome and treatment variances correspond to variances of
    OLS residuals in these specifications. 10,000 simulation draws.
  \end{tablenotes}
\end{threeparttable}
\end{table}

We report the empirical null rejection rates of the sparsity assumption for each
test in \Cref{tab:simulation}. The rejection rates do not exceed the nominal 5\%
level by more than one percentage point, and do not deviate from the oracle
rejection rates by more than two percentage points in any specification. Both
the feasible and the oracle version of the residual test is conservative in
several specifications: this appears to be a consequence of the fact that the
mean of the infeasible test statistic $\mathcal{F}$ is given by
$E[\sum_{i}\epsilon_{i}^{2}P_{ii}-\sum_{i}\epsilon_{i}^{2}P_{\mathcal{S}^{*},
  ii}]$, but the recentering in \cref{eq:infeasible} only accounts for the first
term, since the second term is asymptotically negligible. This may render the
test conservative in finite samples when the true sparsity level is not zero, or
when unpenalized covariates (that enter estimate of the sparsity index
$\mathcal{S}^{*}$) are present.

\section{Literature review details}\label{sec:deta-liter-revi}

We ordered all empirical papers that cite one of the 8 \ac{SBE} methodological
papers cited in the introduction
\parencite{bch14,JaMo14,vbrd14,ZhZh14,aiw18,bcfh17,ccddhnr18,farrell15} by the
number of citations on Google Scholar as of July 4, 2024. Our sample consists of
50 papers with the highest number of citations. Of these, 3 papers use lasso or
post-lasso rather than \acp{SBE}, and we drop these from the sample. The
remaining 47 papers all use variants of the \textcite{bch14}
post-double-selection procedure.\footnote{A few papers apply this approach in
  the context of non-linear regressions (e.g.\ logit regressions), or
  instrumental variables regressions. As our discussion about the specification
  of the control matrix $W$ is also applicable in these contexts, we keep such
  papers in the sample.}

For the \LRnRep\ papers with available replication code, we have replicated all
\LRspecRep\ regression specifications that employ \acp{SBE}. For these papers,
the number of controls $p$ reported in \Cref{tab:t_lit1} correspond to the
dimension of the control matrix after dropping columns with no variation and
repeated columns. Additionally, we could determine the ratio $p/n$ in
\LRnDiff\ of the papers without replication code either because they
explicitly mentioned the value of $p$, or else provided a clear enough
description of the control matrix construction. Columns (1) to (3) in
\Cref{tab:t_lit1} report the distribution of this ratio in this sample of
\LRnAll\ papers. Each specification is weighted so that each paper receives
an equal weight: if, say, a paper features 10 specifications that employ
\acp{SBE}, each receives weight $1/10$.

For coding whether the control matrix basis is specified in the description in
the text, as reported in col~(4) of \Cref{tab:t_lit1}, we assumed that, unless
otherwise explicitly stated, variables were not normalized prior to taking
interactions and powers or other non-linear transformations; if this was the
only normalization issue in the specification, we coded it as $1$. Similarly, we
assumed that specifications featuring interacting with a binary variable kept
one interaction and the main effect (even though keeping both interactions and
dropping the main effect also leads to the same column space); we coded such
specifications as $1$. If the paper did not explicitly list the full set of
control variables, or the set of variables contained categorical variables and
the paper did not make the base category explicit, this was coded as $0$.

\end{appendices}

\phantomsection\addcontentsline{toc}{section}{References}
\printbibliography

\end{document}